\renewcommand{\mkbegdispquote}[2]{\itshape}
\newcommand{\cA}{\mathcal{A}}
\newcommand{\cB}{\mathcal{B}}
\newcommand{\E}{\mathbb{E}}
\newcommand{\R}{\mathbb{R}}
\newcommand{\p}{\mathbb{P}}
\newcommand{\cL}{{\mathcal L}}
\newcommand{\cF}{{\mathcal F}}
\newcommand{\cP}{{\mathcal P}}
\newcommand{\cT}{{\mathcal T}}
\newcommand{\id}{\text{id}}
\newcommand{\cV}{{\mathcal V}}
\newcommand{\cW}{{\mathcal W}}
\newcommand{\cX}{{\mathcal X}}
\newcommand{\cY}{{\mathcal Y}}
\newcommand{\cG}{{\mathcal G}}
\newcommand{\cR}{{\mathcal R}}
\newcommand{\cM}{{\mathcal M}}
\newtheorem{theorem}{Theorem}
\newtheorem{assumption}[theorem]{Assumption}
\newtheorem{corollary}[theorem]{Corollary}
\newtheorem{definition}[theorem]{Definition}
\newtheorem{lemma}[theorem]{Lemma}
\theoremstyle{definition}
\numberwithin{equation}{section}
\numberwithin{theorem}{section}
\begin{document}

\title{Distributionally robust risk evaluation with a causality constraint and structural information}

\author{Bingyan Han \thanks{Thrust of Financial Technology, The Hong Kong University of Science and Technology (Guangzhou), Email: bingyanhan@hkust-gz.edu.cn. Bingyan Han is partially supported by The Hong Kong University of Science and Technology (Guangzhou) Start-up Fund G0101000197, the Guangzhou-HKUST(GZ) Joint Funding Program (No. 2024A03J0630), and the National Natural Science Foundation of China (Grant No. 12401621).}
}

\date{May 26, 2025}
\maketitle

\begin{abstract}
	This work studies the distributionally robust evaluation of expected values over temporal data. A set of alternative measures is characterized by the causal optimal transport. We prove the strong duality and recast the causality constraint as minimization over an infinite-dimensional test function space. We approximate test functions by neural networks and prove the sample complexity with Rademacher complexity. An example is given to validate the feasibility of technical assumptions. Moreover, when structural information is available to further restrict the ambiguity set, we prove the dual formulation and provide efficient optimization methods. Our framework outperforms the classic counterparts in the distributionally robust portfolio selection problem. The connection with the naive strategy is also investigated numerically.
	\\[2ex] 
	\noindent{\textbf {Keywords}: Risk management, distributional robustness, causal optimal transport, portfolio selection, minimax optimization, Rademacher complexity.}
	\\[2ex] 
\end{abstract}

\section{Introduction}
The choice of the underlying probability measure is crucial in measuring risk or other objectives, but it presents a challenge for decision-makers (DMs) to find an expressive yet tractable reference measure. One simple option is to use the empirical measure on a sample dataset, but this choice is prone to model misspecification due to small sample sizes or blurred observations. Additionally, time-series data can pose further difficulties for modeling, creating another source of misspecification. This paper aims to provide a robust framework for risk evaluation with temporal data.

Since \cite{knight1921risk} clarified the subtle difference between risk and model uncertainty (misspecification), several methodologies have been developed to incorporate robustness. Typically, the objective is evaluated over a set of alternative plausible measures instead of a single measure. \cite{hansen2001robust} pioneered the robust control paradigm when alternative measures are equivalent to the reference measure, and relative entropy quantifies the difference between two measures. Along this direction, \cite{uppal2003model,maenhout2004robust,ben2010soft,han2021robust} investigated robust optimization in portfolio selection and asset pricing problems; see \cite{hansen2014nobel} for a review on model uncertainty. One limitation of relative entropy, however, is that it cannot compare non-equivalent measures when ambiguous volatility appears. Nonlinear expectations \citep{peng2010nonlinear} and McKean-Vlasov dynamic programming \citep{ismail2019robust} are two approaches to tackle the volatility uncertainty. The third paradigm utilizes Wasserstein distance and optimal transport (OT) \citep{villani2009optimal}. Wasserstein distance is the minimum cost incurred by an optimal coupling, known as an OT plan, that can reshape the reference measure to the alternative measure, possibly with different supports.

 % Review and advantages of OT methods.
OT enjoys intuitive geometric interpretation and has found interdisciplinary applications in mathematics, machine learning, statistics, and economics; see \cite{villani2009optimal,santambrogio2015optimal,galichon2016optimal,peyre2019computational} for a book-level introduction. One can embed flexible structures into the cost function of OT. Besides the modeling flexibility, OT also demonstrates analytical tractability and can be solved in explicit forms or with efficient algorithms. OT has been applied in model uncertainty, with early work by \cite{pflug2007ambiguity} defining the ambiguity set with the Wasserstein distance of order one. \cite{gao2016dist} considered distributionally robust stochastic optimization (DRSO) with ambiguity sets characterized by the Wasserstein distance of order $p$.  \cite{mohajerin2018data} proved that some DRSO problems have a finite convex program formulation. \cite{blanchet2019quantifying} allowed general lower semicontinuous cost functions and proved strong duality results with general Polish spaces as domains. \cite{guo2019mot,zhou2021distri} and references therein investigated martingale optimal transport with problems from option pricing and hedging. Applications in risk management and portfolio selection are considered by \cite{bartl2020computational,eckstein2020robust,blanchet2021dist}.

 % Limitations and motivate COT here.
However, to properly account for the chronological structure in temporal data, Wasserstein distance needs to be suitably formulated. Several researchers have proposed adapted versions of Wasserstein distance for this purpose. \cite{lassalle2013causal} introduced a causality constraint on the transport plans between discrete stochastic processes, termed as causal optimal transport (COT). \cite{pflug2012distance} defined a notion of {\it nested distance}, which can be roughly regarded as bicausal. A recent work by \cite{backhoff2020equal} showed that these definitions and many others define the same topology in finite discrete time. Roughly speaking, given the past of a process $X$, the past of another process $Y$ should be independent of the future of $X$ under the transport plan measure. COT is an emerging area that has found applications in mathematical finance \citep{backhoff2020adapted}, video modeling \citep{xu2020cot,xu2021quantized}, mean-field games \citep{acciaio2021cournot}, and stochastic optimization \citep{pflug2012distance,acciaio2020causal,bartl2022sensitivity}. In this paper, we study risk evaluation with temporal data and ambiguity sets characterized by COT. 
 
%Contributions and Findings.
Our contributions are twofold, encompassing both theoretical and practical aspects. First, we prove the strong duality theorem with lower semicontinuous costs. Compared with the OT paradigm in \cite{gao2016dist,blanchet2019quantifying}, the causality constraint adds an extra infinite-dimensional optimization problem. Motivated by the scalability and the universal approximation property of neural networks \citep{hornik1991,cybenko1989}, we use them to approximate the infinite-dimensional test function space. We further prove the sample complexity results with the classic Rademacher complexity theory, which can fully utilize the structure of the function class, especially the Lipschitz property of the activation function and the linear layers within the neural networks \cite[Theorem 12]{bartlett2002rademacher}. An explicit example of regularized one-layer neural networks is given to validate the feasibility of technical assumptions. 

To illustrate the fundamental difference between OT and COT, we present analytical examples before delving into general empirical experiments. In the extreme case, OT can entirely disregard the non-anticipativeness constraint, leading to distinct solutions compared with COT. On the other hand, when the cost and objective are separable, both OT and COT find the same worst-case scenarios. These scenarios are determined by the supremum of certain functions for each given sample, where the worst-case paths depend on the reference samples {\it pathwise and pointwise}. Notably, the temporal structure in the worst-case scenarios is identical to that of the reference samples and absorbs any potential noise in the empirical distributions. These observations suggest that OT and COT may include unrealistic alternative scenarios and be overly conservative.

To resolve this drawback, we propose a new framework, called Structural Causal Optimal Transport (SCOT), which allows the structural beliefs of the DM to be incorporated into alternative measures. The structural information helps to find worst-case scenarios that differ from the reference data and avoid conservative paths. This approach rules out some unrealistic scenarios in the Wasserstein ball. The duality theorem links the SCOT primal problem with calculations of the Wasserstein distance under modified costs. We also provide a finite sample guarantee for both SCOT and COT. 

As an application, we revisit the comparison between mean-variance portfolio selection and the naive strategy \citep{demiguel2009}. Due to deviations between the reference measure $\hat{\mu}$ and the true measure $\mu$, we enhance distributional robustness in mean-variance portfolios using OT and SCOT methods. Across nine out-of-sample cases studied, SCOT strategies consistently achieve higher Sharpe ratios than OT strategies in eight cases. Furthermore, SCOT strategies outperform both non-robust and naive approaches, especially under smaller radii. Extending conclusions from \cite{pflug2012naive} to a multi-period framework, our numerical results indicate that both OT and SCOT strategies converge towards the naive strategy under high model uncertainty, particularly when the investor is less risk averse.

The remainder of this paper is structured as follows. Section \ref{sec:form} introduces the background and formulates the primal problem of interest. Section \ref{sec:dual} proves the strong duality of COT formulation and motivates the SCOT framework. Section \ref{sec:complex} proves the sample complexity results and describes the optimization algorithms used. Section \ref{sec:num} presents the numerical analysis and compares different methodologies. Section \ref{sec:conclude} concludes the paper with further directions. Technical proofs are provided in the Appendix. The code used in this work is publicly available at \url{https://github.com/hanbingyan/SCOT_examples}.

\section{Problem formulation}\label{sec:form}

Let integer $T$ be the finite number of periods. Denote $[T] := \{1, 2, ..., T-1, T\}$. For each $t \in [T]$, suppose $\cX_t$ is a closed (but not necessarily bounded) subset of $\R^d$, under the Euclidean topology $\cT^{\cX_t}$. $\cX_t$ is interpreted as the range of the process at time $t$.
$\cX_{1:T} := \cX_1 \times \ldots \times \cX_T$ is a closed subset of $\R^{d \times T}$ and equipped with the product topology $\cT^\cX := \prod^T_{t=1} \cT^{\cX_t}$. We also use a notation $\cX := \cX_{1:T}$ for convenience. Let $C_b(\cX_{1:t})$ be the set of all continuous and bounded functions on $\cX_{1:t}$ and $\cP(\cX)$ be the set of all Borel probability measures on $\cX$. 

A DM wants to evaluate $\int_\cX f(x) \mu (dx)$ for a true but unknown probability measure $\mu \in \cP(\cX)$. The function $f$ is given with technical assumptions specified later. Following the convention in \cite{blanchet2019quantifying,gao2016dist}, we refer to the computation of $\int_\cX f(x) \mu (dx)$ as risk evaluation. Typically, the DM can only observe data from $\mu$ and constructs some reference measure $\hat{\mu}$. A typical example is the empirical measure. Then the DM calculates
 \begin{equation}\label{eq:risk}
 	\int_\cX f(x) \hat{\mu} (dx)
 \end{equation}
to approximate $\int_\cX f(x) \mu (dx)$.
 
An essential difficulty in risk evaluation is that the reference measure $\hat{\mu}$ can be misspecified. Commonly, it is difficult or even impossible to obtain a sufficiently large dataset. Besides, observations may exhibit a high level of noise, especially in financial data. To incorporate robustness into risk evaluation, an approach is to consider some close but different measures as alternatives to $\hat{\mu}$. A methodology is distributionally robust risk evaluation with an optimal transport framework. We review the paradigm as follows.

Denote $\cY := \cY_1 \times ... \times \cY_T$ as another closed subset of $\R^{d \times T}$, equipped with the Euclidean topology $\cT^\cY := \prod^T_{t=1} \cT^{\cY_t}$. $\cP(\cY)$ is the set of all Borel probability measures on $\cY$. Consider $\nu \in \cP(\cY)$ as an alternative measure. A coupling $\pi \in \cP(\cX \times \cY)$ is a Borel probability measure that admits $\hat{\mu}$ and $\nu$ as its marginals on $\cX$ and $\cY$, respectively. Denote $\Pi(\hat{\mu}, \nu)$ as the set of all the couplings. $\pi(x, y)$ is usually known as a transport plan between $\hat{\mu}$ and $\nu$. Heuristically, one can interpret $\pi(x, y)$ as the probability mass moved from $x$ to $y$. 

Suppose transporting one unit of mass from $x$ to $y$ incurs a cost of $c(x,y)$. The classic OT problem is formulated as 
\begin{equation}\label{eq:OT-dist}
	\cW(\hat{\mu}, \nu) := \inf_{ \pi \in \Pi(\hat{\mu}, \nu)} \int_{\cX \times \cY} c(x, y) \pi(dx, dy).
\end{equation}
Intuitively, $\cW(\hat{\mu}, \nu)$ quantifies how difficult it is to reshape the measure $\hat{\mu}$ to $\nu$. \cite{blanchet2019quantifying} consider the risk evaluation over alternative measures $\nu$ in the Wasserstein ball with a radius of $\varepsilon > 0$:
\begin{equation}\label{eq:OT}
	\sup_\nu \int_\cY f(y) \nu(dy), \text{ subject to }  \cW (\hat{\mu}, \nu) \leq \varepsilon.
\end{equation}
We refer to the problem \eqref{eq:OT} as the OT risk evaluation or simply the OT method (formulation). 

In this work, we consider the temporal data as $x = (x_1, ..., x_t, ..., x_T)$, which is common in statistics and finance. Then a natural requirement of the transport plan $\pi(x, y)$ is the non-anticipative condition. Indeed, if the past of $x$ is given, then the past of $y$ should be independent of the future of $x$ under the measure $\pi$. Mathematically, a transport plan $\pi$ should satisfy
\begin{equation}\label{eq:causal}
	\pi (dy_t | dx_{1:T}) = \pi (dy_t | dx_{1:t}), \quad  t = 1, ..., T-1, \quad \pi\text{-a.s.}
\end{equation}
The property \eqref{eq:causal} is known as the causality condition, and the transport plan satisfying \eqref{eq:causal} is called {\it causal} by \cite{lassalle2013causal}. Denote $\Pi_c(\hat{\mu}, \nu)$ as the set of all causal transport plans between $\hat{\mu}$ and $\nu$. 

In contrast to the OT formulation \eqref{eq:OT-dist}, the COT problem with temporal data restricts the minimization of the cost only over $\Pi_c(\hat{\mu}, \nu)$:
\begin{equation}\label{eq:causal-dist}
	\cW_c (\hat{\mu}, \nu) := \inf_{ \pi \in \Pi_c(\hat{\mu}, \nu)} \int_{\cX \times \cY} c(x, y) \pi (dx, dy).
\end{equation}
Denote the causal Wasserstein ball as
\begin{equation}\label{eq:causal_ball}
	\cB_\varepsilon := \{ \nu: \cW_c (\hat{\mu}, \nu) \leq \varepsilon \}.
\end{equation}
The distributionally robust risk evaluation under COT is 
\begin{equation}\label{eq:primal}
	J(\varepsilon; \hat{\mu}) := \sup_{\nu \in \cB_\varepsilon} \int_\cY f(y) \nu(dy).
\end{equation}
Since the causality condition may rule out some transport plans, one can expect that $\cW_c (\hat{\mu}, \nu) \geq \cW (\hat{\mu}, \nu)$. The causal Wasserstein ball may contain fewer alternative measures and discard them due to the causality. We call \eqref{eq:primal} the primal problem of the COT risk evaluation. The primal value may be less than the counterpart of the OT method.

\section{Duality}\label{sec:dual}

A key observation of the causality condition is that it can be formulated as a linear constraint; see \citet[Proposition 2.4]{backhoff2017causal}. In this section, we prove the strong duality theorem for \eqref{eq:primal}.

\subsection{Strong duality for COT risk evaluation}

\begin{assumption}\label{assum:lsc}
	The objective $f: \cX \rightarrow \R$ is upper semicontinuous (u.s.c.). The cost $c(x,y): \cX \times \cY \rightarrow [0, \infty)$ is lower semicontinuous (l.s.c.) and non-negative.
\end{assumption}
Compared with \citet[Assumption 1]{blanchet2019quantifying}, we drop the requirement that $c(x, x) = 0$ but do not allow $c(x, y)$ to be infinite. It becomes easier to design an analytical example later in Section \ref{sec:motivate} to present differences in several formulations. 

Under Assumption \ref{assum:lsc}, \citet[Theorem 4.3]{acciaio2021cournot} extended \citet[Theorem 2.6]{backhoff2017causal} and proved that the infimum of COT in \eqref{eq:causal-dist} is attained. One can rewrite the primal problem \eqref{eq:primal} as
\begin{equation*}
	\sup_{\pi} \int_{\cX \times \cY} f(y) \pi(dx, dy), \text{ subject to } \pi \in \cup_{\nu} \Pi_c(\hat{\mu}, \nu), \text{ and } \int_{\cX \times \cY} c(x,y) \pi(dx, dy) \leq \varepsilon.
\end{equation*}

As preparation for proving the strong duality, we introduce a test function space as 
\begin{align*}
	\Gamma := \Big\{ & \gamma: \cX \times \cY \rightarrow \R \, \Big| \, \gamma(x, y) =  \sum^L_{l = 1} \sum^{T-1}_{t=1} h_{l, t}(y_{1:t}) \Big[ g_{l, t}(x_{1:T}) \\
	& - \int_{\cX_{t+1:T}} g_{l, t}( x_{1:t}, x_{t+1:T}) \hat{\mu} (d x_{t+1:T} | x_{1:t}) \Big], \text{ for } h_{l, t} \in C_b(\cY_{1:t}), g_{l, t} \in C_b(\cX_{1:T}), \\
	& \text{ and some positive integer } L \in \mathbb{N} \Big\}. 
\end{align*}

Consider the dual problem
	\begin{equation}\label{eq:minimax}
		D(\varepsilon; \hat{\mu} ) := \inf_{ \lambda \geq 0, \; \gamma \in \Gamma} \quad  \lambda \varepsilon + \int_{\cX} F(x; \lambda, \gamma) \hat{\mu}(dx),
	\end{equation}
	where
	\begin{align*}
		F(x; \lambda, \gamma) := \sup_{y \in \cY}  \big\{& f(y) - \lambda c(x, y) + \gamma(x, y) \big\}.
	\end{align*}

As the first main result, Theorem \ref{thm:cap-dual} proves the strong duality for the primal problem \eqref{eq:primal}, with unbounded domains and a bounded objective $f$. The idea relies on \citet[Theorem 2.2 and Proposition 2.3]{bartl2019robust}, which gives a general representation of monotone convex functionals.

\begin{theorem}\label{thm:cap-dual}
	Suppose $\cX$ and $\cY$ are closed and Assumption \ref{assum:lsc} holds. Furthermore, 
	\begin{enumerate}
		\item $f$ is bounded;
		\item $\int_\cX c(x, x) \hat{\mu} (dx) \leq \varepsilon$;
		\item for every $k \geq 0$, there exists $r > 0$, such that $c(x, y) \geq k$ whenever $|x - y| \geq r$. 
	\end{enumerate}
	Then the strong duality holds as $J(\varepsilon; \hat{\mu}) = D(\varepsilon; \hat{\mu})$. Moreover, there exists a primal optimizer $\nu^* \in \cB_\varepsilon$ for $J(\varepsilon; \hat{\mu})$.
\end{theorem}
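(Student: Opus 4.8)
The plan is to recast the causality condition as a linear constraint, establish weak duality by Lagrangian relaxation, and obtain strong duality by combining an optimal‑transport duality with a minimax exchange. \emph{Step 1 (linearization).} I would invoke \citet[Proposition 2.4]{backhoff2017causal}: a coupling $\pi$ with $\cX$-marginal $\hat\mu$ is causal if and only if $\int\gamma\,d\pi=0$ for every $\gamma\in\Gamma$, the test space $\Gamma$ being exactly this linear annihilator. Together with the attainment of the infimum in \eqref{eq:causal-dist} \citep[Theorem 4.3]{acciaio2021cournot}, one checks that $\nu\in\cB_\varepsilon$ precisely when $\nu$ is the $\cY$-marginal of some $\pi\in\cK_c:=\cK\cap\{\pi:\int\gamma\,d\pi=0\ \forall\gamma\in\Gamma\}$, where $\cK:=\{\pi\in\cP(\cX\times\cY):\pi\text{ has }\cX\text{-marginal }\hat\mu,\ \int c\,d\pi\le\varepsilon\}$; hence $J(\varepsilon;\hat\mu)=\sup_{\pi\in\cK_c}\int f(y)\,\pi(dx,dy)$. \emph{Step 2 (weak duality).} For $\pi\in\cK_c$, $\lambda\ge0$ and $\gamma\in\Gamma$, the two constraints give $\int f(y)\,d\pi\le\lambda\varepsilon+\int\bigl(f(y)-\lambda c(x,y)+\gamma(x,y)\bigr)\,d\pi\le\lambda\varepsilon+\int_\cX F(x;\lambda,\gamma)\,\hat\mu(dx)$, the last step disintegrating $\pi$ along its fixed $\cX$-marginal and bounding the conditional integral by the pointwise supremum in $y$; boundedness of $f$ makes $F$ finite and Assumption~\ref{assum:lsc} makes $x\mapsto F(x;\lambda,\gamma)$ measurable. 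Minimizing over $(\lambda,\gamma)$ yields $J(\varepsilon;\hat\mu)\le D(\varepsilon;\hat\mu)$.

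\emph{Step 3 (optimal-transport duality).} For the reverse inequality I would introduce the auxiliary functional $\Phi(g):=\sup_{\pi\in\cK}\int g\,d\pi$, defined on bounded upper semicontinuous functions $g:\cX\times\cY\to\R$. This $\Phi$ is convex, monotone, cash-additive, positively homogeneous, and --- being a supremum of evaluation maps --- continuous from below, while condition~2 ensures $\cK\ne\emptyset$. Applying the representation of monotone convex functionals in \citet[Theorem 2.2 and Proposition 2.3]{bartl2019robust} and identifying the resulting penalty via Kantorovich duality gives the optimal-transport duality $\Phi(g)=\inf_{\lambda\ge0}\bigl\{\lambda\varepsilon+\int_\cX\sup_{y\in\cY}\bigl(g(x,y)-\lambda c(x,y)\bigr)\,\hat\mu(dx)\bigr\}$ on the unbounded domains $\cX,\cY$.

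\emph{Step 4 (inserting the test functions and the optimizer).} The set $\cK$ is convex and, crucially, weakly compact --- this is exactly where condition~3 enters, since coercivity of $c$ in $|x-y|$ keeps the $\cY$-marginal of every $\pi$ with $\int c\,d\pi\le\varepsilon$ tight, $\hat\mu$ being tight --- and $\Gamma$ is a linear (hence convex) space, while $(\pi,\gamma)\mapsto\int(f(y)+\gamma(x,y))\,d\pi$ is affine in each argument and upper semicontinuous in $\pi$. Sion's minimax theorem then gives $\sup_{\pi\in\cK}\inf_{\gamma\in\Gamma}\int(f+\gamma)\,d\pi=\inf_{\gamma\in\Gamma}\sup_{\pi\in\cK}\int(f+\gamma)\,d\pi$; the left-hand side equals $\sup_{\pi\in\cK_c}\int f(y)\,d\pi=J(\varepsilon;\hat\mu)$ because, $\Gamma$ being a linear space, $\inf_{\gamma\in\Gamma}\int\gamma\,d\pi$ is $0$ on $\cK_c$ and $-\infty$ elsewhere, and the right-hand side equals $\inf_{\gamma\in\Gamma}\Phi(f+\gamma)$. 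Combining with Step~3, $J(\varepsilon;\hat\mu)=\inf_{\gamma\in\Gamma}\inf_{\lambda\ge0}\bigl\{\lambda\varepsilon+\int_\cX\sup_{y}\bigl(f(y)+\gamma(x,y)-\lambda c(x,y)\bigr)\,\hat\mu(dx)\bigr\}=D(\varepsilon;\hat\mu)$. For the primal optimizer: $\cK_c$ is weakly compact (the compact $\cK$ intersected with weakly closed linear constraints) and $\pi\mapsto\int f(y)\,d\pi$ is upper semicontinuous, so a maximizer $\pi^{*}\in\cK_c$ exists, and its $\cY$-marginal $\nu^{*}$ lies in $\cB_\varepsilon$ and attains $J(\varepsilon;\hat\mu)$.

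I expect the main obstacle to be the weak compactness of $\cK$ (hence of $\cK_c$ and $\cB_\varepsilon$), on which both the Sion exchange and the existence of $\nu^{*}$ hinge: this must be extracted from the coercivity assumption~3 by converting it into a uniform tightness estimate for the feasible couplings, and is the one place the three extra hypotheses are genuinely used (condition~1 for finiteness, condition~2 for nonemptiness, condition~3 for compactness). A lesser technical point is establishing the optimal-transport duality for $\Phi$ at merely upper semicontinuous integrands $g=f+\gamma$ rather than continuous ones, which is handled by approximation from above and is by now routine.
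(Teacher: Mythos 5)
Your route is genuinely different from the paper's. The paper applies the Bartl--Cheridito--Kupper representation to the dual functional $D(\cdot)$ itself: it verifies monotonicity, convexity, the normalization $D(0)=0$ (this is where condition~2 is used), continuity along decreasing $C_b$-sequences (this is where condition~3 enters, via restricting the inner supremum to a ball $B_r(x)$ and a Dini argument), and then identifies the conjugate $D^*(\nu)$ by a per-$\nu$ Sion exchange over the compact set $\Pi(\hat{\mu},\nu)$ combined with Kantorovich duality; the maximum in the representation simultaneously yields $J=D$ and the primal optimizer. You instead perform a single Sion exchange over the global budget set $\cK$ (whose uniform tightness you correctly extract from condition~3 plus tightness of $\hat{\mu}$) and delegate the rest to an OT-type duality for $\Phi$. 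Your Steps 1--2 and the linear-space trick $\inf_{\gamma\in\Gamma}\int\gamma\,d\pi\in\{0,-\infty\}$ are fine, and the global-swap architecture is a legitimate alternative; but two points are currently gaps rather than routine details.

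First, your Sion step requires $\pi\mapsto\int\gamma\,d\pi$ to be weakly (semi)continuous on $\cK$, and the existence argument for $\nu^*$ requires $\cK_c$ to be weakly closed. Elements of $\Gamma$ contain the conditional-kernel term $\int g_{l,t}(x_{1:t},x_{t+1:T})\,\hat{\mu}(dx_{t+1:T}|x_{1:t})$, which in general is only Borel measurable in $x_{1:t}$, not continuous, so neither property holds as stated. The paper handles this by passing to a finer Polish topology on $\cX$ with the same Borel sets under which the kernel map is continuous (\citet[Theorem 13.11]{kechris2012classical}, as in \citet[Lemma 6.1]{acciaio2021cournot}); if you adopt this device you must also recheck that $\cK$ stays weakly compact under the refined topology (it does: $\hat{\mu}$ remains tight on the refined Polish space, $c$ and $|x-y|$ remain l.s.c./continuous, so your tightness set survives), but without some such step the minimax exchange is not justified as written.

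Second, Step 3 is not ``by now routine'': the duality $\Phi(g)=\inf_{\lambda\ge 0}\{\lambda\varepsilon+\int_\cX\sup_y(g(x,y)-\lambda c(x,y))\,\hat{\mu}(dx)\}$ for bounded u.s.c.\ $g$ on unbounded closed domains is essentially the same amount of work as the paper's Properties 1--4. The Bartl et al.\ representation needs continuity of $\Phi$ along \emph{decreasing} $C_b$-sequences (continuity from above at $0$), not the ``continuity from below'' you list; verifying it is precisely where conditions 2 and 3 are consumed (normalization, and coercivity to confine the supremum in $y$ before a Dini/uniform-tightness estimate -- your weak compactness of $\cK$ does give this, but you never make the connection). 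You also need the passage from continuous to u.s.c.\ integrands and a measurable-selection argument for $x\mapsto\sup_y(\cdot)$; and note that the Blanchet--Murthy theorem is stated for objectives depending on $y$ alone, so the joint-dependence version you invoke must itself be proved (which, via Bartl et al.\ plus Kantorovich, is in effect a rerun of the paper's own argument with $\Gamma=\{0\}$). All of these pieces are obtainable under hypotheses 1--3, but in the proposal they are missing or misstated rather than established.
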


	With Theorem \ref{thm:cap-dual}, Corollary \ref{cor:dual} demonstrates that the strong duality remains valid when $f$ is bounded from above, but domains are compact.
\begin{corollary}\label{cor:dual}
	Suppose
	\begin{enumerate}
		\item Assumption \ref{assum:lsc} holds;
		\item $\cX$ and $\cY$ are compact;
		\item $\int_\cX c(x, x) \hat{\mu} (dx) \leq \varepsilon$.
	\end{enumerate}
	Then the strong duality holds as $J(\varepsilon; \hat{\mu}) = D(\varepsilon; \hat{\mu})$. Moreover, there exists a primal optimizer $\nu^* \in \cB_\varepsilon$ for $J(\varepsilon; \hat{\mu})$.
\end{corollary}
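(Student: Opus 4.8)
The plan is to obtain Corollary~\ref{cor:dual} from Theorem~\ref{thm:cap-dual} by truncating $f$ from below, the point being that on compact domains the only hypothesis of Theorem~\ref{thm:cap-dual} not automatically available is the two-sided boundedness of $f$: under Assumption~\ref{assum:lsc}, $f$ is u.s.c.\ and hence bounded above on the compact set $\cX$, but possibly not below. First I would set, for $n \in \mathbb{N}$, $f_n := f \vee (-n)$; being the pointwise maximum of the u.s.c.\ function $f$ and a constant, $f_n$ is u.s.c.\ and satisfies $-n \le f_n \le \sup_\cX f < \infty$, so $(f_n, c)$ still obeys Assumption~\ref{assum:lsc} and $f_n$ is bounded. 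Condition~(2) of Theorem~\ref{thm:cap-dual}, $\int_\cX c(x,x)\,\hat\mu(dx) \le \varepsilon$, is precisely hypothesis~(3) of the corollary, and condition~(3) of the theorem holds vacuously on compact domains: since $|x-y|$ is bounded, say by $R$, on $\cX \times \cY$, for any $k \ge 0$ one may take $r := R+1$ and the requirement ``$c(x,y) \ge k$ whenever $|x-y| \ge r$'' is empty. Thus Theorem~\ref{thm:cap-dual} applies with $f_n$ in place of $f$: writing $J_n$ and $D_n$ for the primal and dual values with $f$ replaced by $f_n$ (the ball $\cB_\varepsilon$ is unchanged), we get $J_n = D_n$ and a primal optimizer $\nu_n^* \in \cB_\varepsilon$ for each $n$.

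Next I would record weak duality. For any causal coupling $\pi$ with first marginal $\hat\mu$ and any $\gamma \in \Gamma$, the tower property together with the causality relation~\eqref{eq:causal} gives $\int_{\cX \times \cY} \gamma\, d\pi = 0$ (the ``easy'' direction of the linear-constraint characterization of causality, cf.\ \citet[Proposition~2.4]{backhoff2017causal}); inserting this into the Lagrangian bound $\int f(y)\,\pi(dx,dy) \le \lambda\varepsilon + \int_\cX F(x;\lambda,\gamma)\,\hat\mu(dx)$ and optimizing over $\pi$, $\lambda \ge 0$, $\gamma \in \Gamma$ yields $J(\varepsilon;\hat\mu) \le D(\varepsilon;\hat\mu)$. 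Since $f \le f_n$ and $f_{n+1} \le f_n$, both $J_n$ and $D_n$ are nonincreasing in $n$ with $J \le J_n$ and $D \le D_n$. Hence $J(\varepsilon;\hat\mu) \le D(\varepsilon;\hat\mu) \le \inf_n D_n = \inf_n J_n$, and it remains only to show $\inf_n J_n \le J(\varepsilon;\hat\mu)$.

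For this I would run a compactness argument on the optimizers $\nu_n^*$. Since $\cY$ is compact, $\cP(\cY)$ is weakly compact, and $\cB_\varepsilon$ is weakly closed: if $\nu_k \to \nu$ weakly with $\cW_c(\hat\mu,\nu_k) \le \varepsilon$, take optimal causal couplings $\pi_k$ (which exist by \citet[Theorem~4.3]{acciaio2021cournot}), extract a weakly convergent subsequence $\pi_k \to \pi$, note that $\pi$ has marginals $\hat\mu$ and $\nu$ and is again causal (causality is stable under weak limits of couplings), and use weak lower semicontinuity of $\pi \mapsto \int c\, d\pi$ for nonnegative l.s.c.\ $c$ to get $\int c\,d\pi \le \varepsilon$, so $\nu \in \cB_\varepsilon$. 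Now pass to a subsequence along which $\nu_n^* \to \nu^\infty$ weakly; then $\nu^\infty \in \cB_\varepsilon$. For a fixed $m$ and all $n \ge m$ we have $J_n = \int f_n\, d\nu_n^* \le \int f_m\, d\nu_n^*$, and since $f_m$ is bounded and u.s.c.\ the functional $\nu \mapsto \int f_m\, d\nu$ is weakly u.s.c.\ (Portmanteau), so $\inf_n J_n = \lim_n J_n \le \int f_m\, d\nu^\infty$. Letting $m \to \infty$ and invoking dominated convergence ($f_m \downarrow f$, $f_m \le \sup_\cX f$) gives $\inf_n J_n \le \int f\, d\nu^\infty \le J(\varepsilon;\hat\mu)$. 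Combining the inequalities forces $J(\varepsilon;\hat\mu) = D(\varepsilon;\hat\mu)$, and $\nu^\infty$ is a primal optimizer (the argument, and the identity $J = D$, remain valid even in the degenerate case $\int f\, d\nu^\infty = -\infty$).

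The main obstacle will be the final limiting step rather than the reduction: one must justify that $\cB_\varepsilon$ is weakly closed, which hinges on the stability of the causality constraint under weak convergence of couplings together with lower semicontinuity of the transport cost, and one must carefully interchange the limit in $n$, the supremum over $\cB_\varepsilon$, and integration against the u.s.c.\ truncations $f_m$. The truncation from below is exactly what bridges the gap between ``$f$ bounded'' in Theorem~\ref{thm:cap-dual} and ``$f$ merely u.s.c.'' here; everything else in the hypotheses of that theorem comes for free once the domains are compact.
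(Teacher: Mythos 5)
Your proof is correct and follows essentially the same route as the paper's: apply Theorem \ref{thm:cap-dual} to bounded approximations decreasing pointwise to $f$, extract a weak limit of the primal optimizers through their optimal causal couplings, verify that the limit remains in $\cB_\varepsilon$ via the test-function characterization of causality together with lower semicontinuity of the cost, and pass to the limit in the objective by monotonicity. The only minor differences are that you truncate from below with $f \vee (-n)$ (so you invoke Theorem \ref{thm:cap-dual} for bounded u.s.c.\ objectives and use the Portmanteau upper-semicontinuity inequality in the limit) whereas the paper approximates $f$ from above by continuous bounded functions, and that you dispose of condition 3 of Theorem \ref{thm:cap-dual} as vacuous on compact domains whereas the paper notes it is only needed for the Dini argument --- both observations are valid.
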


As a side note, the minimization over $g_{l,t}$ can be equivalently formulated as over martingales by \citet[Proposition 2.4]{backhoff2017causal}. Let $\cF^\cX$ be the filtration generated by the coordinate process on $\cX$. The corresponding test function space is 
\begin{align*}
	\Gamma' := \Big\{ & \gamma' : \cX \times \cY \rightarrow \R \, \Big| \, \gamma'(x, y) =  \sum^L_{l = 1} \sum^{T-1}_{t=1} h_{l, t}(y_{1:t}) \Big[ M_{l, t+1}(x_{1:t+1}) - M_{l, t}(x_{1:t})  \Big], \text{ with } \\
	&  h_{l, t} \in C_b(\cY_{1:t}), M_{l, t} \in C_b(\cX_{1:t}), M_l \text{ an $\cF^\cX$-martingale, and some positive integer } L \in \mathbb{N} \Big\}. 
\end{align*}
\citet[Proposition 2.4]{backhoff2017causal} show that the martingale $M_l$ is constructed with conditional kernels $\hat{\mu}(dx_{t+1:T}|x_{1:t})$. $M_l$ can still be assumed continuous after refining the topology on $\cX$. Therefore, the proof of the following corollary is the same as Theorem \ref{thm:cap-dual} or Corollary \ref{cor:dual}, with $\Gamma$ replaced by $\Gamma'$, and thus is omitted.
\begin{corollary}\label{cor:dual-mart}
	Under assumptions of Theorem \ref{thm:cap-dual} or Corollary \ref{cor:dual}, $D(\varepsilon; \hat{\mu})$ has another representation as 
	\begin{equation}\label{eq:dual-mart} 
		D(\varepsilon; \hat{\mu} ) = \inf_{ \lambda \geq 0, \; \gamma' \in \Gamma'} \quad  \lambda \varepsilon + \int_{\cX} F(x; \lambda, \gamma') \hat{\mu}(dx),
	\end{equation}
	with
	\begin{align*}
		F(x; \lambda, \gamma') = \sup_{y \in \cY}  \big\{& f(y) - \lambda c(x, y) + \gamma'(x, y) \big\}.
	\end{align*}
\end{corollary}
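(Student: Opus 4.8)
Write $D_{\Gamma'}(\varepsilon;\hat\mu)$ for the right-hand side of \eqref{eq:dual-mart}; the goal is $D_{\Gamma'}(\varepsilon;\hat\mu)=D(\varepsilon;\hat\mu)$, and since Theorem \ref{thm:cap-dual} (resp.\ Corollary \ref{cor:dual}) already gives $D(\varepsilon;\hat\mu)=J(\varepsilon;\hat\mu)$ through the $\Gamma$-formulation \eqref{eq:minimax}, it is equivalent to prove $D_{\Gamma'}(\varepsilon;\hat\mu)=J(\varepsilon;\hat\mu)$; the primal optimizer then carries over from Theorem \ref{thm:cap-dual}/Corollary \ref{cor:dual} because $J$ and $\cB_\varepsilon$ are untouched. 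First I would dispatch the elementary, topology-free inequality $J\le D_{\Gamma'}$. Given $\nu\in\cB_\varepsilon$, fix a causal coupling $\pi\in\Pi_c(\hat\mu,\nu)$ with $\int c\,d\pi\le\varepsilon$ --- one exists because the COT infimum in \eqref{eq:causal-dist} is attained under Assumption \ref{assum:lsc} --- and, for arbitrary $\lambda\ge0$ and $\gamma'\in\Gamma'$, write $\int_\cY f\,d\nu=\int(f(y)-\lambda c(x,y)+\gamma'(x,y))\,\pi(dx,dy)+\lambda\int c\,d\pi-\int\gamma'\,d\pi$. Every summand $h_{l,t}(y_{1:t})[M_{l,t+1}(x_{1:t+1})-M_{l,t}(x_{1:t})]$ of $\gamma'$ has vanishing $\pi$-integral: the causality condition \eqref{eq:causal} gives $y_{1:t}\perp x_{t+1:T}\mid x_{1:t}$ under $\pi$, the $\cX$-marginal of $\pi$ is $\hat\mu$, and $M_l$ is an $\hat\mu$-martingale, so conditioning on $x_{1:t}$ collapses the cross term --- this is precisely \citet[Proposition 2.4]{backhoff2017causal}. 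As $c\ge0$ and $\gamma'$ is bounded, the decomposition yields $\int_\cY f\,d\nu\le\lambda\varepsilon+\int_\cX F(x;\lambda,\gamma')\,\hat\mu(dx)$, and taking $\sup_\nu$ and then $\inf_{\lambda,\gamma'}$ gives $J\le D_{\Gamma'}$.

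For the reverse inequality I would prove the inclusion $\Gamma\subseteq\Gamma'$, which forces $D_{\Gamma'}\le D(\varepsilon;\hat\mu)=J$, since enlarging the class of test functions can only lower the infimum. Given $g\in C_b(\cX_{1:T})$, let $M_r(x_{1:r}):=\int_{\cX_{r+1:T}}g(x_{1:r},x_{r+1:T})\,\hat\mu(dx_{r+1:T}\mid x_{1:r})$ for $r=1,\dots,T$, so that $M_T=g$ and $M$ is an $\cF^\cX$-martingale; the bracket in the definition of $\Gamma$ then telescopes, $g(x_{1:T})-M_t(x_{1:t})=\sum_{s=t}^{T-1}(M_{s+1}(x_{1:s+1})-M_s(x_{1:s}))$, and multiplying by $h_{l,t}(y_{1:t})$ --- which belongs to $C_b(\cY_{1:s})$ for every $s\ge t$ --- and summing over $l,t$ writes an arbitrary element of $\Gamma$ as an element of $\Gamma'$, provided each $M_r$ is continuous and bounded. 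Boundedness is automatic; continuity is the only genuine difficulty, and it is handled exactly as the text suggests: refine $\cT^\cX$ to a finer Polish topology with the same Borel $\sigma$-algebra under which the finitely many disintegration maps $x_{1:t}\mapsto\hat\mu(\cdot\mid x_{1:t})\in\cP(\cX_{t+1:T})$ are weakly continuous, which is possible since a Borel map into a second-countable space is continuous under some such refinement. This refinement leaves the Borel sets, $\cP(\cX)$, $\hat\mu$, $\cB_\varepsilon$ and $J(\varepsilon;\hat\mu)$ unchanged; and since I invoke Theorem \ref{thm:cap-dual}/Corollary \ref{cor:dual} only through the identity $D(\varepsilon;\hat\mu)=J$ already established in the original topology, together with the two displayed comparisons, I never re-run the duality argument in the refined topology and so never disturb its compactness or closedness hypotheses.

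The hard part is exactly this continuity point: the conditional kernels $\hat\mu(dx_{t+1:T}\mid x_{1:t})$ are a priori only Borel in the conditioning variable, so in the original topology the Doob martingales $M_r$ need not be continuous and the inclusion $\Gamma\subseteq\Gamma'$ may fail; the single-topology-refinement device is precisely what repairs this, and it suffices because once the finitely many disintegration maps are continuous, \emph{every} Doob martingale of a $C_b$-function is automatically continuous. I would close by noting that the correspondence is reversible --- a continuous martingale difference $M_{t+1}-M_t$ is itself the $\Gamma$-bracket of $g:=M_{t+1}$ by the tower property --- so in the refined topology one has $\Gamma=\Gamma'$, making the assertion that the proof ``is the same as Theorem \ref{thm:cap-dual}'' literal: the causality constraint set and the dual functional are identical, and the argument behind Theorem \ref{thm:cap-dual}/Corollary \ref{cor:dual} applies verbatim with $\Gamma$ replaced by $\Gamma'$.
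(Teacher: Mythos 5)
Your argument is correct, but it is not the route the paper takes. The paper simply invokes \citet[Proposition 2.4]{backhoff2017causal} to say that the $\Gamma'$-constraints characterize causal couplings just as the $\Gamma$-constraints do, and then re-runs the entire duality machinery of Theorem \ref{thm:cap-dual}/Corollary \ref{cor:dual} verbatim with $\Gamma$ replaced by $\Gamma'$ (the topology refinement ensuring the Doob-type martingales are continuous, so all continuity/compactness steps survive). You instead avoid re-running the duality altogether: you sandwich $D_{\Gamma'}$ between $J$ and $D_\Gamma$, proving weak duality $J\le D_{\Gamma'}$ directly from the elementary fact that causal plans annihilate every $\gamma'\in\Gamma'$ (conditional independence plus the martingale property — only the ``easy'' direction of Proposition 2.4), and proving $D_{\Gamma'}\le D_\Gamma=J$ via the inclusion $\Gamma\subseteq\Gamma'$ obtained by telescoping the Doob martingale $M_r=\int g\,\hat\mu(dx_{r+1:T}\mid x_{1:r})$. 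What your approach buys is economy and transparency: it uses the already-established identity $D_\Gamma=J$ as a black box, needs no second pass through the Bartl--Cheridito--Kupper representation or Sion's theorem, and isolates exactly where the topology refinement enters (continuity of the finitely many Doob martingales so that $\Gamma\subseteq\Gamma'$ holds). What the paper's route buys is that it needs no comparison between the two test-function classes at all — Proposition 2.4 gives the equivalence of the constraint sets wholesale — and it yields the $\Gamma'$-duality as a self-contained statement rather than one inherited from the $\Gamma$-version. Two harmless caveats in your write-up are worth recording: your conclusion is for $\Gamma'$ read with the refined topology (which is exactly the paper's intended reading of ``$M_l$ can still be assumed continuous after refining the topology''), and the tower/martingale identities for the kernels hold $\hat\mu$-a.s., which suffices since $\gamma'$ only ever enters the dual through integration of $F(\cdot;\lambda,\gamma')$ against $\hat\mu$ and of $\gamma'$ against couplings with first marginal $\hat\mu$; also, attainment of $\cW_c(\hat\mu,\nu)$ is not even needed for your weak-duality step, since an $\varepsilon+\delta$-optimal causal plan and $\delta\downarrow0$ would do.
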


The dual problem in Theorem \ref{thm:cap-dual} or Corollary \ref{cor:dual} reveals that the worst-case scenario $\nu^*$ is the supremum over $y$ conditioning on $x$. This feature is conservative and somehow undesired when data points have abnormal spikes or observation errors. $\nu^*$ is determined pointwise and pathwise for each given $x$. The proofs of Theorem \ref{thm:cap-dual} or Corollary \ref{cor:dual} also show that the alternative measures run over all $\nu \in \cP(\cY)$. In many cases, the DM may have additional information to further restrict the selection of alternative measures. Another technical motivation is that, for a given $\nu$, the coupling set $\Pi(\hat{\mu}, \nu)$ is compact, which has been used in the proof of Theorem \ref{thm:cap-dual}. It helps to remove the boundedness assumption on the domains. Motivated by these facts, we embed structural information available into alternative measures and refer to the framework as the structural COT, or SCOT for short.

\subsection{Incorporating structural information}
Suppose structural information is available and can be modeled with a non-empty subset $\cV \subset \cP(\cY)$. Denote the primal problem with structural information as
\begin{equation}\label{eq:priori-primal}
	J(\varepsilon; \hat{\mu}, \cV) := \sup_{\nu \in \cB_\varepsilon \cap \cV} \int_\cY f(y) \nu(dy).
\end{equation}
Commonly, one can restrict $\cV$ to be a parametric space, encoded with structural information that the DM knows {\it a priori}. Then we can regard the SCOT framework as robust estimation of parametric models. Another example is to impose moment conditions on alternative measures. If we choose $\cV$ as the set of all Borel probability measures on $\cY$, it reduces to the COT framework \eqref{eq:primal}. Similarly, we have the following dual representation for \eqref{eq:priori-primal}. The assumption that $f$ is bounded from above is consistent with \citet[Theorem 2.6]{backhoff2017causal}, where the cost is bounded from below.
\begin{theorem}\label{thm:struct-dual}
	Suppose Assumption \ref{assum:lsc} holds, $\cX$ and $\cY$ are closed, $f$ is bounded from above, and $\cB_\varepsilon$ is non-empty for certain $\varepsilon > 0$. Then $J(\varepsilon; \hat{\mu}, \cV) = D(\varepsilon; \hat{\mu}, \cV)$, where
	\begin{equation}\label{eq:struct-dual}
		D(\varepsilon; \hat{\mu}, \cV) = \sup_{\nu \in \cV} \inf_{ \substack{\lambda \geq 0, \\ \gamma \in \Gamma}} \Big\{ \lambda \varepsilon + \sup_{\pi \in \Pi(\hat{\mu}, \nu)}  \int_{\cX \times \cY} \big[ f(y) - \lambda c(x, y) + \gamma(x, y) \big] \pi(dx, dy) \Big\}.
	\end{equation}
\end{theorem}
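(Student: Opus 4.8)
I would prove the claim one measure at a time. Fix $\nu\in\cV$. Because $f$ is bounded above and every $\pi\in\Pi(\hat\mu,\nu)$ has $\cY$-marginal $\nu$, the term $\int f(y)\,\pi(dx,dy)=\int_\cY f\,d\nu\in[-\infty,\sup f]$ factors out of the bracket in \eqref{eq:struct-dual}, so that for each $\nu$ the inner infimum in \eqref{eq:struct-dual} equals $\int_\cY f\,d\nu+\Psi(\nu)$, where
\begin{equation*}
\Psi(\nu):=\inf_{\lambda\geq0,\;\gamma\in\Gamma}\Big\{\lambda\varepsilon+\sup_{\pi\in\Pi(\hat\mu,\nu)}\int_{\cX\times\cY}\big[\gamma(x,y)-\lambda c(x,y)\big]\,\pi(dx,dy)\Big\}.
\end{equation*}
Choosing $\lambda=0$ and $\gamma=0$ gives $\Psi(\nu)\le0$. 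The heart of the matter, and the step I expect to be the main obstacle, is to show $\Psi(\nu)=0$ when $\cW_c(\hat\mu,\nu)\le\varepsilon$ and $\Psi(\nu)=-\infty$ otherwise; granting this, taking the supremum over $\nu\in\cV$ yields $D(\varepsilon;\hat\mu,\cV)=\sup_{\nu\in\cV\cap\cB_\varepsilon}\int_\cY f\,d\nu=J(\varepsilon;\hat\mu,\cV)$.

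To evaluate $\Psi(\nu)$ I would fix $\lambda\ge0$ and first dualize the test function by a minimax swap. The map $(\gamma,\pi)\mapsto\lambda\varepsilon+\int(\gamma-\lambda c)\,d\pi$ is affine in $\gamma$ over the linear space $\Gamma$ and affine in $\pi$ over the coupling set $\Pi(\hat\mu,\nu)$, which is convex and weakly compact: it is tight, being a set of couplings of the fixed Polish-space marginals $\hat\mu$ and $\nu$, and weakly closed, hence compact by Prokhorov. Since $c\ge0$ is l.s.c., $\pi\mapsto-\lambda\int c\,d\pi$ is weakly u.s.c.; and, after refining the Polish topologies $\cT^\cX,\cT^\cY$ so that the conditional-expectation terms appearing in the definition of $\Gamma$ become continuous — exactly as is done for the martingale representation $\Gamma'$ in the remark preceding Corollary \ref{cor:dual-mart}, a refinement that preserves upper and lower semicontinuity and hence Assumption \ref{assum:lsc} — every $\gamma\in\Gamma$ is bounded and continuous, so $\pi\mapsto\int\gamma\,d\pi$ is weakly continuous. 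Sion's minimax theorem then yields $\inf_{\gamma\in\Gamma}\sup_{\pi\in\Pi(\hat\mu,\nu)}\{\cdots\}=\sup_{\pi\in\Pi(\hat\mu,\nu)}\inf_{\gamma\in\Gamma}\{\cdots\}$. For fixed $\pi$, $\inf_{\gamma\in\Gamma}\int\gamma\,d\pi$ equals $0$ if $\pi$ is causal — then $\int\gamma\,d\pi=0$ for every $\gamma\in\Gamma$ by the linear characterization of causality in \citet[Proposition 2.4]{backhoff2017causal} — and equals $-\infty$ otherwise, by rescaling a single $\gamma_0\in\Gamma$ with $\int\gamma_0\,d\pi\ne0$. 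Hence the common value is $\sup_{\pi\in\Pi_c(\hat\mu,\nu)}\{\lambda\varepsilon-\lambda\int c\,d\pi\}=\lambda(\varepsilon-\cW_c(\hat\mu,\nu))$, and finally $\Psi(\nu)=\inf_{\lambda\ge0}\lambda(\varepsilon-\cW_c(\hat\mu,\nu))$, which is $0$ for $\cW_c(\hat\mu,\nu)\le\varepsilon$ and $-\infty$ otherwise. In particular, no appeal to attainment of the causal infimum is needed here, unlike in Theorem \ref{thm:cap-dual}.

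The genuinely delicate points are those flagged above: constructing a Polish topology refinement that makes every $\gamma\in\Gamma$ bounded continuous without destroying the semicontinuity of $f$ and $c$ or the tightness of $\Pi(\hat\mu,\nu)$; handling the possibly $-\infty$-valued objective in Sion's theorem, which can be done by passing to the weakly compact sublevel sets $\{\pi:\int c\,d\pi\le a\}$ (using l.s.c.\ of $\pi\mapsto\int c\,d\pi$); and invoking the exact equality form ``$\pi$ causal $\iff\int\gamma\,d\pi=0$ for all $\gamma\in\Gamma$'', which is available because $\Gamma$ uses signed test functions $h_{l,t}\in C_b(\cY_{1:t})$. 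I note that no boundedness of $\cX,\cY$ and no coercivity of $c$ is required, because $\nu$ is held fixed throughout, so $\Pi(\hat\mu,\nu)$ is automatically weakly compact — precisely the technical advantage of the SCOT formulation mentioned before the statement — and the hypothesis that $\cB_\varepsilon$ is non-empty enters only to exclude the degenerate case $\cV\cap\cB_\varepsilon=\emptyset$, in which $D=J=-\infty$.
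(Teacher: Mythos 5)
Your argument is, in substance, the same as the paper's: fix $\nu\in\cV$, use weak compactness of $\Pi(\hat{\mu},\nu)$, refine the topology on $\cX$ so that every $\gamma\in\Gamma$ is bounded and continuous, apply Sion's minimax theorem, and collapse the inner problem via the linear characterization of causality ($\pi$ causal iff $\int\gamma\,d\pi=0$ for all $\gamma\in\Gamma$). The paper runs Sion once on the full Lagrangian $\int(f(y)+\gamma)\,d\pi+\lambda(\varepsilon-\int c\,d\pi)$ with $(\lambda,\gamma)$ jointly on the minimization side, whereas you factor out $\int_\cY f\,d\nu$ and swap only $\gamma$ against $\pi$ for each fixed $\lambda$, identifying the inner value as $\lambda(\varepsilon-\cW_c(\hat{\mu},\nu))$; this is a repackaging rather than a different route, though your remark that near-optimal causal plans suffice (no attainment of $\cW_c$ needed) is a small genuine simplification relative to the paper, whose identification of the primal side does invoke attainment.

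One of your flagged remedies, however, does not work as stated. To handle the $[-\infty,\infty)$-valued objective in Sion you propose restricting to the sublevel sets $\Pi_a=\{\pi:\int c\,d\pi\le a\}$. But shrinking the supremum side only decreases $\inf_\gamma\sup_\pi$, so the identity $\inf_\gamma\sup_{\pi\in\Pi_a}=\sup_{\pi\in\Pi_a}\inf_\gamma$ followed by $a\to\infty$ recovers only the trivial direction $\inf\sup\ge\sup\inf$; it does not give the needed bound $\inf_\gamma\sup_{\pi\in\Pi(\hat{\mu},\nu)}\{\lambda\varepsilon+\int(\gamma-\lambda c)\,d\pi\}\le\lambda(\varepsilon-\cW_c(\hat{\mu},\nu))$, which is exactly the inequality you must have to conclude $\Psi(\nu)=-\infty$ when $\cW_c(\hat{\mu},\nu)>\varepsilon$. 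A fix that does work is to truncate the cost instead: with $c_M:=c\wedge M$ the objective $\lambda\varepsilon+\int(\gamma-\lambda c_M)\,d\pi$ is real-valued, u.s.c.\ affine in $\pi$ and continuous affine in $\gamma$, so Sion applies and yields the value $\lambda\bigl(\varepsilon-\inf_{\pi\in\Pi_c(\hat{\mu},\nu)}\int c_M\,d\pi\bigr)$; since $c\ge c_M$ and $\lambda\ge0$ this dominates the untruncated $\inf_\gamma\sup_\pi$, and $\inf_{\Pi_c(\hat{\mu},\nu)}\int c_M\,d\pi\uparrow\cW_c(\hat{\mu},\nu)$ as $M\to\infty$ by lower semicontinuity of $\pi\mapsto\int c_m\,d\pi$ and weak compactness of $\Pi_c(\hat{\mu},\nu)$ (closed in $\Pi(\hat{\mu},\nu)$ after the topology refinement). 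To be fair, the paper applies Sion to a Lagrangian that can also take the value $-\infty$ without comment, so this is a point where both arguments need the same touch-up; only your specific patch needs replacing.
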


Similarly, the infimum over $\gamma \in \Gamma$ can be replaced by $\gamma' \in \Gamma'$.

One advantage of the dual representation \eqref{eq:struct-dual} is that the inner supremum over $\pi \in \Pi(\hat{\mu}, \nu)$ is closely connected with the classic Wasserstein distance. Indeed, after moving the negative sign out, we have
\begin{align}
	& \sup_{\pi \in \Pi(\hat{\mu}, \nu)}  \int_{\cX \times \cY} \big[ f(y) - \lambda c(x, y) + \gamma(x, y) \big] \pi(dx, dy) \nonumber \\
	& = - \inf_{\pi \in \Pi(\hat{\mu}, \nu)}  \int_{\cX \times \cY} \big[ - f(y) + \lambda c(x, y) - \gamma(x, y) \big] \pi(dx, dy) \nonumber \\
	& =: - \cW(\hat{\mu}, \nu; \lambda, \gamma). \label{eq:struct-dist}
\end{align}
Section \ref{sec:algo} will present several algorithms to approximate Wasserstein distance and thus the inner supremum over $\pi$.

 \subsection{Concise examples}\label{sec:motivate}
 Is there any significant difference between COT and OT primal optimizers? When do they share a common primal optimizer? Before introducing numerical algorithms for general cases, we first consider simple examples with explicit solutions to understand the distinct features between COT and OT. Besides, we show that COT and OT can share a common optimizer when the cost and the objective are separable.
 
 \noindent {\bf Example 1.} Let the time step $T = 2$ and the domain $\cX = \cY = [-1, 1]^2$. Suppose the reference measure $\hat{\mu} = 0.2 \delta_{(-1, 1)} + 0.8 \delta_{(-1, -1)}$. Equivalently, $x_1 = -1$ with probability one, $x_2 = 1$ with probability 0.2, and  $x_2 = -1$ with probability 0.8.  Suppose the cost function $c(x,y) = \mathbf{1}_{ \{ x_2 \neq y_1\} }$, which is l.s.c. Consider an objective $f((x_1, x_2)) = x_1$ that only focuses on the first time point. Then the risk under the reference measure is
 \begin{equation*}
 	\int_{\cX} f(x) \hat{\mu}(dx) = \int_{\cX} x_1 \hat{\mu}(dx)= -1.
 \end{equation*}
 Let the radius $\varepsilon = 0.2$. By Corollary \ref{cor:dual} with $\gamma (x,y) \equiv 0$, the dual problem of the OT formulation is
 \begin{align*}
 	& \inf_{\lambda \geq 0} \lambda \varepsilon + \int_\cX \sup_{y_1 \in [-1,1]} \big( y_1 - \lambda \mathbf{1}_{ \{ x_2 \neq y_1\} } \big) \hat{\mu}(dx_2) \\
 	& =  \inf_{\lambda \geq 0} \Big\{ 0.2 \lambda + 0.2 \sup_{y_1 \in [-1,1]} \big( y_1 - \lambda \mathbf{1}_{ \{y_1 \neq 1\} } \big)  + 0.8 \sup_{y_1  \in [-1,1]} \big( y_1 - \lambda \mathbf{1}_{ \{y_1 \neq -1\} } \big) \Big\} \\
 	& =  \inf_{\lambda \geq 0} \Big\{ 0.2 \lambda + 0.2 \times 1  + 0.8 \times (-1) \Big\} = -0.6.
 \end{align*}
 Thus, the worst-case measure $\nu^*$ will assign $Y_1 = X_2$, where we use capitalized letters to highlight random variables. It relies on the information from the future time, while OT does not punish this anticipative choice.
 
 For the COT problem, $\nu \in \cB_\varepsilon$ if and only if there exists a causal transport plan $\pi \in \Pi_c(\hat{\mu}, \nu)$ and $\pi(Y_1 \neq X_2) \leq 0.2$. The only information that $Y_1$ can use to predict $X_2$ is from $X_1$. However, $X_1$ does not help predict $X_2$ since $\hat{\mu}$ is separable. Then, for any causal transport plan $\pi$, we have
 \begin{align*}
 	\pi(Y_1 \neq X_2) =& \pi(Y_1 \neq 1 | X_2 =1) \hat{\mu}(X_2 = 1) + \pi(Y_1 \neq -1| X_2 = -1) \hat{\mu}(X_2 = -1) \\
 	=& 0.2 \nu(Y_1 \neq 1) + 0.8 \nu(Y_1 \neq -1) = 1 - 0.2 \nu(Y_1 = 1) - 0.8 \nu(Y_1 = -1)  \leq 0.2.
 \end{align*}
 On the other hand, $ \nu(Y_1 = 1) + \nu(Y_1 = -1) \leq 1 $. One must have $ \nu(Y_1 = -1) = 1$. Then the worst-case value is $-1$, which is smaller than the OT case. In general, the COT and OT primal values can be arbitrarily distinct. We can also obtain the same conclusion from the dual representation.

 \noindent {\bf Example 2.} To provide a counterexample that the duality may fail, consider the same setting in Example 1, but with the objective function $f((x_1, x_2)) = \infty \mathbf{1}_{\{x_1 = 1\}}$. The boundedness assumption on $f$ is violated. For the COT problem, since the causal Wasserstein ball $\cB_{\varepsilon}$ only contains measures satisfying $\nu (Y_1 = -1) = 1$, then the primal value is 0. For the dual value, note that any given $\gamma \in \Gamma$ is continuous and bounded. The supremum over $y$ will take $y_1 = 1$ since the penalty from $\gamma$ is bounded. Thus,
 \begin{align*}
 	\inf_{\lambda \geq 0, \gamma \in \Gamma} \lambda \varepsilon + \int_\cX \sup_{y_1 \in [-1,1]} \big( \infty \mathbf{1}_{\{ y_1 = 1\}} - \lambda \mathbf{1}_{ \{ x_2 \neq y_1\} } + \gamma(x, y) \big) \hat{\mu}(dx_2) = \infty.
 \end{align*}  
 This example illustrates that at least $f < \infty$ is needed for strong duality, even with a bounded domain. In contrast, both the primal and dual values for the OT problem are infinity.

Example 1 shows COT and OT have different optimizers in general. The following corollary investigates another side. When the objective and the cost are separable, COT and OT share a common optimizer.
 \begin{corollary}\label{cor:equiv}
 	Suppose the objective $f(y) = \sum^T_{t=1} f_t(y_t)$ and the cost $c(x, y) = \sum^T_{t=1} c_t(x_t, y_t)$, satisfying $c_t \geq 0$ and $c_t(x_t, x_t) = 0$. $\int_\cX |f(x)| \hat{\mu} (dx) < \infty$. Assumptions in Theorem \ref{thm:cap-dual} or Corollary \ref{cor:dual} hold. Then COT and OT share a common optimizer for $J(\varepsilon; \hat{\mu})$. More precisely, there is a measurable map $L_t: \cX_t \rightarrow \cY_t$ for each $t \in [T]$ attaining the supremum over $y$. A worst-case measure $\nu^*$ is induced by $Y_t = L_t(X_t)$. 
 \end{corollary}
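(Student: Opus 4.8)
The plan is to exploit separability so that the \emph{optimal transport} worst-case measure is itself induced by a time-adapted Monge map, hence causal: since $\cW_c(\hat\mu,\nu)\ge\cW(\hat\mu,\nu)$ gives the inclusion $\cB_\varepsilon\subseteq\{\nu:\cW(\hat\mu,\nu)\le\varepsilon\}$ and therefore $J(\varepsilon;\hat\mu)\le J^{\mathrm{OT}}(\varepsilon;\hat\mu):=\sup\{\int_\cY f\,d\nu:\cW(\hat\mu,\nu)\le\varepsilon\}$ for free, it suffices to produce a causal $\nu^\ast\in\cB_\varepsilon$ that attains the OT value. The starting point is that separability decouples the pointwise maximization coordinatewise: for every $\lambda\ge0$,
\[
\sup_{y\in\cY}\{f(y)-\lambda c(x,y)\}=\sum_{t=1}^{T}F_t(x_t;\lambda),\qquad F_t(x_t;\lambda):=\sup_{y_t\in\cY_t}\{f_t(y_t)-\lambda c_t(x_t,y_t)\},
\]
so the OT dual functional sees $x$ only through its one-dimensional marginals. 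Under Assumption \ref{assum:lsc} together with either hypothesis 3 of Theorem \ref{thm:cap-dual} or the compactness of Corollary \ref{cor:dual}, each $F_t(x_t;\lambda)$ is finite and its supremum is attained; the measurable maximum theorem then yields a Borel map $L_t^{\lambda}:\cX_t\to\cY_t$ with $f_t(L_t^{\lambda}(x_t))-\lambda c_t(x_t,L_t^{\lambda}(x_t))=F_t(x_t;\lambda)$ (if the argmax is not a.e.\ a singleton, one additionally selects, among maximizers, one minimizing $c_t(x_t,\cdot)$, again by measurable selection).

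Next I would invoke the optimal-transport strong duality (the $\gamma(x,y)\equiv0$ version of Corollary \ref{cor:dual}, as used in Example 1) to write $J^{\mathrm{OT}}(\varepsilon;\hat\mu)=\inf_{\lambda\ge0}G(\lambda)$ with $G(\lambda):=\lambda\varepsilon+\sum_t\int_{\cX_t}F_t(x_t;\lambda)\,\hat\mu_t(dx_t)$, where $\hat\mu_t$ is the $t$-th marginal of $\hat\mu$. The function $G$ is convex, finite at $\lambda=0$ since $f$ is bounded above, and --- when $\varepsilon>0$ --- coercive as $\lambda\to\infty$ because $F_t(x_t;\lambda)\ge f_t(x_t)$ (take $y_t=x_t$, using $c_t(x_t,x_t)=0$) while $\int_\cX|f|\,d\hat\mu<\infty$; hence $G$ attains its minimum at some $\lambda^\ast\ge0$ (the case $\varepsilon=0$, where $\nu^\ast=\hat\mu$ works, being trivial). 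Put $L:=(L_1^{\lambda^\ast},\dots,L_T^{\lambda^\ast})$, $\nu^\ast:=L_\#\hat\mu$, and $\pi^\ast:=(\id_\cX,L)_\#\hat\mu\in\Pi(\hat\mu,\nu^\ast)$. Because $L_t$ depends on $x$ only through $x_t$, the disintegration of $\pi^\ast$ obeys $\pi^\ast(dy_t\mid x_{1:T})=\delta_{L_t(x_t)}=\pi^\ast(dy_t\mid x_{1:t})$, so $\pi^\ast$ is causal (indeed bicausal) and $\cW_c(\hat\mu,\nu^\ast)\le\int_\cX c(x,L(x))\,\hat\mu(dx)$.

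It remains to check $\int_\cX c(x,L(x))\,\hat\mu(dx)=\varepsilon$ when $\lambda^\ast>0$ and $\le\varepsilon$ when $\lambda^\ast=0$, which I would obtain from the subdifferential optimality $0\in\partial G(\lambda^\ast)$ together with the envelope relation $\partial_\lambda F(x;\lambda)\ni-c(x,L^{\lambda}(x))$ (the right and left derivatives of the convex decreasing map $\lambda\mapsto F(x;\lambda)$ being $-\inf$ and $-\sup$ of $c(x,\cdot)$ over the argmax set), after interchanging subdifferential and integral via measurable selection. Given this, the identity $f(L(x))=F(x;\lambda^\ast)+\lambda^\ast c(x,L(x))$ yields $\int_\cY f\,d\nu^\ast=\int_\cX F(x;\lambda^\ast)\,\hat\mu(dx)+\lambda^\ast\int_\cX c(x,L(x))\,\hat\mu(dx)=G(\lambda^\ast)=J^{\mathrm{OT}}(\varepsilon;\hat\mu)$, while $\cW_c(\hat\mu,\nu^\ast)\le\varepsilon$ gives $\nu^\ast\in\cB_\varepsilon$ and hence $\int_\cY f\,d\nu^\ast\le J(\varepsilon;\hat\mu)\le J^{\mathrm{OT}}(\varepsilon;\hat\mu)=\int_\cY f\,d\nu^\ast$. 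All inequalities collapse, so $\nu^\ast$ --- which is exactly of the asserted form $Y_t=L_t(X_t)$ --- is a common optimizer of $J(\varepsilon;\hat\mu)$ for COT and OT.

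The step I expect to be delicate is the complementary-slackness argument: pinning the transport budget to exactly $\varepsilon$ requires the correct convex-analytic envelope/Danskin statement for $\lambda\mapsto F(x;\lambda)$ and a measurable selection of maximizers realizing that exact value --- one must check that $\varepsilon$ lies between $\int_\cX\inf\{c(x,y):y\in\arg\max\}\,\hat\mu(dx)$ and $\int_\cX\sup\{c(x,y):y\in\arg\max\}\,\hat\mu(dx)$ and pick $L$ accordingly, or fall back on a.e.-uniqueness of the maximizer, under which $G$ is differentiable at $\lambda^\ast$ and the value is forced. By contrast, the measurability of each $L_t$ and the verification that the induced Monge plan is causal are routine.
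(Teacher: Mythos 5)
Your skeleton is the same as the paper's: use $\cW_c\ge\cW$ to get the inequality $J(\varepsilon;\hat\mu)\le J^{\mathrm{OT}}(\varepsilon;\hat\mu)$ for free, apply the strong duality to the OT formulation so that separability collapses the inner supremum into coordinatewise problems $\sup_{y_t}\{f_t(y_t)-\lambda c_t(x_t,y_t)\}$, and then observe that a selector $L_t$ depending only on $x_t$ induces a (bi)causal plan, so the map-induced worst case lies in $\cB_\varepsilon$ as well. Where you diverge is the existence step: the paper does not derive complementary slackness by hand; it imports from \citet[Theorem 1]{blanchet2019quantifying} the existence of a dual optimizer $(\lambda^*,L)$ with a transport map attaining the supremum (this is exactly where $\int_\cX|f|\,d\hat\mu<\infty$ is used), takes measurability of $L$ from \citet[Proposition 7.50(b)]{bertsekas1996stochastic}, and only then uses separability to conclude $y_t=L_t(x_t)$ and causality.

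The step you flag as delicate is a genuine gap, and neither of your proposed patches closes it under the stated hypotheses. Optimality of $\lambda^*>0$ (via the subdifferential/Danskin argument) only places $\varepsilon$ between $\int_\cX\inf\{c(x,y):y\in\arg\max\}\,\hat\mu(dx)$ and $\int_\cX\sup\{c(x,y):y\in\arg\max\}\,\hat\mu(dx)$; to get $\int_\cX c(x,L(x))\,\hat\mu(dx)=\varepsilon$ you need a measurable selection whose pointwise cost realizes the required intermediate values, but $\{c(x,y):y\in\arg\max(x)\}$ need not be an interval, and when $\hat\mu$ has atoms — the empirical/adapted empirical measure that is the paper's main use case — you cannot compensate by averaging over a continuum of $x$'s either. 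Concretely, take $\hat\mu=\delta_{(x_0,x_0)}$, $f_1(y_1)=\mathbf{1}_{\{|y_1-x_0|\ge 1\}}$, $f_t\equiv 0$ for $t\ge 2$, quadratic cost and $\varepsilon=1/2$: the OT value is $1/2$, attained only by splitting mass between two maximizers (a Markov kernel), while every map-induced measure gives value $0$, so no refinement of the selection can exhaust the budget exactly. Your fallback (b), $\hat\mu$-a.e. uniqueness of the maximizer, is an additional hypothesis not present in the corollary. So at this point the argument genuinely needs the external input the paper takes from the cited theorem (or an extra assumption of the uniqueness/non-atomicity type), and as written your plan cannot be completed by convex analysis plus measurable selection alone. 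The remaining ingredients of your proposal — measurable selection of $L_t^{\lambda}$, convexity, coercivity and attainment of $\lambda^*$, and causality of $(\id,L)_{\#}\hat\mu$ — are sound and agree with the paper.
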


Corollary \ref{cor:equiv} implies that, with the separable cost and objective, the worst-case scenario can be deduced separately and does not rely on future values. It exhausts the current state $x_t$ to find the worst-case $y_t$. For example, if we consider $f(y)$ as the sum of $y$ and $c(x, y)$ as the $l_1$ norm, then the worst-case scenario is simply the reference measure shifted by a constant depending on the radius $\varepsilon$. Thus, the temporal structure of $x$ and $y$ is the same. In this sense, we claim that COT and OT rely on the reference measure {\it pointwise and pathwise}, instead of jointly on samples.
 
\section{Sample complexity and optimization}\label{sec:complex}

The dual representation of COT \eqref{eq:minimax} and SCOT \eqref{eq:struct-dual} poses several essential difficulties to optimization. In COT, the inner maximization over $y$ is non-concave in general, which is NP-hard already. The outer minimization is over $\lambda$ and an infinite-dimensional test function space, which also appears in SCOT. To reduce the minimization over test functions to a finite-dimensional problem, we adopt neural networks to model elements from the test function space. The motivations are twofold. First, compared with the discrete scheme in \cite{acciaio2021cournot,zhou2021distri}, the parametric formulation scales well when the sample size increases. The test functions may also learn some meaningful patterns from data. Another motivation is the approximation capability of neural networks \citep{hornik1991,cybenko1989} and the modern computational tools available.  In Section \ref{sec:rad}, we prove the sample complexity of COT with tools from Rademacher complexity. Section \ref{sec:RegNN} gives an example to verify technical assumptions. Optimization algorithms are discussed later in Section \ref{sec:algo}. As preparation, we first introduce the adapted empirical measure in \cite{backhoff2020estimating} that resolves a flaw of classic empirical measure.

In this section, we further assume
\begin{assumption}\label{assum:continuous}
	$\cX_t = \cY_t = [a, b]^d$ are bounded cubes for all $t \in [T]$.  $f$ and $c(x,y)$ are continuous and satisfy Assumption \ref{assum:lsc}.
\end{assumption}
The sample complexity analysis relies on the consistency results from \cite{backhoff2020estimating} under compact interval domains. We note that \cite{acciaio2022convergence} have extended relevant results to $\R^d$ recently. However, our convergence analysis in Theorem \ref{thm:conv} still needs a compact domain to validate the universal approximation property of neural networks \cite[Theorem 1]{cybenko1989}.

\subsection{Adapted empirical measures}
The reference measure $\hat{\mu}$ is usually chosen as the empirical measure $\hat{\mu}_N$ with $N$ observed paths $x^n_{1:T}$, i.i.d. from an underlying measure $\mu$. As noted in \cite{pflug2016empirical,backhoff2020estimating}, $\cW_c (\hat{\mu}_N, \mu) \rightarrow 0$ may not hold when sample size $N \rightarrow \infty$. To understand this undesired property, consider the measure $\mu$ that has a density function with respect to the Lebesgue measure. The probability that two observed sample paths coincide at some time point is zero. Then the kernels of the empirical measure are Dirac measures almost surely; see \citet[Remark 2.1]{backhoff2020estimating}. Once we know the value of the sample path at time $t=1$, the evolution of the whole path is fully determined. There will be no penalty from $h_{l, t}$ and $g_{l, t}$. Thus, convergence in $\cW_c$ fails.   

To resolve this issue, \cite{backhoff2020estimating} propose a quantization method. By \citet[Definition 1.2]{backhoff2020estimating}, set constant $q = 1/(T+1)$ for one-dimensional path spaces ($d=1$) and $q=1/(dT)$ for $d \geq 2$. For all sample sizes $N \geq 1$, divide $[a, b]^d$ into disjoint unions of $N^{qd}$ cubes with edges of length $(b-a)/N^{q}$. Let $\varphi_N: [a, b]^d \rightarrow [a, b]^d$ map each cube to its center. Define the {\bf adapted} empirical measure as
\begin{equation*}
	\hat{\mu} ( dx_{1:T} | N, \varphi_N) := \frac{1}{N} \sum^N_{n = 1} \delta_{  \varphi_N( x^n_{1:T} )}.
\end{equation*}  
In this notation, the first $N$ means $N$ samples are used. $\varphi_N$ highlights the role of the partition map. Denote the partition formulated by $\varphi_N$ on $[a, b]^d$ as
\begin{equation*}
	\Phi_N := \{ \varphi_N^{-1} (x): x \in \varphi_N([a, b]^d) \}.
\end{equation*}
For every $1 \leq t \leq T - 1$, the product of cubes formulates
\begin{equation*}
	\Phi_{N,t} := \Big\{ \prod_{1 \leq s \leq t} B_s : B_s \in \Phi_N  \Big\}. 
\end{equation*}  
With a slight abuse of notation, we interpret $\varphi^{-1}_N(x_{1:t})$ as the product of cubes when $\varphi^{-1}_N$ is applied element-wise.  

Recall that $\mu (d x_{t+1:T} | x_{1:t})$ is the true conditional measure. With a set $A_t \in \Phi_{N,t}$, we can define the {\bf adapted} empirical conditional measure as
\begin{align*}
	\hat{\mu} (d x_{t+1:T} | A_t, N, \varphi_N) & := \frac{1}{| \{ x^n_{1:t} \in A_t, \, n \in [N] \} |} \sum_{ x^n_{1:t} \in A_t, \, n \in [N] } \delta_{  \varphi_N( x^n_{t+1:T} ) },
\end{align*}
where $|\cdot|$ is the cardinality of a set. Besides, we also need the following notation for later use:
\begin{align*}
	\hat{\mu} (d x_{t+1:T} | A_t, N, \id) & := \frac{1}{| \{ x^n_{1:t} \in A_t, \, n \in [N] \} |} \sum_{ x^n_{1:t} \in A_t, \, n \in [N] } \delta_{x^n_{t+1:T}} .
\end{align*}
Quantization maps values in the same cube to the centroid of the cube. Therefore, two sample paths have non-zero probabilities to be in the same cube and mapped to the same value at time $t=1$. Then the evolution of sample paths after $t=1$ is not fully determined. This is in contrast to the classic empirical measure in which values are almost surely distinct at $t=1$. For further details, see \citet[Figure 1 and Section 2]{backhoff2020estimating}.

\subsection{Rademacher complexity}\label{sec:rad}
In this subsection, we consider the sample complexity of the COT dual problem with the test function space $\Gamma$, which is infinite dimensional. To make the problem tractable, a general idea is to approximate $\Gamma$ with simpler sets of functions. In machine learning theory, a hypothesis set is a collection of functions that map features or attributes to a set of finite or infinite labels; see \citet[Chapter 1.4]{mohri2018foundations}. In this paper, we define a hypothesis set $\Theta_k$ as 
\begin{align}
	\Theta_k := \{&  g_{l, t}(\cdot; \theta') \text{ and } h_{l, t} (\cdot; \theta''), \text{ with } \theta', \theta'' \in \mathcal{C}_k \text{ which is compact}, \label{eq:thetak}\\
	&  l \in [L_k], \, t \in [T-1], \, \lambda \in [0, \lambda_k]\}. \nonumber
\end{align}
Here, $g_{l, t}$ and $h_{l, t}$ are parameterized by $\theta', \theta''$ from a compact subset $\mathcal{C}_k \subset \R^{m_k}$. $g_{l, t}$ and $h_{l, t}$ are jointly continuous on parameters ($\theta'$ or $\theta''$) and inputs ($x$ or $y$). We assume all functions $h_{l,t}$ and $g_{l,t}$ in the hypothesis set $\Theta_k$ are uniformly bounded by a large enough constant $C_{h, g}$, independent of sample size $N$. To approximate continuous and bounded functions, we use neural networks due to their universal approximation property \citep{hornik1991,cybenko1989}. We also restrict the Lagrange multiplier $\lambda$ in a compact set and $l$ ranging from finite integers. For simplicity, we write $\theta \in \Theta_k$ to mean a choice of parameters $(\theta', \theta'', \lambda)$. 

Denote the hypothesis set $\Theta$ with the whole test function space $\Gamma$ and $\lambda \in [0, \infty)$ as
 \begin{equation*}
 	\Theta := \{g_{l, t} \in C_b(\cX_{1:T}), \, h_{l, t} \in C_b(\cY_{1:t}), \, t \in [T-1], \, l \in [L] \text{ for some positive integer } L \in \mathbb{N}, \, \lambda \geq 0 \}.
 \end{equation*}
 With the universal approximation property of neural networks, we can approximate $\Theta$ with $\Theta_k$, when the compact set $\mathcal{C}_k$ approaches $\R^{m_k}$ and $m_k$, $C_{h, g}$, $\lambda_k$, and $L_k$ go to infinity.

Recall the dual problem of COT in \eqref{eq:minimax}. When we set $\hat{\mu}$ as the adapted empirical measure $\hat{\mu} ( dx_{1:T} | N, \varphi_N)$, the dual problem \eqref{eq:minimax} minimizes
\begin{equation*}
	\begin{aligned}
		& D(\theta, \varepsilon; \hat{\mu} \circ \varphi^{-1}_N) \\
		& := \int \sup_{y \in \cY}  \Big\{ f(y) - \lambda c(x, y) + \sum^L_{l=1} \sum^{T-1}_{t=1} h_{l,t}(y_{1:t}) \big[ g_{l, t}(x_{1:T}) \\
		& \hspace{2cm} - \int g_{l, t}( x_{1:T}) \hat{\mu} (d x_{t+1:T} | \varphi^{-1}_N (x_{1:t}), N, \varphi_N) \big] \Big\} \hat{\mu}(dx_{1:T}| N, \varphi_N) + \lambda \varepsilon \\
		& = \int \sup_{y \in \cY}  \Big\{ f(y) - \lambda c(\varphi_N(x), y) + \sum^L_{l=1} \sum^{T-1}_{t=1} h_{l,t}(y_{1:t}) \big[ g_{l,t}(\varphi_N(x_{1:T})) \\
		& \hspace{2cm} - \int g_{l,t}( \varphi_N(x_{1:T})) \hat{\mu} (d x_{t+1:T} | \varphi^{-1}_N (x_{1:t}), N, \id) \big] \Big\} \hat{\mu}(dx_{1:T}| N, \id) + \lambda \varepsilon.
	\end{aligned}
\end{equation*}
The second equality follows from the push-forward property. We use $\hat{\mu} \circ \varphi^{-1}_N$ to highlight the adapted empirical measure for abbreviation. $D(\theta, \varepsilon; \hat{\mu} \circ \varphi^{-1}_N)$ considers parameterized $g_l$ and $h_l$ in the hypothesis set $\Theta_k$ or $\Theta$, while we omit their parameters $\theta'$ and $\theta''$ for simplicity. Note that
\begin{equation*}
	\inf_{ \theta \in \Theta} D(\theta, \varepsilon; \hat{\mu} \circ \varphi^{-1}_N) = D(\varepsilon; \hat{\mu} \circ \varphi^{-1}_N),
\end{equation*}
with the right-hand side defined in \eqref{eq:minimax}. 

A function used in the middle of proofs is	
\begin{equation*}
	\begin{aligned}
		D(\theta, \varepsilon; \mu \circ \varphi^{-1}_N) := \int \sup_{y \in \cY}  \Big\{ & f(y) - \lambda c(\varphi_N(x), y) + \sum^L_{l=1} \sum^{T-1}_{t=1} h_{l,t}(y_{1:t}) \big[ g_{l,t}( \varphi_N(x_{1:T}) ) \\
		& - \int g_{l,t}( \varphi_N(x_{1:T})) \mu (d x_{t+1:T} | \varphi^{-1}_N (x_{1:t})) \big] \Big\} \mu (dx_{1:T}) + \lambda \varepsilon.
	\end{aligned}
\end{equation*}
In $D(\theta, \varepsilon; \mu \circ \varphi^{-1}_N)$, the probability measure is the true $\mu$ while the partition map remains in the integral. Besides, let $D(\theta, \varepsilon; \mu)$ be the function minimized under the true measure $\mu$ and without the partition map.

To quantify the learning capability of a function class, we recall the definition of (empirical) Rademacher complexity from \citet[Definition 3.2, Chapter 3]{mohri2018foundations}:
\begin{definition}
	The empirical Rademacher complexity of a hypothesis set $\cG$ with respect to a fixed sample $S = (x^1_{1:T}, ... , x^N_{1:T} )$ of size $N$ is defined as
	\begin{equation}\label{ERC}
		\cR_S (\cG; N) = \E_{\mathbf{\sigma}} \Big[ \sup_{g \in \cG} \frac{1}{N} \sum^N_{i=1} \sigma_i g(x^i) \Big],
	\end{equation}
where $\sigma = (\sigma_1, ..., \sigma_N)$, with $\sigma_i$s independent uniform random variables taking values in $\{ -1, + 1\}$. 

For any positive integer $N$, the Rademacher complexity of $\cG$ is the expectation of the empirical Rademacher complexity over all samples of size $N$ drawn according to $\mu$:
\begin{equation}\label{RC}
	\cR (\cG; N) = \E_{S} [ \cR_S (\cG; N) ].
\end{equation}
\end{definition}
The random variables $\sigma_i$ are called Rademacher variables. The supremum over $g \in \cG$ in \eqref{ERC} measures how well the hypothesis set $\cG$ correlates with random noise $\sigma$ over a given sample $S$. Hence, the (empirical) Rademacher complexity measures on average the richness of the function family $\cG$.

Let $\Theta_k$ be a given hypothesis set. Denote $\cR( \cG_k \circ \varphi_N; N')$ as the Rademacher complexity of the family of functions $g_{l,t}( \varphi_N(\cdot))$, with $g_{l, t}$ from $\Theta_k$ and the sample size as $N'$. $D(\theta, \varepsilon; \hat{\mu} \circ \varphi^{-1}_N)$ depends on the hypothesis set $\Theta_k$ via $F$ in \eqref{eq:minimax}. We need to consider $\cR( F( \Theta_k \circ \varphi_N); N)$ as the Rademacher complexity of the family of the function $F$ composited with $g_{l,t}( \varphi_N(\cdot))$ and $h_{l,t}( \varphi_N(\cdot))$ from $\Theta_k$. 

Our first technical result is to derive general learning guarantees for $\Theta_k$. Define a function
\begin{equation}\label{def:r}
	r(x; \delta) := 2 \cR( \cG_k \circ \varphi_N; x) + 2 C_g \sqrt{\frac{\ln(2/\delta)}{2x}}, \quad  x \in [0, \infty),
\end{equation}
for later use in the proof. $C_g > 0$ is a constant to uniformly bound all $g_{l, t}$ in $\Theta_k$ and independent of $x$. Condition 2 in Lemma \ref{lem:Rad} assumes the Rademacher complexity in \eqref{def:r} decreases when the sample size increases and the concavity of $xr(x;\delta)$ is used in Jensen's inequality. The main difficulty is that the adapted empirical measures appear in the integral with functions $g_{l, t}$.

\begin{lemma}\label{lem:Rad}
	Consider the hypothesis set $\Theta_k$ in \eqref{eq:thetak}. Suppose 
	\begin{enumerate}
		\item Assumption \ref{assum:continuous} holds; 
		\item $x r(x; \delta)$ is concave on $x \in [0, \infty)$, and $r(x; \delta)$ is decreasing on $x \in [0, \infty)$. 
	\end{enumerate}
   Then with probability at least $1 - \delta$ over the draw of an i.i.d. sample of $\mu$ of size $N$, the generalization bound is given by
	\begin{align}
		\sup_{\theta \in \Theta_k}  \left| D(\theta, \varepsilon; \mu \circ \varphi^{-1}_N) - D (\theta, \varepsilon; \hat{\mu} \circ \varphi^{-1}_N) \right|  \leq & 2 \cR( F( \Theta_k \circ \varphi_N); N)  + C \cR( \cG_k \circ \varphi_N ; N^{1 - qd(T-1)}) \nonumber \\
		& + C \sqrt{\frac{\ln[2(NL_k(T-1)+1)/\delta]}{N^{1 - qd(T-1)}}}, \label{eq:genbd}
	\end{align}
	where $L_k$ is in \eqref{eq:thetak} and $C>0$ is a constant independent of the sample size $N$.
\end{lemma}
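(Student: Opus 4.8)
The sample enters $D(\theta,\varepsilon;\hat\mu\circ\varphi_N^{-1})$ in two distinct places — as the outer integrating measure $\hat\mu(\cdot\mid N,\id)$, and as the conditional kernels $\hat\mu(dx_{t+1:T}\mid\varphi_N^{-1}(x_{1:t}),N,\id)$ that appear inside every test function — and the plan is to decouple these with a triangle inequality around the auxiliary functional $\widetilde D(\theta,\varepsilon)$ obtained from $D(\theta,\varepsilon;\hat\mu\circ\varphi_N^{-1})$ by retaining the empirical outer measure but replacing the empirical conditional kernels with the true ones $\mu(dx_{t+1:T}\mid\varphi_N^{-1}(x_{1:t}))$. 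With $\mathrm{(I)}:=\sup_{\theta\in\Theta_k}|D(\theta,\varepsilon;\mu\circ\varphi_N^{-1})-\widetilde D(\theta,\varepsilon)|$ and $\mathrm{(II)}:=\sup_{\theta\in\Theta_k}|\widetilde D(\theta,\varepsilon)-D(\theta,\varepsilon;\hat\mu\circ\varphi_N^{-1})|$, the left-hand side of \eqref{eq:genbd} is at most $\mathrm{(I)}+\mathrm{(II)}$. Term $\mathrm{(I)}$ differs from its target only in the outer measure, and once the true conditional kernels are inserted its integrand depends on the path only through $\varphi_N$ and on $\theta$ through the (now \emph{deterministic}) class $F(\Theta_k\circ\varphi_N)$; under Assumption \ref{assum:continuous} and the uniform bound $C_{h,g}$ this class is uniformly bounded, so the standard Rademacher generalization bound (symmetrization plus McDiarmid, as in \citet[Theorem 3.3]{mohri2018foundations}) gives $\mathrm{(I)}\le 2\cR(F(\Theta_k\circ\varphi_N);N)+C\sqrt{\ln(2/\delta_1)/N}$ with probability at least $1-\delta_1$; the mismatch between this true-kernel $F$-class and the empirical kernels actually used is carried entirely by $\mathrm{(II)}$.

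Term $\mathrm{(II)}$ is the crux. Using $|\sup_y A-\sup_y B|\le\sup_y|A-B|$ and cancelling the $y$-terms not involving the kernels, $\mathrm{(II)}$ is bounded, path-by-path, by $C_{h,g}\sum_{l=1}^{L_k}\sum_{t=1}^{T-1}|\int g_{l,t}(\varphi_N(\cdot))\,(\mu-\hat\mu)(dx_{t+1:T}\mid\varphi_N^{-1}(x_{1:t}))|$; integrating against $\hat\mu(\cdot\mid N,\id)$ and grouping the $N$ observed paths by the cell $A_t\in\Phi_{N,t}$ containing their prefix $x^n_{1:t}$, one gets $\mathrm{(II)}\le C_{h,g}\sum_{t}\sum_{A_t}(N_{A_t}/N)\sum_l|\delta_{l,t,A_t}|$, with $N_{A_t}:=|\{n:x^n_{1:t}\in A_t\}|$ and $\delta_{l,t,A_t}$ the corresponding kernel discrepancy. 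The quantization of \cite{backhoff2020estimating} is exactly what makes this controllable: conditionally on which paths land in $A_t$, the suffixes $\{x^n_{t+1:T}:x^n_{1:t}\in A_t\}$ are i.i.d.\ from the true conditional law $\mu(dx_{t+1:T}\mid X_{1:t}\in A_t)$ and $\hat\mu(dx_{t+1:T}\mid A_t,N,\id)$ is precisely their empirical measure, so the standard Rademacher bound applied to $\cG_k\circ\varphi_N$ with sample size $N_{A_t}$ yields $\sup_{g\in\cG_k}|\delta_{g,t,A_t}|\le r(N_{A_t};\delta')$ conditionally with probability at least $1-\delta'$, for $r$ as in \eqref{def:r}.

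Finally I would take a union bound over all triples $(l,t,A_t)$, of which there are at most $L_k\sum_{t=1}^{T-1}|\Phi_{N,t}|=L_k\sum_t N^{qdt}\le L_k(T-1)N^{qd(T-1)}\le L_k(T-1)N$, and choose $\delta_1=\delta'=\delta/(L_k(T-1)N+1)$, so that the total failure probability is at most $\delta$ and $\ln(2/\delta')=\ln[2(NL_k(T-1)+1)/\delta]$. On the good event $\mathrm{(II)}\le C\sum_{t=1}^{T-1}\sum_{A_t}(N_{A_t}/N)\,r(N_{A_t};\delta')$, and the random per-cell sizes $N_{A_t}$ are removed by Jensen's inequality: since $x\mapsto x\,r(x;\delta')$ is concave (Condition 2) and $\sum_{A_t}N_{A_t}=N$ is spread over at most $M_t:=N^{qdt}$ non-empty cells, $\sum_{A_t}(N_{A_t}/N)\,r(N_{A_t};\delta')=\tfrac1N\sum_{A_t}N_{A_t}\,r(N_{A_t};\delta')\le r(N/M_t;\delta')\le r(N^{1-qd(T-1)};\delta')$, where the last two steps use that $r(\cdot;\delta')$ is decreasing and $N/M_t\ge N^{1-qdt}\ge N^{1-qd(T-1)}$ for $t\le T-1$. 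Summing over $t$, substituting \eqref{def:r}, and adding the bound on $\mathrm{(I)}$ yields \eqref{eq:genbd} with $C$ absorbing $C_{h,g}$, $C_g$, $T$, $L_k$ and numerical constants. The main obstacle is precisely this last passage — converting the data-dependent, cell-wise bounds that carry the random sizes $N_{A_t}$ into the single deterministic rate $N^{1-qd(T-1)}$, which is exactly what Condition 2 exists to enable — with a secondary technical point being the measure-theoretic identification of the adapted conditional kernel with the empirical measure of an i.i.d.\ sample from the true conditional on each partition cell.
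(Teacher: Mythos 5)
Your proposal is correct and follows essentially the same route as the paper: the same triangle-inequality decomposition around the empirical-outer-measure/true-kernel functional, the Lipschitz reduction of the second term to per-cell kernel discrepancies, per-cell Rademacher bounds justified by the identification of the adapted empirical kernel with an empirical measure of i.i.d.\ draws from the true conditional (which the paper obtains by citing \citet[Lemma 3.3]{backhoff2020estimating}), a union bound with $\delta$ replaced by $\delta/(NL_k(T-1)+1)$, and the Jensen step using concavity of $x r(x;\delta)$ and monotonicity of $r$ to reach the deterministic rate $N^{1-qd(T-1)}$. No gaps to report.
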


The generalization bound \eqref{eq:genbd} provides an estimate of the largest possible difference between the empirical risk, computed using $\hat{\mu}$, and the true risk, computed using $\mu$. This bound is closely related to Talagrand's inequality for empirical processes, see \citet[Theorem 2.1]{bartlett2005local}. Although the proof technique is general, the upper bound obtained by taking the supremum over $\theta \in \Theta_k$ may be loose, since the algorithm might only select functions from smaller sets.

In the following lemma, we fix the hypothesis set $\Theta_k$ and let the sample size go to infinity. Then we can obtain the infimum of the dual problem over the hypothesis set $\Theta_k$ and under the true measure $\mu$.

\begin{lemma}\label{lem:Ninf}
	Given $\Theta_k$, suppose assumptions in Lemma \ref{lem:Rad} hold and the Rademacher complexities $\cR( F( \Theta_k \circ \varphi_N); N)$ and $  \cR( \cG_k \circ \varphi_N; N^{1 - qd(T-1)})$ in Lemma \ref{lem:Rad} converge to zero when sample size $N \rightarrow \infty$. Then   
	\begin{align*}
		\lim_{N \rightarrow \infty} \inf_{\theta \in \Theta_k} D(\theta, \varepsilon; \hat{\mu} \circ \varphi^{-1}_N) = \inf_{\theta \in \Theta_k} D(\theta, \varepsilon; \mu),
	\end{align*}
	with probability one.
\end{lemma}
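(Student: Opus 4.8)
The plan is to prove the uniform estimate $\sup_{\theta\in\Theta_k}\big|D(\theta,\varepsilon;\hat\mu\circ\varphi_N^{-1})-D(\theta,\varepsilon;\mu)\big|\to0$ almost surely; the lemma then follows from the elementary bound $\big|\inf_\theta a(\theta)-\inf_\theta b(\theta)\big|\le\sup_\theta|a(\theta)-b(\theta)|$, all the relevant $D$-values being uniformly bounded over $\Theta_k$ because $f$, $c$, the $g_{l,t}$ and the $h_{l,t}$ are bounded and $\lambda\in[0,\lambda_k]$. Via the triangle inequality through the intermediate object $D(\theta,\varepsilon;\mu\circ\varphi_N^{-1})$, it is enough to show that $(\mathrm{I})\to0$ almost surely and $(\mathrm{II})\to0$, where
\begin{align*}
(\mathrm{I}) &:= \sup_{\theta\in\Theta_k}\big|D(\theta,\varepsilon;\hat\mu\circ\varphi_N^{-1})-D(\theta,\varepsilon;\mu\circ\varphi_N^{-1})\big|,\\
(\mathrm{II}) &:= \sup_{\theta\in\Theta_k}\big|D(\theta,\varepsilon;\mu\circ\varphi_N^{-1})-D(\theta,\varepsilon;\mu)\big|.
\end{align*}

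Term $(\mathrm{I})$ is handled by Lemma \ref{lem:Rad}. Working on a single i.i.d.\ sequence $(x^n)_{n\ge1}\sim\mu$ and applying Lemma \ref{lem:Rad} with confidence level $\delta_N:=N^{-2}$, the bound \eqref{eq:genbd} has $\ln[2(NL_k(T-1)+1)/\delta_N]=O(\ln N)$ in its last term; since $1-qd(T-1)$ equals $2/(T+1)$ when $d=1$ and $1/T$ when $d\ge2$, hence is strictly positive, that term is $O\big(\sqrt{\ln N/N^{1-qd(T-1)}}\big)\to0$, while the two Rademacher terms vanish by the hypothesis of the lemma. As $\sum_N\delta_N<\infty$, Borel--Cantelli shows that almost surely the event of \eqref{eq:genbd} holds for all large $N$, so $(\mathrm{I})\to0$ a.s.

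Term $(\mathrm{II})$ carries no randomness. The cells of $\Phi_{N,t}$ have vanishing diameter, so $\varphi_N\to\id$ uniformly on $\cX$; using $|\sup_y A(y)-\sup_y B(y)|\le\sup_y|A(y)-B(y)|$ inside the definition of $D$ and integrating against $\mu$ bounds $(\mathrm{II})$ by three errors: $\lambda_k$ times the modulus of continuity of $c$ at $\|\varphi_N-\id\|_\infty$; $C_{h,g}L_k(T-1)$ times the modulus of continuity of the family $\{g_{l,t}\}$ at $\|\varphi_N-\id\|_\infty$ (these families are uniformly continuous, being jointly continuous on the compact set $\mathcal{C}_k\times(\text{domain})$); and $C_{h,g}\sum_{l,t}\sup_{\theta}\big\|Q_N^{l,t}-Q^{l,t}\big\|_{L^1(\mu)}$, where $Q_N^{l,t}:=\E_\mu[g_{l,t}(\varphi_N(X))\mid\sigma(\Phi_{N,t})]$ and $Q^{l,t}:=\E_\mu[g_{l,t}(X)\mid X_{1:t}]$. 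The first two errors vanish as $N\to\infty$. For the third, peel off one further $\|\varphi_N-\id\|_\infty$-modulus error to replace $g_{l,t}(\varphi_N(X))$ by $g_{l,t}(X)$ inside the conditional expectation, and use the tower property (valid since $\sigma(\Phi_{N,t})\subseteq\sigma(X_{1:t})$) to reduce matters to showing $\E_\mu[Q^{l,t}\mid\sigma(\Phi_{N,t})]\to Q^{l,t}$ in $L^1(\mu)$, uniformly over the parameter $\theta'\in\mathcal{C}_k$. For a fixed $L^1$ function this is the standard fact that averaging over cells of vanishing diameter converges in $L^1$ (approximate in $L^1$ by a continuous function, then invoke its uniform continuity); uniformity over $\theta'$ follows because $\theta'\mapsto Q^{l,t}(\cdot;\theta')$ maps the compact $\mathcal{C}_k$ continuously into $L^1(\mu)$ while the conditional expectations $\E_\mu[\,\cdot\mid\sigma(\Phi_{N,t})]$ are $L^1$-contractions. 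Hence $(\mathrm{II})\to0$.

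Combining the two parts yields $\sup_{\theta\in\Theta_k}|D(\theta,\varepsilon;\hat\mu\circ\varphi_N^{-1})-D(\theta,\varepsilon;\mu)|\to0$ almost surely, which gives the lemma. I expect the main obstacle to be term $(\mathrm{II})$: one must argue that conditioning on the (generally non-nested) partitions $\Phi_{N,t}$ converges to conditioning on $X_{1:t}$, and that this, together with the $\sup_y$ and the Lemma \ref{lem:Rad} estimate used for $(\mathrm{I})$, can be made uniform in $\theta$ over $\Theta_k$. The key simplification is that, via the tower property, the partition-refinement effect collapses to averaging one fixed $L^1$ function over cells of vanishing diameter, while the partition map affects everything else only through the modulus of continuity of $c$ and of $g_{l,t}$ evaluated at $\|\varphi_N-\id\|_\infty$.
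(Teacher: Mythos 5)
Your proposal is correct, and it uses the same probabilistic engine as the paper (the uniform generalization bound of Lemma \ref{lem:Rad} with a summable confidence sequence plus Borel--Cantelli), but it arranges the argument differently. The paper does not prove uniform convergence of $D(\cdot,\varepsilon;\hat\mu\circ\varphi_N^{-1})$ to $D(\cdot,\varepsilon;\mu)$ over $\Theta_k$; instead it introduces the two minimizers $\theta^*_{k,N}$ and $\theta^*_k$, sandwiches $D(\theta^*_{k,N},\varepsilon;\hat\mu\circ\varphi_N^{-1})$ between $D(\theta^*_k,\varepsilon;\mu)\pm(\text{error})$ via the optimality of each minimizer together with \eqref{eq:genbd}, and then chooses an $\eta$-dependent $\delta_N$ that makes the deviation term exactly $\eta/2$ before applying Borel--Cantelli; you instead prove the stronger statement $\sup_{\theta\in\Theta_k}|D(\theta,\varepsilon;\hat\mu\circ\varphi_N^{-1})-D(\theta,\varepsilon;\mu)|\to 0$ a.s.\ with a fixed $\delta_N=N^{-2}$ and conclude via $|\inf_\theta a-\inf_\theta b|\le\sup_\theta|a-b|$, which is cleaner and avoids the $\eta$-calibrated confidence level. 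The more substantive difference is your treatment of the deterministic discrepancy $D(\theta,\varepsilon;\mu\circ\varphi_N^{-1})-D(\theta,\varepsilon;\mu)$: the paper simply asserts (from $\sup_u|u-\varphi_N(u)|\le c/N^q$) that this is at most $\eta/2$ at $\theta^*_k$ and at the sample- and $N$-dependent $\theta^*_{k,N}$ --- the latter implicitly requires exactly the uniform-in-$\theta$ control you establish --- whereas you supply the full argument: uniform moduli of continuity of $c$ and of the compactly parameterized families $g_{l,t}$, the tower property reducing the kernel term to $\E_\mu[Q^{l,t}\mid\sigma(\Phi_{N,t})]\to Q^{l,t}$ in $L^1(\mu)$ as the mesh vanishes (handling the non-nestedness of the partitions by density of continuous functions and the $L^1$-contraction property), and uniformity over $\mathcal{C}_k$ by compactness of the image in $L^1$. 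So your route buys a stronger uniform convergence statement and closes a step the paper leaves implicit, at the cost of a somewhat longer functional-analytic argument for the partition term.
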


Finally, let the hypothesis set be large enough to approximate $\Theta$, which follows from the universal approximation theorem of neural networks under compact domains \cite[Theorem 1]{cybenko1989}. Then we obtain the dual value with the true measure $\mu$ and $D(\varepsilon; \mu) = J(\varepsilon; \mu)$ by Corollary \ref{cor:dual}. Note that Theorem \ref{thm:conv} uses $\mu$ after the sample size has converged to infinity in Lemma \ref{lem:Ninf}. Since $\mu$ is in the causal Wasserstein ball with condition 3 in Corollary \ref{cor:dual}, then $J(\varepsilon; \mu) \geq \int_\cX f(x) \mu (dx)$, which provides an upper bound for the real risk.
\begin{theorem}\label{thm:conv}
	Suppose assumptions in Lemma \ref{lem:Ninf} hold. Moreover, for any bounded continuous function $g$, there exists a sequence of functions $g_k \in \Theta_k$ such that $g_k \rightarrow g$ pointwise when $k \rightarrow \infty$. Then
	\begin{align*}
		\lim_{k \rightarrow \infty}  \inf_{\theta \in \Theta_k} D (\theta, \varepsilon; \mu) = \inf_{\theta \in \Theta} D(\theta, \varepsilon; \mu) = D(\varepsilon; \mu) = J(\varepsilon; \mu).
	\end{align*}
\end{theorem}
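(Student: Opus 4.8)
The plan is to read the conclusion as a chain of three identities and notice that the last two are essentially free. First, $\inf_{\theta\in\Theta}D(\theta,\varepsilon;\mu)=D(\varepsilon;\mu)$ holds by the very definition of $D(\varepsilon;\mu)$ in \eqref{eq:minimax}, with $\mu$ in place of the adapted empirical measure and no partition map, since an element $\theta\in\Theta$ is exactly a choice of a test function $\gamma$ built from finitely many $h_{l,t},g_{l,t}$ together with a multiplier $\lambda\ge 0$, and $D(\theta,\varepsilon;\mu)=\lambda\varepsilon+\int_\cX F(x;\lambda,\gamma)\,\mu(dx)$. Second, $D(\varepsilon;\mu)=J(\varepsilon;\mu)$ follows from Corollary \ref{cor:dual} applied with $\mu$ in the role of $\hat\mu$: under Assumption \ref{assum:continuous} the domains are the compact cubes $\cX=\cY=[a,b]^{d\times T}$, Assumption \ref{assum:lsc} holds, and $\int_\cX c(x,x)\,\mu(dx)\le\varepsilon$ is among the hypotheses. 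Hence the only genuine content is the first limit, $\lim_{k\to\infty}\inf_{\theta\in\Theta_k}D(\theta,\varepsilon;\mu)=\inf_{\theta\in\Theta}D(\theta,\varepsilon;\mu)$, which I would prove by two inequalities.

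For the lower bound I would note that every network $g_{l,t}(\cdot;\theta')$, $h_{l,t}(\cdot;\theta'')$ in $\Theta_k$ is continuous in its input and bounded on the compact domain, hence lies in $C_b$, while $\lambda\in[0,\lambda_k]\subset[0,\infty)$ and $L_k<\infty$; thus $\Theta_k\subseteq\Theta$ for every $k$, so $\inf_{\theta\in\Theta_k}D(\theta,\varepsilon;\mu)\ge\inf_{\theta\in\Theta}D(\theta,\varepsilon;\mu)$ and the $\liminf$ is at least $\inf_{\theta\in\Theta}D(\theta,\varepsilon;\mu)$. For the upper bound I would fix $\eta>0$ and an $\eta$-optimal $\theta=(\{g_{l,t}\}_{l\le L,\,t\le T-1},\{h_{l,t}\},\lambda)\in\Theta$; since $L_k,\lambda_k\to\infty$ and $\mathcal{C}_k\uparrow\R^{m_k}$, for large $k$ the set $\Theta_k$ can host $L$ functions and the multiplier $\lambda$, and applying the approximation hypothesis componentwise gives $g_{l,t}^k\to g_{l,t}$ and $h_{l,t}^k\to h_{l,t}$ pointwise as $k\to\infty$, all uniformly bounded by $C_{h,g}$. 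Writing $\theta_k\in\Theta_k$ for the resulting parameter and $\gamma_k$ for its test function, it remains to show $\limsup_k\int_\cX F(x;\lambda,\gamma_k)\,\mu(dx)\le\int_\cX F(x;\lambda,\gamma)\,\mu(dx)$, as the $\lambda\varepsilon$ term is unchanged. Dominated convergence with dominating constant $C_{h,g}$ first yields $\int g_{l,t}^k\,\mu(dx_{t+1:T}\mid x_{1:t})\to\int g_{l,t}\,\mu(dx_{t+1:T}\mid x_{1:t})$ for every $x_{1:t}$, so the coefficients of $\gamma_k$ converge pointwise and $\gamma_k\to\gamma$ pointwise on $\cX\times\cY$ with a uniform bound; then one would push this through the inner supremum over $y$ and the outer $\mu$-integral via bounded/reverse-Fatou arguments. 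Letting $\eta\downarrow 0$ would give the matching $\limsup$ bound.

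The hard part is exactly that last passage: pointwise convergence $\gamma_k(x,\cdot)\to\gamma(x,\cdot)$ does not by itself force $F(x;\lambda,\gamma_k)=\sup_{y\in\cY}\{f(y)-\lambda c(x,y)+\gamma_k(x,y)\}$ to converge to $F(x;\lambda,\gamma)$, because a near-maximizing sequence $y_k$ may drift. My preferred remedy is to upgrade the approximation to be uniform on the compact domain — this is the form in which the universal approximation theorem \citet[Theorem 2]{hornik1991} applies there — so that $\gamma_k(x,\cdot)\to\gamma(x,\cdot)$ uniformly on $\cY$ for each fixed $x$; combined with the uniform continuity of $f$ and $c$ on the compact domain, this gives $\sup_{x\in\cX}|F(x;\lambda,\gamma_k)-F(x;\lambda,\gamma)|\to 0$, after which bounded convergence closes the outer integral. (An equicontinuity argument on $\{\gamma_k(x,\cdot)\}_k$ via Arzelà–Ascoli, or a Dini-type monotone argument, would be alternatives.) With this in hand, combining the two inequalities with the two easy identities of the first paragraph completes the proof of $\lim_{k\to\infty}\inf_{\theta\in\Theta_k}D(\theta,\varepsilon;\mu)=\inf_{\theta\in\Theta}D(\theta,\varepsilon;\mu)=D(\varepsilon;\mu)=J(\varepsilon;\mu)$.
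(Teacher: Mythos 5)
Your proposal follows essentially the same route as the paper: the inclusion $\Theta_k\subseteq\Theta$ gives one inequality, an $\eta$-optimal $\theta_\eta\in\Theta$ approximated by elements $\theta_k\in\Theta_k$ gives the other, and the remaining identities come from the definition of $D(\varepsilon;\mu)$ and Corollary \ref{cor:dual}. The only difference is at the limit $D(\theta_k,\varepsilon;\mu)\to D(\theta_\eta,\varepsilon;\mu)$: the paper simply invokes dominated convergence after noting that $F$ is the supremum of a jointly continuous function over the compact $\cY$, whereas you explicitly upgrade the approximation to uniform convergence on the compact domain (the form in which \citet[Theorem 2]{hornik1991} actually applies) so as to control the inner supremum over $y$ — a more careful treatment of the same step, and a legitimate reading of the paper's intended hypothesis.
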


In Section \ref{sec:finite} of the Appendix, we also provide a finite sample guarantee for SCOT.

\subsection{Regularized neural networks}\label{sec:RegNN}
	Indeed, many neural network architectures satisfy the assumptions imposed in  Lemma \ref{lem:Rad}, Lemma \ref{lem:Ninf}, and Theorem \ref{thm:conv}, subject to certain conditions on their parameters. A simple example is presented in this subsection.
	
	Under Assumption \ref{assum:continuous}, the input space $\cX_{1:t} = ([a, b]^d)^t$ for a generic $t \in [T]$. The family of one-layer regularized neural networks is defined as the following set of functions mapping $\cX_{1:t}$ to $\R$:
	\begin{equation}\label{eq:Ak}
	\begin{aligned}
		\cA_{k, t} := \Big\{  x_{1:t} \mapsto \sum^{n_k}_{j=1} w_j \phi(u_j \cdot x_{1:t} + \vartheta_j) \Big|\; & \sum^{n_k}_{j=1} |w_j| \leq C_{w, k}, \, \sum^{n_k}_{j=1} |\vartheta_j| \leq C_{\vartheta, k}, \\
		& \|u_j\|_2 \leq C_{u, k}, \, \forall \, j \in [n_k] \Big\},
	\end{aligned}
	\end{equation} 
	where $\phi$ is an $L_\phi$-Lipschitz function. Moreover, $\phi$ is sigmoidal, satisfying $\phi(a) \rightarrow 1$ as $a \rightarrow \infty$ and $\phi(a) \rightarrow 0$ as $a \rightarrow -\infty$; see \cite{cybenko1989}. $w_j \in \R$, $u_j \in \R^{d \times t}$, and $\vartheta_j \in \R$ are parameters of the network, each bounded by universal constants as specified in \eqref{eq:Ak}. Importantly, there exists a constant $L_\cA$, such that all functions in $\cA_{k, t}$ are $L_\cA$-Lipschitz in $x_{1:t}$. By the universal approximation theorem \cite[Theorem 1]{cybenko1989}, as the constants $n_k$, $C_{w, k}$, $C_{\vartheta, k}$, and $C_{u, k}$ tend to infinity, the family of neural networks in \eqref{eq:Ak} can uniformly approximate any continuous function within a compact domain.
	
	We set the hypothesis set $\Theta_k$ in \eqref{eq:thetak} as
	\begin{align}
		\Theta_k := \big\{& g_{l, t} \in \cA_{k, T}, \; h_{l, t} \in \cA_{k, t}, \; l \in [L_k], \; t \in [T-1], \; \lambda \in [0, \lambda_k] \big\}. \nonumber
	\end{align}
	
	\begin{lemma}\label{lem:Rad_Gk}
		Rademacher complexity of $\cG_k \circ \varphi_N$, with $\cG_k = \cA_{k, T}$, satisfies
		\begin{align}
			\cR (\cG_k \circ \varphi_N; N) \leq \frac{C_{w, k} L_\phi (C_{\vartheta, k} + C_{u, k} C_{b, a, T, d}) }{\sqrt{N}},
		\end{align}
		where $C_{b, a, T, d}$ is a constant that depends on $b, a, T$, and $d$ only.
	\end{lemma}

The following assumption is similar to \citet[Assumption 1.4]{backhoff2020estimating} and \citet[Definition 2.10]{acciaio2022convergence}. A specific example can be found in \citet[Example 1.9]{backhoff2020estimating}.
\begin{assumption}\label{assum:Lip_kernel}
	There is a version of the disintegration such that for every $1 \leq t \leq T-1$ the mapping 
	$x_{1:t} \mapsto \mu(dx_{t+1:T}|x_{1:t}) \in \cP(\cX_{t+1:T})$ is $L_\mu$-Lipschitz, where $\cP(\cX_{t+1:T})$ is endowed with the classic Wasserstein-1 distance. Moreover, $\mu$ is the law of a stochastic process $X_{1:T}$ with the following dynamic:
	\begin{equation}
		X_{t+1:T} = F_{t+1}(X_{1:t}, \varepsilon),
	\end{equation}
	for $t = 1, \ldots, T-1$ with arbitrary $X_1$. The random variable $\varepsilon$ is independent of $X_{1:t}$. Denote the law of $\varepsilon$ as $\psi$. $F_{t+1}$ is a deterministic function with suitable dimensions.
\end{assumption}

Clearly, there exist constants $C_g$, $C_h$, and $C_c$, such that $|g_{l, t}(x_{1:T})| \leq C_g$, $|h_{l, t}(y_{1:t})| \leq C_h$, and $|c(x_{1:T}, y_{1:T})| \leq C_c$. One can also choose $C_g = C_h =: C_{h, g}$, but we keep them different to make the result more clear. The following lemma gives an estimate of $\cR( F( \Theta_k \circ \varphi_N); N)$.
\begin{lemma}\label{lem:Rad_F}
	Suppose Assumptions \ref{assum:continuous} and \ref{assum:Lip_kernel} hold. Then
	\begin{equation}
		\begin{aligned}
			& \cR( F( \Theta_k \circ \varphi_N); N) \\
			& \leq \frac{C_c \lambda_k + 2L_k (T-1)(C_g + C_h) C_{w, k} L_\phi (C_{\vartheta, k} + C_{u, k} C_{b, a, T, d}) }{\sqrt{N}} \\
			& \quad + \frac{ L_k(T-1)C_h(L_\cA L_\mu + L_\cA) C_{b, a, T, d}}{N^q},
		\end{aligned}
	\end{equation}
where $C_{b, a, T, d}$ is a constant that depends on $b, a, T$, and $d$ only, but can be different with the one in Lemma \ref{lem:Rad_Gk}.
\end{lemma}

\begin{theorem}\label{thm:Reg}
	Suppose Assumptions \ref{assum:continuous} and \ref{assum:Lip_kernel} hold. The constants $n_k$, $C_{w, k}$, $C_{\vartheta, k}$, $C_{u, k}$, $L_k$, and $\lambda_k$ approach infinity when $k \rightarrow \infty$. Then the claims in Lemma \ref{lem:Rad}, \ref{lem:Ninf} and Theorem \ref{thm:conv} hold.
\end{theorem}

From the proof of our previous results, particularly Lemma \ref{lem:Rad_F}, we observe a distinct advantage of Rademacher complexity. Compared with other measures of complexity, such as covering numbers and Vapnik-Chervonenkis (VC) dimension, Rademacher complexity effectively leverages the structure inherent in the function class. This includes properties like the Lipschitz continuity of activation functions and the linear layers within neural networks \cite[Theorem 12]{bartlett2002rademacher}. Hence, the proof yields explicit estimates of Rademacher complexity. Besides, Rademacher complexity is data-dependent, allowing for numerical estimation methods even when explicit estimates are unavailable \cite[Theorem 11]{bartlett2002rademacher}. \cite{klkesk2012comparison} noted that Rademacher complexity provides better bounds across all experimental conditions in classification problems involving polynomials.

\subsection{Optimization algorithms}\label{sec:algo}

%%%%%%%%% Optimize COT %%%%%%%%%%%%%
In this subsection, we present the numerical algorithms for the dual problems \eqref{eq:minimax} [or \eqref{eq:dual-mart}] and \eqref{eq:struct-dual}, which are closely related to minimax optimization. Suppose the objective $f$ and the cost $c(x,y)$ are differentiable, which allows us to apply gradient descent ascent (GDA) methods \citep{jin2020local,lin2020gradient}.

First, consider the optimization algorithm for the dual problem of COT and OT. It is easier to implement the test function space $\Gamma'$ with martingales. We model $h_{l, t}$ and $M_{l, t}$ using neural networks, as described in more detail in the Appendix. To impose the martingale condition on $M_{l, t}$, we adopt the same method in \cite{xu2020cot} and penalize $M_{l, t}$ by the following term:
\begin{equation*}
	p(M; \varphi_N) := \frac{1}{NT} \sum^L_{l = 1} \sum^{T-1}_{t=1} \Big| \sum^N_{n=1} \frac{M_{l, t+1}(\varphi_N(x^n_{1:t+1})) - M_{l,t}(\varphi_N(x^n_{1:t})) }{\sqrt{\text{Var}[M_l]} + \eta} \Big|,
\end{equation*}
where $N$ is the sample size of $\hat{\mu}$, $\text{Var}[M_l]$ is the empirical variance of $M_l$, and $\eta > 0$ is a small constant to avoid dividing by zero. This term penalizes $M_{l, t}$ if it deviates from the martingale property when the changes at every time step are non-zero on average.  

In summary, the COT optimization problem with martingale penalization is
\begin{align}\label{eq:COTopt}
	 \inf_{ \lambda \geq 0, \; \gamma' \in \Gamma'} \lambda \varepsilon + \int \sup_{y \in \cY}  \big\{ f(y) - \lambda c(x, y) + \gamma'(x, y) \big\} \hat{\mu}(dx_{1:T}| N, \varphi_N) + \xi p(M; \varphi_N),
\end{align}
and $\xi > 0$ is a weighting constant. Note that the minimization is over $\lambda$ and parameters of $\gamma'$, which is a finite-dimensional problem. We propose Algorithm \ref{algo:cot} to evaluate inner maximization and outer minimization alternatively, which is a special case of the widely used GDA algorithm \citep{jin2020local,lin2020gradient}. Experimentally, multiple inner maximization steps yield more stable results, while the computational burden is higher.
\begin{algorithm}
	\begin{algorithmic}[1]
		\STATE {\bf Input:} Objective $f$, network $\gamma'(x, y)$, initial $\lambda > 0$
		\FOR{$O$ steps}
		\STATE Sample $x$ and construct the adapted empirical measure $\hat{\mu}(dx_{1:T}| N, \varphi_N)$
		\FOR{$I$ steps}
		\STATE Perform a gradient ascent step on \eqref{eq:COTopt} over $y$ and project $y$ into $\cY$
		\ENDFOR
		\STATE Perform a gradient descent step on \eqref{eq:COTopt}  over $\lambda$ and parameters of $\gamma'$. Truncate parameters of $\gamma'$
		\ENDFOR
		\STATE {\bf Output:} network $\gamma'$, $y$, constant $\lambda$, and dual value
	\end{algorithmic}
	\caption{}
	\label{algo:cot}
\end{algorithm}

%%%%%%%%%% Optimize SCOT %%%%%%%%%%%%%

For SCOT, suppose $\nu \in \cV$ is a parameterized model, which we refer to as a generator of alternative measures. Unlike in the COT case, where we take the supremum over $y$ in \eqref{eq:COTopt}, the SCOT dual problem \eqref{eq:struct-dual} requires evaluating the Wasserstein distance, which can be expensive to compute exactly. Various optimization methods can reduce the computational burden and approximate the Wasserstein distance $\cW(\hat{\mu}, \nu; \lambda, \gamma)$ in \eqref{eq:struct-dist}.

One popular approach, known as Sinkhorn's algorithm \citep{cuturi2013sinkhorn,genevay2016stochastic,xu2020cot}, is used when both $\hat{\mu}$ and $\nu$ are discrete measures. For a transport plan $\pi$, denote the entropy as $S(\pi)= - \sum_{i,j} \pi_{ij} \ln \pi_{ij}$ for the discrete case with $0 \ln 0 = 0$ and $S(\pi)= - \int_{\cX \times \cY} \ln \pi(dx, dy) \pi(dx, dy)$ for the continuous case. \cite{cuturi2013sinkhorn} proposes to regularize the optimal transport problem with the entropy term and consider
\begin{equation*}
	\inf_{ \pi \in \Pi(\hat{\mu}, \nu)} \Big\{ \int_{\cX \times \cY} \big[ - f(y) + \lambda c(x, y) - \gamma(x, y) \big] \pi(dx, dy)  - \lambda \tau S(\pi)  \Big\}
\end{equation*}
with a parameter $\tau > 0$. The entropy-regularized optimal transport problem can be efficiently solved using Sinkhorn's algorithm. Denote the corresponding minimizer as $\pi^*(dx, dy;\tau)$ and the corresponding regularized distance as
\begin{equation}\label{eq:sink}
	\cW(\hat{\mu}, \nu; \lambda, \gamma, \tau) := \int_{\cX \times \cY} \big[ - f(y) + \lambda c(x, y) - \gamma(x, y) \big] \pi^*(dx, dy; \tau). 
\end{equation}
Finally, with the martingale regularization term on $M$, the SCOT optimization problem becomes
\begin{equation}\label{eq:reg_obj}
	\sup_{\nu \in \cV} \inf_{ \lambda \geq 0, \; \gamma' \in \Gamma'} \lambda \varepsilon - \cW(\hat{\mu}, \nu; \lambda, \gamma', \tau) + \xi p(M; \varphi_N).
\end{equation}

Algorithm \ref{algo:SCOT} presents the optimization method for \eqref{eq:reg_obj}. The only difference between Algorithms \ref{algo:cot} and \ref{algo:SCOT} is the maximization over $y$ is replaced with the Wasserstein distance calculated by Sinkhorn's algorithm.

\begin{algorithm}
	\begin{algorithmic}[1]
			\STATE {\bf Input:} Objective $f$, network $\gamma'(x, y)$, generator $\nu$, initial $\lambda > 0$
			\FOR{$I$ steps}
			\STATE Sample $x$ and construct the adapted empirical measure
			\STATE Simulate sample $y \sim \nu$ and construct the adapted empirical measure
			\STATE Calculate \eqref{eq:sink} with Sinkhorn's algorithm
			\STATE Perform a gradient descent step on \eqref{eq:reg_obj} over $\lambda$ and $\gamma'$ and truncate parameters
			\STATE Perform a gradient ascent step on \eqref{eq:reg_obj} over $\nu$ and truncate parameters 
			\ENDFOR
			\STATE {\bf Output:} network $\gamma'$, generator $\nu$, constant $\lambda$, and dual value
		\end{algorithmic}
	\caption{}
	\label{algo:SCOT}
\end{algorithm}

If $\nu$ is continuous and $\hat{\mu}$ is discrete, an average stochastic gradient descent algorithm proposed by \cite{genevay2016stochastic} can be used to replace Sinkhorn's algorithm in Algorithm \ref{algo:SCOT}.

The optimization problems \eqref{eq:COTopt} and \eqref{eq:reg_obj} are non-convex and non-concave in the minimization and maximization steps. Convergence analysis is still an open problem, but it has received great attention in recent years \citep{jin2020local,lin2020gradient}. Experimentally, we find that the algorithms converge to some stationary points when appropriate learning rates and initial values are chosen.

\section{Numerical analysis}\label{sec:num}

Portfolio selection is a fundamental problem in mathematical finance. A straightforward diversification rule allocates a fraction $1/d$ of total wealth to each of the $d$ assets available for investment at each rebalancing date. We refer to this strategy as the naive strategy. \cite{demiguel2009} discovered that many advanced strategies, such as the sample-based mean-variance model, fail to outperform the naive portfolio in terms of Sharpe ratio during out-of-sample testing. \cite{pflug2012naive} demonstrated that in a single-period scenario, the naive strategy is optimal when an investor is faced with a sufficiently high degree of model uncertainty quantified by the Wasserstein distance, i.e. when the Wasserstein ball radius $\varepsilon \rightarrow \infty$.

In subsequent sections, we adopt the naive strategy as our benchmark and investigate whether a distributionally robust mean-variance portfolio using our SCOT method can yield superior performance. The code is publicly available at \url{https://github.com/hanbingyan/SCOT_examples}.

\subsection{Worst-case returns estimation of the naive strategy}

First, we aim to estimate the worst out-of-sample performance of the naive strategy. We select five stocks with tickers 'MMM', 'MSFT', 'JPM', 'AMZN', and 'XOM'. The weekly price data span a five-year period from July 15, 2019, to July 15, 2024, downloaded from Yahoo Finance. Treasury rates serve as the risk-free interest rate, obtained from the U.S. Department of the Treasury website. We denote the random vector $X_t$ as the weekly return of the five stocks during week $t$. Consider a four-week period by setting $T=4$. Denote the reference measure of $X_{1:4}$ as $\hat{\mu}$. Assuming weekly rebalancing for the naive strategy, the non-robust estimation of the 4-week expected return is expressed as
\begin{equation}\label{eq:nonrob_rtn}
	\int_{\cX} \prod^4_{t=1} \left(1 + \frac{\sum^5_{i=1} x_{t, i}}{5}\right) \hat{\mu}(d x) - 1.
\end{equation}

In practice, the reference measure $\hat{\mu}$ represents the weekly historical returns. However, due to the finite sample size and nonstationarity, the non-robust estimate \eqref{eq:nonrob_rtn} may not accurately represent the expected return in the next four-week period. Consequently, when the investor is averse to model uncertainty, she seeks to estimate the worst return of the naive strategy. For the OT and SCOT frameworks, we adopt the $L_1$ distance $c(x, y) = |x - y|$. The worst return using the OT method is given by
\begin{equation}\label{eq:OT_rtn}
	\inf_{\nu: \cW(\hat{\mu}, \nu) \leq \varepsilon} \int_\cY \prod^4_{t=1} \left(1 + \frac{\sum^5_{i=1} y_{t, i}}{5}\right) \nu(dy) - 1.
\end{equation}

In contrast, the SCOT method considers
\begin{equation}\label{eq:SCOT_rtn}
	\inf_{\nu \in \cB_\varepsilon \cap \cV} \int_\cY \prod^4_{t=1} \left(1 + \frac{\sum^5_{i=1} y_{t, i}}{5}\right) \nu(dy) - 1.
\end{equation}

\begin{figure}[H]
	\centering
	\begin{minipage}{0.48\textwidth}
		\centering
		\includegraphics[width=0.95\textwidth]{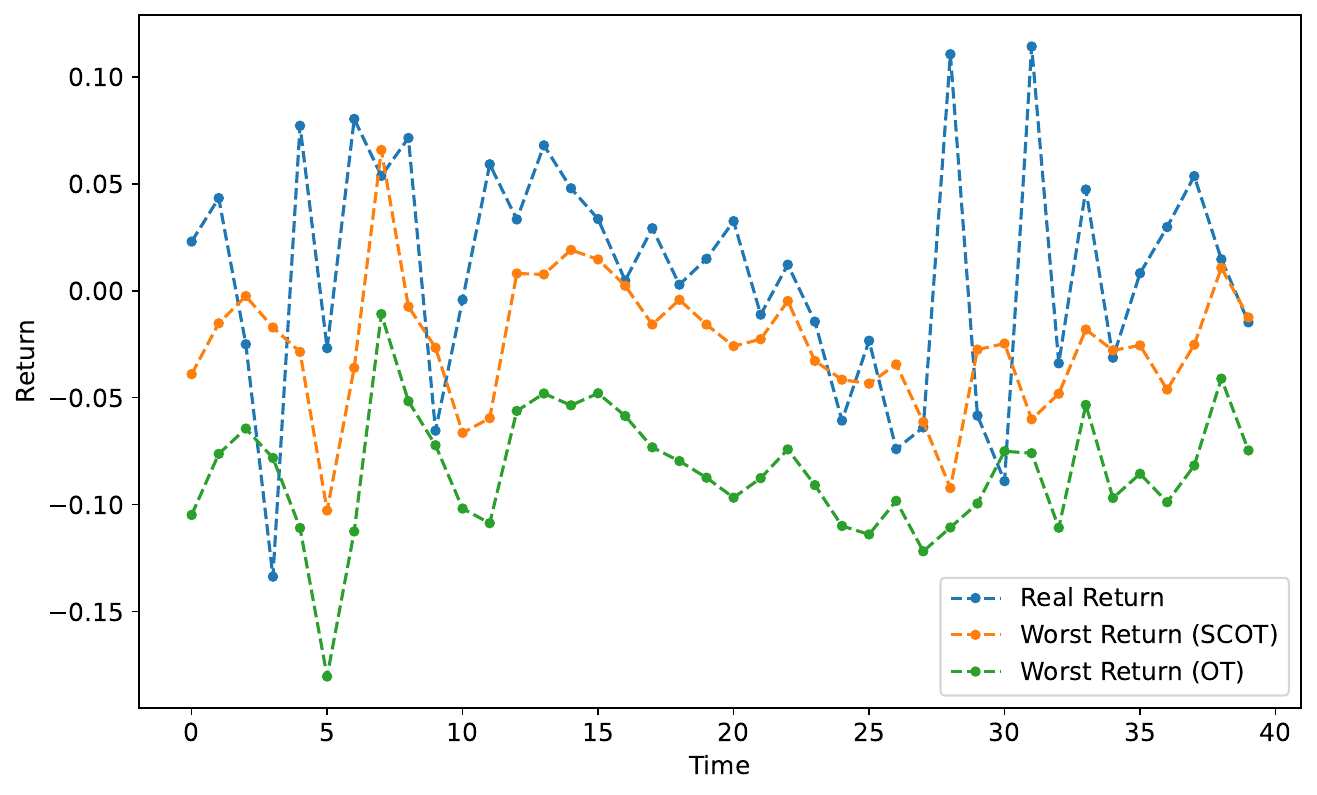}
		\subcaption{Radius $\varepsilon=0.05$}\label{fig:005}
	\end{minipage}
	\begin{minipage}{0.48\textwidth}
		\centering
		\includegraphics[width=0.95\textwidth]{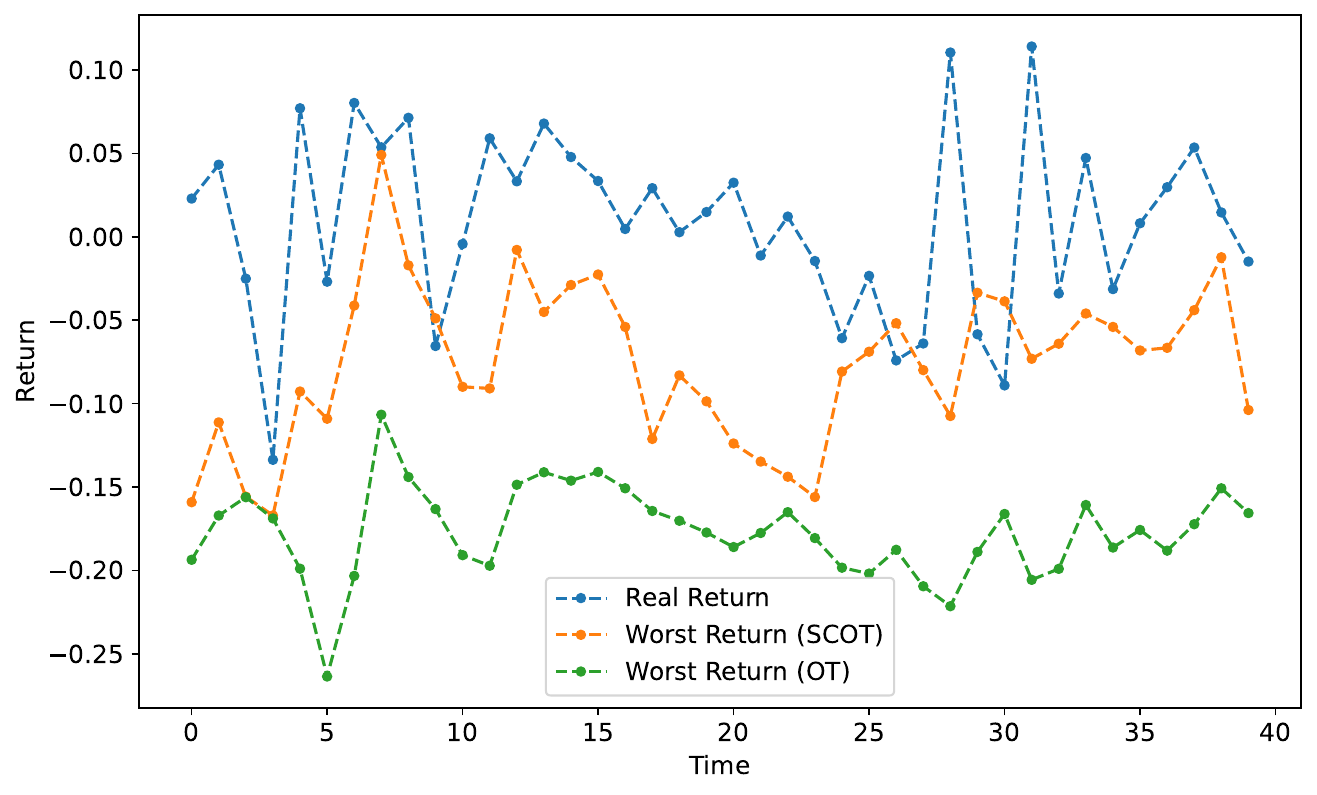}
		\subcaption{Radius $\varepsilon=0.1$}\label{fig:01}
	\end{minipage}%
	\begin{minipage}{0.48\textwidth}
		\centering
		\includegraphics[width=0.9\textwidth]{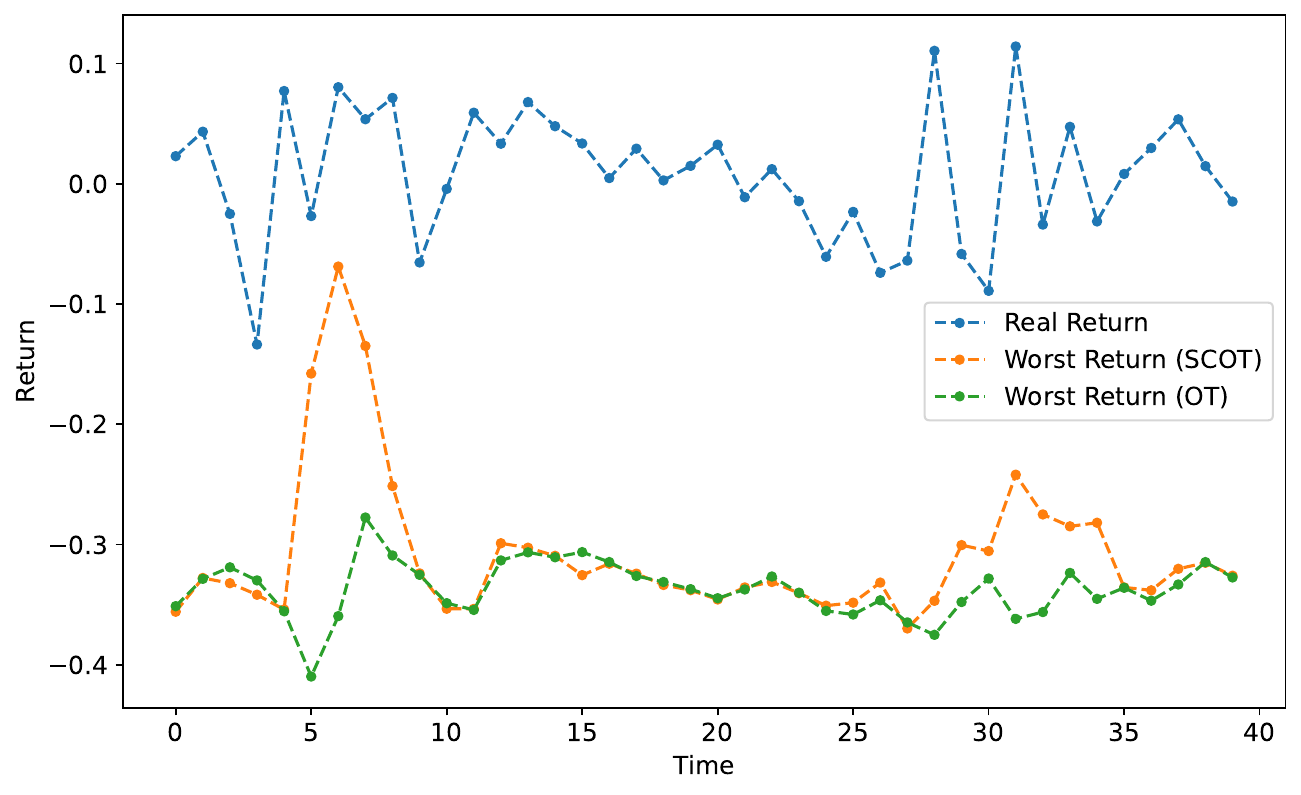}
		\subcaption{Radius $\varepsilon=0.2$}\label{fig:02}
	\end{minipage}%
	\caption{Worst-case returns estimation of the naive strategy}\label{fig:RtnEst}
\end{figure}

We define $\cV$ as a set of alternative distributions represented by Long Short-Term Memory (LSTM) networks. In contrast to the OT method \eqref{eq:OT}, which identifies the worst-case $\nu^*$ pointwise and pathwise for each given $x$, LSTM is less conservative and preserves the temporal structure. Figure \ref{fig:RtnEst} illustrates real out-of-sample returns from the naive strategy alongside worst-case returns estimated by OT and SCOT methods. Under different values of the radius $\varepsilon$, the OT method significantly underestimates actual returns. In contrast, the SCOT method is much less conservative, while it occasionally overestimates the real returns. Under high model uncertainty with $\varepsilon = 0.2$, Figure \eqref{fig:02} demonstrates that SCOT predictions align more closely with OT predictions, yet SCOT continues to provide tighter worst-case estimates at several points. Additionally, some SCOT predictions in Figure \eqref{fig:02} appear slightly lower than their OT counterparts, which seems to contradict the theoretical result stating that \eqref{eq:SCOT_rtn} should not be smaller than \eqref{eq:OT_rtn}. This discrepancy arises from estimation errors, as \eqref{eq:SCOT_rtn} is computed using only a finite sample from $\nu$.

Based on these findings from Figure \ref{fig:RtnEst}, it is anticipated that the OT method is overly conservative and may overlook certain investment opportunities, a hypothesis explored further in the subsequent subsection.

\subsection{Distributionally robust mean-variance portfolios with SCOT}

Suppose the initial wealth of the investor is \$1. For simplicity, we assume the investor uses the same investment weight vector $\alpha$ for weekly rebalancing. The terminal wealth $W_N(\alpha)$ after $N$ weeks is given by
\begin{equation}
	W_N(\alpha) = \prod^N_{t = 1} \left( 1 + \alpha^\top x_t \right),
\end{equation}
where $x_t$ denotes the returns of five assets in week $t$. Mean-variance portfolio selection determines the optimal investment weight by considering
\begin{equation}\label{eq:MV}
	\inf_{\alpha: \alpha^\top \mathbf{1} = 1} \text{Var}_{\hat{\mu}} [W_N(\alpha)] - \zeta \E_{\hat{\mu}}[W_N(\alpha)],
\end{equation}
where $(x_1, \ldots , x_N) \sim \hat{\mu}$ represents the empirical measure from historical data, and $\zeta \geq 0$ indicates the level of risk tolerance. To reformulate problem \eqref{eq:MV} in a linear form with respect to $\hat{\mu}$, we introduce an auxiliary variable $a \in \R$ as the mean value of terminal wealth. This allows us to rewrite the optimization problem \eqref{eq:MV} as
\begin{equation}\label{MV_recast}
	\inf_a \inf_{\alpha: \alpha^\top \mathbf{1} = 1} \E_{\hat{\mu}}\left[ (W_N(\alpha) - a)^2  - \zeta W_N(\alpha)  \right].
\end{equation}

Since $\hat{\mu}$ deviates from the true measure $\mu$, we enhance robustness by employing the OT and SCOT methods. The OT method examines
\begin{equation}\label{OT_MV}
	\inf_a \inf_{\alpha: \alpha^\top \mathbf{1} = 1} \sup_{\nu: \cW(\hat{\mu}, \nu) \leq \varepsilon} \E_{\nu}\left[ (W_N(\alpha) - a)^2  - \zeta W_N(\alpha) \right],
\end{equation}
where $\cW(\hat{\mu}, \nu)$ is defined in \eqref{eq:OT-dist} using the $L_1$ distance.

The SCOT method considers 
\begin{equation}\label{SCOT_MV}
	\inf_a \inf_{\alpha: \alpha^\top \mathbf{1} = 1} \sup_{\nu \in \cB_\varepsilon \cap \cV} \E_{\nu}\left[ (W_N(\alpha) - a)^2  - \zeta W_N(\alpha)  \right],
\end{equation}
where the causal Wasserstein ball $\cB_\varepsilon$ is also defined with the $L_1$ distance, and $\cV$ is modeled by the LSTM networks. The inner maximization problem in \eqref{OT_MV} and \eqref{SCOT_MV} can be addressed using the algorithms detailed in Section \ref{sec:algo} for each given pair $(a, \alpha)$. We then minimize over $(a, \alpha)$ in the outer loop and iterate between outer minimization and inner maximization. Experimental results demonstrate convergence of the algorithms.

Following \cite{demiguel2009}, we adopt the Sharpe ratio as the criterion to evaluate strategy performance. Table \ref{tab:Naive} reports the mean and standard deviation (STD) of the monthly excess return (portfolio return minus the risk-free rate) achieved by the naive strategy over a 20-month period. As a benchmark, the naive strategy achieves an annualized Sharpe ratio of 0.6633. 

Another benchmark is the non-robust optimal strategy obtained from \eqref{MV_recast}. Each month, we determine the optimal weight by solving \eqref{MV_recast} using historical data from the most recent 10 months. The results in Table \ref{tab:Nonrobust}, based on out-of-sample tests spanning the same 20 months as Table \ref{tab:Naive}, indicate that the non-robust strategy exhibits higher mean returns and STD compared to the naive strategy, ultimately resulting in higher Sharpe ratios across various levels of risk tolerance.
 
\begin{table}[H]
	\small
	\centering
	\begin{tabular}{ccc}
		\hline
		Excess Return Mean	& Excess Return STD	& Annualized Sharpe Ratio \\
		\hline
		0.01	&   0.052	 &   0.6633 \\
		\hline
	\end{tabular}
	\caption{Return statistics of the naive strategy}\label{tab:Naive}
\end{table}

\begin{table}[H]
	\small
	\centering
	\begin{tabular}{cccc}
		\hline
		Risk tolerance $\zeta$	& Excess Return Mean	& Excess Return STD	& Annualized Sharpe Ratio \\
		\hline
		0.0	& 0.0199 &	0.0624	& 1.1033 \\
		0.1	& 0.0242 &	0.0610	& 1.3752 \\
		1.0	& 0.0224 &	0.0582 &	1.3342 \\
		\hline
	\end{tabular}
	\caption{Return statistics of the non-robust strategy}\label{tab:Nonrobust}
\end{table}

Tables \ref{tab:OT} and \ref{tab:SCOT} present a comparison of OT and SCOT strategies across different combinations of risk tolerance $\zeta$ and radius $\varepsilon$. In eight of the nine cases examined, SCOT strategies yield a higher Sharpe ratio than OT strategies. Moreover, SCOT strategies outperform both non-robust and naive approaches for smaller radii. Notably, both OT and SCOT strategies converge towards the naive strategy as the radius is large ($\varepsilon = 0.2$), with this effect being more pronounced for higher risk tolerance levels (larger $\zeta$), i.e. the investor is less risk averse. However, a theoretical justification for this convergence to the naive strategy in a multi-period setting, along with an analysis of convergence rates under varying risk tolerance levels, remains an open question.

As a side note, applying the OT formulation in \eqref{OT_MV} with the same class $\cV$ improves performance relative to the original \eqref{OT_MV}, but still yields lower average Sharpe ratios compared to the SCOT method in \eqref{SCOT_MV}. For reproducibility, detailed instructions are provided in our code repository and omitted here for brevity. Since our problem is multi-period, the causality constraint is imposed. In contrast, in a single-period setting, \citet[Section 4.2]{blanchet2022structural} incorporates structural information into the cost function via implied volatility, and no causality condition is required.

\begin{table}[H]
	\small
	\centering
	\begin{tabular}{ccccc}
		\hline
		Risk tolerance $\zeta$ & Radius $\varepsilon$ & Excess Return Mean	& Excess Return STD	& Annualized Sharpe Ratio \\
		\hline
		0.0	 & 0.05	& 0.0102 &	0.0523	& 0.6751 \\
		0.0	& 0.10	& 0.0117 &	0.0540	& 0.7486 \\
		0.0	& 0.20 & 0.0177	&   0.0523  & 1.1725 \\
		0.1	& 0.05 & 0.0121	&   0.0512  & 0.8189 \\
		0.1	& 0.10 & 0.0119	&   0.0532 &  0.7778 \\
		0.1	& 0.20 & 0.0205	&   0.0495 &  1.4362 \\
		1.0	& 0.05 & 0.0099	&   0.0518 &  0.6650 \\
		1.0	& 0.10 & 0.0101	&   0.0523 &  0.6688 \\
		1.0	& 0.20 & 0.0116	&   0.0522 &  0.7708 \\
		\hline
	\end{tabular}
	\caption{Return statistics of the OT strategy}\label{tab:OT}
\end{table}

\begin{table}[H]
	\small
	\centering
	\begin{tabular}{ccccc}
			\hline
			Risk tolerance $\zeta$ & Radius $\varepsilon$ & Excess Return Mean	& Excess Return STD	& Annualized Sharpe Ratio \\
			\hline
			0.0	&  0.05	&  0.0208	&  0.0561	& 1.2829 \\
			0.0	&  0.10	& 0.0214	& 0.0559	& 1.3269  \\
			0.0 &  0.20	& 0.0192	& 0.0566	& 1.1750 \\
			0.1	&  0.05	& 0.0351	& 0.0484	& 2.5165 \\
			0.1	&  0.10	& 0.0295	& 0.0492	& 2.0754 \\
			0.1	&  0.20	& 0.0104	& 0.0560	& 0.6449 \\
			1.0	&  0.05	& 0.0263	& 0.0499	& 1.8234 \\
			1.0	&  0.10	& 0.0177	& 0.0531	& 1.1564 \\
			1.0	&  0.20	& 0.0160	& 0.0573	& 0.9652 \\
			\hline
		\end{tabular}
	\caption{Return statistics of the SCOT strategy}\label{tab:SCOT}
\end{table}

\section{Conclusion}\label{sec:conclude}

This work incorporates a causality constraint and structural information in distributionally robust risk evaluation. There are several open questions remain to be addressed. First, after approximating the dual problems of SCOT and COT with neural networks, these problems belong to the general non-convex non-concave minimax optimization, and convergence analysis is less understood. Second, the quantization method in \cite{backhoff2020estimating} suffers from the curse of dimensionality, which may be impractical for high-dimensional data. Future work could explore solutions to these problems.

\section*{Acknowledgment}
The author expresses gratitude to the anonymous referees and editors for their valuable comments and suggestions that have greatly improved this manuscript. Additionally, the author thanks seminar participants at Nanyang Technological University, University of Michigan, The Hong Kong University of Science and Technology (Guangzhou), and The Chinese University of Hong Kong (Shenzhen) for their helpful comments during presentations of this work. This work was partially conducted when the author was a postdoctoral researcher in the Department of Mathematics at the University of Michigan. He expresses gratitude to the University of Michigan for providing support and an atmosphere conducive to this work.

\section*{Conflicts of interest}
The author declares no conflicts of interest.

\section*{Data availability statement}
The data can be downloaded from Yahoo Finance at \url{https://finance.yahoo.com} and the U.S. Department of the Treasury website at \url{https://home.treasury.gov/policy-issues/financing-the-government/interest-rate-statistics}. Our code is publicly available at \url{https://github.com/hanbingyan/SCOT_examples}.

%\bibliographystyle{apalike}
%\bibliography{ref.bib}

\appendix

\section{Proofs of Section \ref{sec:dual}}
 
We recall Sion's minimax theorem from \cite{sion1958general}. It is sufficient to require one of $X$ and $Y$ is compact, see \citet[Corollary 3.3]{sion1958general}.
	\begin{theorem}[Sion's minimax theorem]
		Let $X$ be a compact convex subset of a linear topological space and $Y$ a convex subset of a linear topological space. If $f$ is a real-valued function on $X \times Y$ with
		\begin{itemize}
			\item [(1)] $f(x, \cdot)$ upper semicontinuous and quasi-concave on $Y$, $\forall x \in X$, and
			\item[(2)] $f(\cdot, y)$ lower semicontinuous and quasi-convex on $X$, $\forall y \in Y$,
		\end{itemize}
		then
		\begin{equation*}
			\min_{x \in X} \sup_{y \in Y} f(x, y) = \sup_{y \in Y} \min_{x \in X} f(x, y).
		\end{equation*}
	\end{theorem}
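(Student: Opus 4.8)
The plan is to prove the two classical inequalities separately. The inequality $\sup_{y\in Y}\min_{x\in X} f(x,y)\le \min_{x\in X}\sup_{y\in Y} f(x,y)$ is immediate: for any $x_0\in X$ and $y_0\in Y$ one has $\min_x f(x,y_0)\le f(x_0,y_0)\le\sup_y f(x_0,y)$, and taking the supremum over $y_0$ and then the infimum over $x_0$ gives the bound; the outer minimum is attained because $x\mapsto\sup_y f(x,y)$ is lower semicontinuous (a supremum of lower semicontinuous functions) on the compact set $X$. It remains to prove $\min_x\sup_y f(x,y)\le \sup_y\min_x f(x,y)=:v$, and we may assume $v<\infty$ (otherwise there is nothing to show).

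Fix any $c>v$. For $y\in Y$ set $C_y:=\{x\in X: f(x,y)\le c\}$. By lower semicontinuity of $f(\cdot,y)$ each $C_y$ is closed, hence compact, and by quasi-convexity of $f(\cdot,y)$ each $C_y$ is convex. A point $x^*\in\bigcap_{y\in Y} C_y$ would satisfy $\sup_y f(x^*,y)\le c$, hence $\min_x\sup_y f(x,y)\le c$; since $c>v$ is arbitrary, this yields $\min_x\sup_y f(x,y)\le v$. By the finite intersection property on the compact space $X$, it therefore suffices to show that every finite intersection $\bigcap_{i=1}^n C_{y_i}$ is nonempty, i.e.
\[
\min_{x\in X}\ \max_{1\le i\le n} f(x,y_i)\ \le\ c \qquad\text{for all } y_1,\dots,y_n\in Y,\ c>v .
\]
I would prove this finite statement by induction on $n$. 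For $n=1$ it is just $\min_x f(x,y_1)\le v<c$. For the inductive step the essential case is $n=2$, after which general $n$ follows by a careful induction that merges two of the points at each stage, following Sion's or Komiya's treatment. So suppose, toward a contradiction, that $\min_x\max\{f(x,y_1),f(x,y_2)\}>c$. Then $A:=C_{y_1}$ and $B:=C_{y_2}$ are compact convex sets, each nonempty (since $\min_x f(x,y_i)\le v<c$) and disjoint (that is exactly the contradiction hypothesis). Along the segment $y_t:=(1-t)y_1+t y_2$, $t\in[0,1]$, each $A_t:=C_{y_t}$ is nonempty, compact, convex, hence connected; and quasi-concavity of $f(x,\cdot)$ gives $f(x,y_t)\ge\min\{f(x,y_1),f(x,y_2)\}$, so $A_t\subseteq A\cup B$. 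Connectedness of $A_t$ then forces $A_t\subseteq A$ or $A_t\subseteq B$; letting $T_A$ and $T_B$ be the corresponding sets of parameters, $\{T_A,T_B\}$ partitions $[0,1]$ with $0\in T_A$ and $1\in T_B$.

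The contradiction is obtained by showing that $T_A$ and $T_B$ are both relatively closed in $[0,1]$, which is impossible since $[0,1]$ is connected. This last point is the main obstacle: because $f$ is only upper semicontinuous in $y$ and only lower semicontinuous in $x$, a naive limiting argument (taking $t_k\to t$ with $x_k\in A_{t_k}\cap B$, passing to a convergent subsequence $x_k\to x\in B$) does not by itself place $x$ in $A_t\cap B$, since neither one-sided continuity bounds $f(x,y_t)$ from above in terms of $f(x_k,y_{t_k})$. The standard remedy, which I would follow, is to introduce an auxiliary level $c'\in(v,c)$ and argue with the strictly deeper level sets $\{x: f(x,y)<c'\}$ and $\{x:f(x,y)\le c'\}$: the open/closed structure of these sets along the segment, together with the fact that every $A_t$ is nonempty because $v<c$, makes the dichotomy $A_t\subseteq A$ versus $A_t\subseteq B$ a topologically closed condition on $t$. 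An attractive alternative, worth adopting if the bookkeeping becomes heavy, is to bypass the bespoke connectedness argument and derive the finite statement from the KKM lemma (equivalently, to cite Komiya's short elementary proof), which repackages exactly this combinatorics; the surrounding steps --- weak duality and the finite-intersection reduction --- are unchanged.
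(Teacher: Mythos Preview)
The paper does not prove Sion's minimax theorem; it only recalls the statement and cites \cite{sion1958general}, remarking that compactness of one factor suffices. So there is no proof in the paper to compare your proposal against.

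Your plan is the standard elementary route (essentially Komiya's 1988 argument): weak duality, then a finite-intersection-property reduction on the compact factor $X$, with the two-point case handled by the connectedness trick along the segment $[y_1,y_2]$. The outline is correct, and you correctly isolate the only delicate point --- closedness of $T_A$ and $T_B$ under mere one-sided semicontinuity --- together with the standard remedy of inserting an auxiliary level $c'\in(v,c)$. One caution on the induction: passing from $n=2$ to general $n$ is not a plain pairwise merge. In Komiya's proof one first restricts $X$ to the compact convex sublevel set $\{x:f(x,y_n)\le \alpha\}$, applies the inductive hypothesis on that smaller domain to the remaining $y_1,\dots,y_{n-1}$ to produce a single auxiliary $z_0\in Y$, and only then invokes the two-point lemma on the pair $(y_n,z_0)$. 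Your deferral to ``Sion's or Komiya's treatment'' covers this, but if you write the details out yourself, make sure to include that restriction step.
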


\begin{proof}[Proof of Theorem \ref{thm:cap-dual}]
	With slightly abuse of notations, we rewrite $D(\varepsilon; \hat{\mu}) = D(f)$ to highlight the dependence on $f$. For simplicity, let $U_b$ and $C_b$ be the sets of bounded u.s.c. and bounded continuous functions on $\cY$, respectively. $\cM(\cY)$ is the set of all countably additive, finite, positive Borel measures on $\cY$.
	
	If we can prove that
	\begin{enumerate}
		\item $D(f)$ is increasing and convex in $f$ and $D(f) < \infty$ for bounded $f$;
		\item $D(f_n) \downarrow D(0)$ for any sequence $\{ f_n \}$ in $C_b$ that $f_n \downarrow 0$ pointwise;
		\item $D^*_{C_b} (\nu) := \sup_{f \in C_b} \left[ \int_\cY f(y) \nu(dy) - D(f) \right] =  \sup_{f \in U_b} \left[ \int_\cY f(y) \nu(dy) - D(f) \right] =: D^*_{U_b} (\nu)$, for every $\nu \in \cM(\cY)$;
		\item 
		\begin{align}\label{eq:Dc}
			D^*_{C_b} (\nu) = \left\{\begin{array}{rll}
				& \inf_{ \pi \in \Pi(\hat{\mu}, \nu)}   \sup_{ \lambda \geq 0, \; \gamma \in \Gamma} \left\{ \lambda \big( \int c(x, y) d\pi - \varepsilon \big) - \int \gamma(x, y) d\pi \right\}, \text{ $\nu \in \cP(\cY)$}, \\
				& \infty, \text{ $\nu \in \cM(\cY)$ is not a probability.}
			\end{array}\right.
		\end{align}  
	\end{enumerate}
	Then by \citet[Theorem 2.2 and Proposition 2.3]{bartl2019robust}, for every $f \in C_b$, $D(f) = \max_{\nu \in \cM(\cY)} \left[ \int f(y) \nu(dy) - D^*_{C_b} (\nu) \right]$, which is exactly $J(\varepsilon; \hat{\mu})$. The maximum indicates the existence of a primal optimizer.
	
	\noindent {\bf Proof of Property 1.} Consider two functions $f_1 \leq f_2$, for given $\lambda$ and $\gamma$, one has
	\begin{equation*}
		\sup_{y \in \cY}  \big\{ f_1(y) - \lambda c(x, y) + \gamma(x, y) \big\} \leq 	\sup_{y \in \cY}  \big\{ f_2(y) - \lambda c(x, y) + \gamma(x, y) \big\},
	\end{equation*}
	which implies $D(f_1) \leq D(f_2)$. 
	
	For convexity, let $ t \in [0, 1]$ and consider two functions $f_1$ and $f_2$. By definition,
	\begin{align*}
		& D(t f_1 + (1-t) f_2) \\
		& = \inf_{ \lambda \geq 0, \gamma \in \Gamma} \lambda \varepsilon + \int_{\cX} \sup_{y \in \cY} \big\{ t f_1(y) + (1-t) f_2(y) - \lambda c(x, y) + \gamma(x, y) \big\} \hat{\mu}(dx) \\
		& \leq \inf_{ \substack{\lambda_1 \geq 0, \gamma_1 \in \Gamma \\ \lambda_2 \geq 0, \gamma_2 \in \Gamma}} (t \lambda_1 + (1-t) \lambda_2) \varepsilon \\ 
		& \quad + \int_{\cX} \sup_{y \in \cY} \big\{ t f_1(y) + (1-t) f_2(y) -(t \lambda_1 + (1-t) \lambda_2) c(x, y) + t \gamma_1(x, y) + (1-t) \gamma_2(x,y) \big\} \hat{\mu}(dx) \\
		& \leq \inf_{ \substack{\lambda_1 \geq 0, \gamma_1 \in \Gamma \\ \lambda_2 \geq 0, \gamma_2 \in \Gamma}} (t \lambda_1 + (1-t) \lambda_2) \varepsilon + \int_{\cX} \sup_{y \in \cY} \big\{ t f_1(y) - t \lambda_1 c(x, y) + t \gamma_1(x, y) \big\} \hat{\mu}(dx) \\ 
		& \quad + \int_{\cX} \sup_{y \in \cY} \big\{ (1-t) f_2(y) - (1-t) \lambda_2 c(x, y) + (1-t) \gamma_2(x,y) \big\} \hat{\mu}(dx) \\ 
		& = t D(f_1) + (1-t) D(f_2).
	\end{align*}
	
	To show $D(f) < \infty$ for bounded $f$, consider a constant $m$. Since $c(x, y)$ is non-negative, by choosing $\gamma = 0$, we obtain
	\begin{align*}
		D(m) \leq  \inf_{ \lambda \geq 0 } \lambda \varepsilon + \int_{\cX} \sup_{y \in \cY} \big\{ m - \lambda c(x, y) \big\} \hat{\mu}(dx) \leq \inf_{ \lambda \geq 0 } \lambda \varepsilon + m = m.
	\end{align*}
	Moreover, replacing the inner maximum with $y = x$ and noting $\int_\cX \gamma(x, x) \hat{\mu}(dx) = 0$, 
	\begin{align*}
		D(m) & \geq \inf_{ \lambda \geq 0, \; \gamma \in \Gamma } \lambda \varepsilon + \int_{\cX} \big\{ m - \lambda c(x, x) + \gamma(x, x) \big\} \hat{\mu}(dx) \\
		& = m + \inf_{\lambda \geq 0} \lambda \varepsilon - \lambda \int_\cX c(x, x) \hat{\mu}(dx) = m.
	\end{align*} 
	The last equality follows from the assumption that $\int_\cX c(x, x) \hat{\mu}(dx) \leq \varepsilon$. Thus, $D(m) = m$. Together with monotonicity, we have $D(f) < \infty$ for bounded $f$.
	
	\noindent {\bf Proof of Property 2.} Consider a sequence $\{ f_n \}$ in $C_b$ decreasing to 0 pointwise. Since $f_1$ is bounded, there exists a constant $m$ such that $f_1 \leq m$. Fix an arbitrary constant $\delta > 0$ and choose $\lambda>0$ such that $\lambda \varepsilon \leq \delta$. 
	
	For a given $x$, 
		\begin{align*}
			\sup_{y \in \cY} \big\{ f_n(y) - \lambda c(x, y) \big\} \geq \sup_{y \in \cY} \big\{0 - \lambda c(x, y) \big\} = - \lambda \inf_{y \in \cY} c(x, y) > - \infty.
		\end{align*}
		Then we consider a maximizing sequence $\{ y_i \}_{i=1}^\infty$ such that
		\begin{align*}
			\lim_{i \rightarrow \infty} \left( f_n(y_i) - \lambda c(x, y_i) \right) = \sup_{y \in \cY} \big\{ f_n(y) - \lambda c(x, y) \big\}.
		\end{align*}
		$\{ y_i \}_{i=1}^\infty$ should be in a ball $B_r(x)$ centered at $x$ with a finite radius $r \geq 0$. Otherwise, suppose there is a subsequence with $|x - y_{i_j}| \rightarrow \infty$. By condition 3, we obtain $c(x, y_{i_j}) \rightarrow \infty$. It contradicts with the fact that $\sup_{y \in \cY} \big\{ f_n(y) - \lambda c(x, y) \big\} > - \infty$.

	Thus, 
	\begin{align*}
		\sup_{y \in \cY} \big\{ f_n(y) - \lambda c(x, y) \big\} = \sup_{y \in B_r(x)} \big\{ f_n(y) - \lambda c(x, y) \big\} \leq \sup_{y \in B_r(x)} f_n(y),
	\end{align*}

	Consider  $k > 0$ such that $\hat{\mu} (B^c_k(0)) \leq \delta$. When $n$ is sufficiently large, Dini's theorem shows that $f_n \mathbf{1}_{B_{k+r}(0)} \leq \delta$ by uniform convergence on compact sets. Then 
	\begin{align}
		\sup_{y \in \cY} \big\{ f_n(y) - \lambda c(x, y) \big\}  \leq \left\{\begin{array}{lll}
			& \delta, \text{ if $x \in B_k(0)$}, \\
			& m, \text{ if $x \in B^c_k(0)$}.
		\end{array}\right.
	\end{align}
	Therefore, 
	\begin{align*}
		D(f_n) \leq  \inf_{ \lambda \geq 0 } \lambda \varepsilon + \int_{\cX} \sup_{y \in \cY} \big\{ f_n(y) - \lambda c(x, y) \big\} \hat{\mu}(dx) \leq \delta + \delta \hat{\mu} (B_k(0)) + m \hat{\mu} (B^c_k(0)) \leq \delta + \delta + \delta m.
	\end{align*}
	As $\delta$ is arbitrary, we obtain $D(f_n) \downarrow 0$, which is $D(0)$.
	
	\noindent {\bf Proof of Property 3 and 4.} Since $D(0) = 0$, then by definition of $D^*_{C_b}(\nu)$ in property 3, $D^*_{C_b}(\nu) \geq \int 0 d\nu - D(0) = 0$. Note $C_b$ is a subset of $U_b$, then $D^*_{C_b} (\nu) \leq D^*_{U_b} (\nu)$. To prove $D^*_{U_b} (\nu)$ is no greater than the right-hand side (RHS) of \eqref{eq:Dc}, we only need to consider the case when $\nu$ is a probability measure. Otherwise the RHS is infinity and there is nothing to prove. By the definition of $D^*_{U_b} (\nu)$,
	\begin{align*}
		D^*_{U_b} (\nu) & = \sup_{f \in U_b} \left[ \int_\cY f(y) \nu(dy) - \inf_{ \lambda \geq 0, \; \gamma \in \Gamma} \left( \lambda \varepsilon + \int_{\cX} F(x; \lambda, \gamma) \hat{\mu}(dx) \right) \right] \\
		& = \sup_{\lambda \geq 0, \; \gamma \in \Gamma} \sup_{f \in U_b} \left[ \int_\cY f(y) \nu(dy) - \int_{\cX} F(x; \lambda, \gamma) \hat{\mu}(dx) - \lambda \varepsilon \right].
	\end{align*}
	By the definition of $F$, one has
	\begin{equation*}
		F(x; \lambda, \gamma) \geq f(y) - \lambda c(x, y) + \gamma(x, y), \quad \forall x \in \cX, \, y \in \cY.
	\end{equation*} 
	Hence,
	\begin{equation*}
		\int F(x; \lambda, \gamma) d \pi \geq \int f(y) - \lambda c(x, y) + \gamma(x, y) d\pi, \quad \forall \pi \in \Pi(\hat{\mu}, \nu).
	\end{equation*}
	Rearranging the terms and noting the margin condition, we have
	\begin{equation*}
		\int_\cY f(y) \nu(dy) - \int_{\cX} F(x; \lambda, \gamma) \hat{\mu}(dx) \leq \lambda \int c(x, y) d\pi - \int \gamma(x, y) d\pi, \quad \forall \pi \in \Pi(\hat{\mu}, \nu).
	\end{equation*}
	Therefore, 
	\begin{align*}
		D^*_{U_b} (\nu) \leq \sup_{\lambda \geq 0, \; \gamma \in \Gamma} \left[ \lambda \int c(x, y) d\pi - \int \gamma(x, y) d\pi - \lambda \varepsilon \right], \quad \forall \pi \in \Pi(\hat{\mu}, \nu).
	\end{align*}
	Taking the infimum over $\pi \in \Pi(\hat{\mu}, \nu)$, we obtain $D^*_{U_b} (\nu) \leq \text{RHS}$ of \eqref{eq:Dc}.
	
	To prove $\text{RHS} \leq D^*_{C_b} (\nu)$, first note that if $\nu$ is not a probability measure, then one can take $f = m$ for some constant $m$ and obtain
	\begin{align*}
		D^*_{C_b} (\nu) \geq \sup_{m \in \R} \left[ \int m d\nu - m \right] = \infty.
	\end{align*} 
	
	Furthermore, we can refine the topology of $\cX$ in the same spirit of \citet[Lemma 6.1]{acciaio2021cournot}. By \citet[Theorem 13.11]{kechris2012classical}, there exists a stronger Polish topology $\hat{\cT}^\cX \supseteq \cT^\cX$ with the same Borel sets, such that the mapping $x_{1:t} \mapsto \hat{\mu}(dx_{t+1:T}|x_{1:t})$ is continuous given the $\hat{\cT}^\cX$-topology. Note that $\gamma \in \Gamma$ is $\hat{\cT}^\cX \times \cT^\cY$-continuous after strengthening the topology. Also, $\Gamma$ is a convex subset.
	
	Let $\nu$ be a probability measure. 
	By \citet[Theorem 1.7]{santambrogio2015optimal}, $\Pi(\hat{\mu}, \nu)$, the set of transport plans between $\hat{\mu}$ and $\nu$, is compact under the weak topology. Then by Sion's minimax theorem, we have 
	\begin{align*}
		& \inf_{ \pi \in \Pi(\hat{\mu}, \nu)}   \sup_{ \lambda \geq 0, \; \gamma \in \Gamma} \Big[ \lambda \big( \int c(x, y) d\pi - \varepsilon \big) - \int \gamma(x, y) d\pi \Big] \\
		& =  \sup_{ \lambda \geq 0, \; \gamma \in \Gamma} \inf_{ \pi \in \Pi(\hat{\mu}, \nu)} \Big[ \lambda \big( \int c(x, y) d\pi - \varepsilon \big) - \int \gamma(x, y) d\pi \Big].
	\end{align*}
	Indeed, since $c(x, y)$ is l.s.c. and $\gamma$ is continuous, then the objective is l.s.c. in $\pi$. Besides, the objective is linear and thus quasi-convex in $\pi$. Moreover, the objective is continuous and linear (and thus quasi-concave) in $\lambda$ and $\gamma$.
	
	Since by definition,
	\begin{align*}
		D^*_{C_b} (\nu) & = \sup_{\lambda \geq 0, \; \gamma \in \Gamma} \sup_{f \in C_b} \left[ \int_\cY f(y) \nu(dy) - \int_{\cX} F(x; \lambda, \gamma) \hat{\mu}(dx) - \lambda \varepsilon \right],
	\end{align*}
	we only have to prove
	\begin{equation}\label{eq:pif}
		\inf_{ \pi \in \Pi(\hat{\mu}, \nu)} \Big[ \lambda \int c(x, y) d\pi - \int \gamma(x, y) d\pi \Big] \leq \sup_{f \in C_b} \left[ \int_\cY f(y) \nu(dy) - \int_{\cX} F(x; \lambda, \gamma) \hat{\mu}(dx) \right].
	\end{equation}
	By the classic Kantorovich duality \citep[Chapter 5, Theorem 5.9]{villani2009optimal}, there exists $f(y)$ and $g(x)$ such that $f(y) + g(x) \leq \lambda c(x,y) - \gamma(x, y)$ and  
	\begin{align}\label{eq:lowbd}
		\int_\cY f(y) d \nu + \int_\cX g(x) d \hat{\mu} \geq \left\{\begin{array}{lll}
			& 1/\delta, \text{ if $\inf_{ \pi \in \Pi(\hat{\mu}, \nu)} \int \big[ \lambda c(x, y) - \gamma(x, y)  \big] d\pi = \infty$}, \\
			& 	\inf_{ \pi \in \Pi(\hat{\mu}, \nu)} \int \big[ \lambda c(x, y) - \gamma(x, y) \big] d\pi  - \delta, \text{ otherwise}.
		\end{array}\right.
	\end{align}
	Observing that $F(x; \lambda, \gamma) = \sup_{y \in \cY}  \big\{f(y) - \lambda c(x, y) + \gamma(x, y) \big\} \leq - g(x)$, then
	\begin{equation*}
		\int_\cY f(y) d \nu - \int_\cX F(x; \lambda, \gamma) d \hat{\mu} \geq \int_\cY f(y) d \nu + \int_\cX g(x) d \hat{\mu}. 
	\end{equation*}
	Combining with \eqref{eq:lowbd} and letting $\delta \rightarrow 0$, we prove \eqref{eq:pif} and thus the $\text{RHS} \leq D^*_{C_b} (\nu)$. 
\end{proof}

\begin{proof}[Proof of Corollary \ref{cor:dual}]
		Since $f$ is u.s.c. and $(\cX, \cT^\cX)$ and $(\cY, \cT^\cY)$ are compact, $f$ is bounded from above. There exists a nonincreasing sequence of continuous (and thus bounded on compact $\cX$ and $\cY$) functions $\{ f_n \}^\infty_{n=1}$ such that $f_n \downarrow f$ pointwise. A careful check of the proof for Theorem \ref{thm:cap-dual} indicates that if $\cX$ and $\cY$ are compact, then condition 3 is not needed since the Dini's theorem is no longer needed. Then we apply Theorem \ref{thm:cap-dual} to $f_n$, we obtain
		\begin{equation*}
			J(f_n) := J(\varepsilon; \hat{\mu}) = D(\varepsilon; \hat{\mu})  =: D(f_n).
		\end{equation*}
		Moreover, there exists a primal optimizer $\nu_n \in \cB_{\varepsilon}$ for $J(f_n)$. As $c(x, y)$ is l.s.c. and bounded from below, by \citet[Theorem 4.3]{acciaio2021cournot}, $\cW_c(\hat{\mu}, \nu_n)$ is attained by some $\pi_n \in \Pi_c(\hat{\mu}, \nu_n)$. Since $\cX$ and $\cY$ are compact, $\{ \pi_n \}^\infty_{n=1}$ is tight. By Prokhorov's theorem, there exists a subsequence $\{ \pi_{n_k}\}^\infty_{k=1}$ weakly converging to some $\pi^* \in \cP(\cX \times \cY)$. Denote the marginal of $\pi^*$ on $\cY$ as $\nu^*$. We check that $\nu_{n_k}$ also weakly converges to $\nu^*$ and $\nu^* \in \cB_\varepsilon$.
		
		For any $\alpha \in C_b(\cY)$, we have
		\begin{align*}
			\lim_{k \rightarrow \infty} \int_{\cY} \alpha(y) \nu_{n_k}(dy) = \lim_{k \rightarrow \infty} \int_{\cX \times \cY} \alpha(y) \pi_{n_k}(dx, dy) = \int_{\cX \times \cY} \alpha(y) \pi^*(dx, dy) = \int_{\cY} \alpha(y) \nu^*(dy).
		\end{align*}
		The first equality uses $\pi_{n_k} \in \Pi_c(\hat{\mu}, \nu_{n_k})$. The second equality is due to the weak convergence of $\pi_{n_k} \Rightarrow \pi^*$. The last equality is the definition of the marginal. Then  $\nu_{n_k}$ weakly converges to $\nu^*$. Similarly, we can verify the marginal of $\pi^*$ on $\cX$ is $\hat{\mu}$.
		
		We refine the topology on $\cX$ as in Theorem \ref{thm:cap-dual} such that $\gamma \in \Gamma$ is continuous. By the definition of $\pi_{n}$, we have
		\begin{align*}
			& \int_{\cX \times \cY} c(x, y) \pi_{n}(dx, dy)  \leq \varepsilon, \quad \int_{\cX \times \cY} \gamma(x, y) \pi_{n}(dx, dy) = 0, \quad \forall \gamma \in \Gamma.
		\end{align*}
		There exists a nondecreasing sequence of continuous functions $\{c_n\}^\infty_{n=1}$ such that $c_n \uparrow c$. By monotone convergence theorem and weak convergence,
		\begin{align*}
			& \int_{\cX \times \cY} c(x, y) \pi^*(dx, dy) = \int_{\cX \times \cY} \lim_{n \rightarrow \infty} c_n(x, y) \pi^*(dx, dy) = \lim_{n \rightarrow \infty} \int_{\cX \times \cY}  c_n(x, y) \pi^*(dx, dy) \\
			& =  \lim_{n \rightarrow \infty} \lim_{k \rightarrow \infty} \int_{\cX \times \cY}  c_n(x, y) \pi_{n_k}(dx, dy).
		\end{align*}
		Since the sequence $\{c_n\}$ is nondecreasing,
		\begin{align*}
			& \lim_{k \rightarrow \infty} \int_{\cX \times \cY}  c_n(x, y) \pi_{n_k}(dx, dy) \leq \liminf_{k \rightarrow \infty} \int_{\cX \times \cY}  c_{n_k}(x, y) \pi_{n_k}(dx, dy) \leq \varepsilon.
		\end{align*}
		Hence, $\int_{\cX \times \cY} c(x, y) \pi^*(dx, dy) \leq \varepsilon$. Similarly, as $\gamma \in \Gamma$ is continuous after refining the topology and bounded, weak convergence with the refined topology yields (up to a subsequence of $\{\pi_{n_k}\}_k$, still denoted as $\{\pi_{n_k}\}_k$)
		\begin{align*}
			& \lim_{k \rightarrow \infty} \int_{\cX \times \cY}  \gamma(x, y) \pi_{n_k}(dx, dy) = \int_{\cX \times \cY}  \gamma(x, y) \pi^*(dx, dy) = 0.
		\end{align*} 
		Therefore, $\pi^* \in \Pi_c (\hat{\mu}, \nu^*)$. As a result, we obtain $\cW_c(\hat{\mu}, \nu^*) \leq \varepsilon$ which implies $\nu^* \in \cB_{\varepsilon}$. It also proves that $\cB_\varepsilon$ is compact.
		
		Then 
		\begin{align}
			J(f) & = \sup_{\nu \in \cB_\varepsilon} \int_\cY f(y) \nu(dy)  \geq \int_\cY f(y) \nu^*(dy)  \label{temp1} \\
			& = \int_\cY \lim_{n \rightarrow \infty} f_n(y)  \nu^*(dy) =  \lim_{n \rightarrow \infty} \int_\cY f_n(y)  \nu^*(dy) = \lim_{n \rightarrow \infty} \lim_{k \rightarrow \infty} \int_\cY f_n(y)  \nu_{n_k}(dy). \nonumber
		\end{align}
		Since the sequence $\{ f_n\}^\infty_{n=1}$ is decreasing,
		\begin{align}
			\lim_{k \rightarrow \infty} \int_\cY f_n(y)  \nu_{n_k}(dy) \geq \limsup_{k \rightarrow \infty} \int f_{n_k} (y) \nu_{n_k}(dy) = \limsup_{k \rightarrow \infty} D(f_{n_k}) \geq D(f). \label{temp2}
		\end{align}
		The last inequality uses the fact that $D(f)$ is increasing, as shown in Theorem \ref{thm:cap-dual}. It follows that $J(f) \geq D(f)$. As the weak duality $D(f) \geq J(f)$ holds, we obtain $J(f) = D(f)$ and inequalities in \eqref{temp1}--\eqref{temp2} are equalities. Thus, $\nu^*$ is a primal optimizer.
\end{proof}

\begin{proof}[Proof of Theorem \ref{thm:struct-dual}]
	We deal with the general $f(y)$ and $c(x,y)$ satisfying Assumption \ref{assum:lsc} directly. Fixing $\nu \in \cV$, we introduce the Lagrangian
	\begin{align*}
		\cL (\pi, \lambda, \gamma; \nu) := & \int_{\cX \times \cY} \big( f(y) + \gamma(x, y) \big) \pi(dx, dy) + \lambda \big[ \varepsilon - \int_{\cX \times \cY} c(x, y) \pi(dx, dy) \big].
	\end{align*}
	Note that $\Pi(\hat{\mu}, \nu)$ is compact under the weak topology. We refine the topology on $\cX$ as in Theorem \ref{thm:cap-dual}. $\Pi(\hat{\mu}, \nu)$ is still compact after refining the topology. Since $f$ is u.s.c. and bounded from above, $\gamma(x,y)$ is continuous and bounded, $c(x,y)$ is l.s.c. and non-negative, we obtain that $\cL$ is u.s.c. in $\pi$ by \citet[Lemma 1.6]{santambrogio2015optimal}. Consider the space $C_b(\cX \times \cY)$ endowed with uniform topology and $[0, \infty)$ with the Euclidean topology, one has that $\cL$ is continuous in $\lambda$ and $\gamma$. Similarly, $\cL$ is linear in $\pi$ for given $(\lambda, \gamma)$ and linear in $(\lambda, \gamma)$ for given $\pi$. Sion's minimax theorem shows that
	\begin{equation}\label{eq:struct-minimax}
		\sup_{\pi \in \Pi(\hat{\mu}, \nu)} \inf_{\substack{\lambda \geq 0, \\ \gamma \in \Gamma}} \cL (\pi, \lambda, \gamma; \nu) =  \inf_{\substack{\lambda \geq 0, \\ \gamma \in \Gamma}}  \sup_{\pi \in \Pi(\hat{\mu}, \nu)} \cL (\pi, \lambda, \gamma; \nu).
	\end{equation}
	One can verify that
	\begin{equation*}
		J(\varepsilon; \cV) = \sup_{\nu \in \cV} \sup_{\pi \in \Pi(\hat{\mu}, \nu)} \inf_{\substack{\lambda \geq 0, \\ \gamma \in \Gamma}} \cL (\pi, \lambda, \gamma; \nu).
	\end{equation*}
	Simplifying the RHS of \eqref{eq:struct-minimax}, we obtain $D(\varepsilon; \hat{\mu}, \cV)$ in \eqref{eq:struct-dual} after taking supremum over $\nu \in \cV$. 
\end{proof}

\begin{proof}[Proof of Corollary \ref{cor:equiv}]
	With $c_t(x_t, x_t) = 0$, the assumption $\int_\cX c(x, x) \hat{\mu} (dx) \leq \varepsilon$ in Theorem \ref{thm:cap-dual} or Corollary \ref{cor:dual} holds. We can apply Theorem \ref{thm:cap-dual} or Corollary \ref{cor:dual} to the OT formulation and obtain
	\begin{align*}
		J(\varepsilon; \hat{\mu} ) = D(\varepsilon; \hat{\mu} ) =&  \inf_{ \lambda \geq 0} \quad  \lambda \varepsilon + \int_{\cX} \sup_{y \in \cY}  \big\{f(y) - \lambda c(x, y)\big\} \hat{\mu}(dx) \\
		=&  \inf_{ \lambda \geq 0} \quad  \lambda \varepsilon + \int_{\cX}  \sum^T_{t=1} \sup_{y_t \in \cY_t}  \big\{ f_t(y_t)  - \lambda c_t(x_t, y_t) \big\} \hat{\mu}(dx).
	\end{align*}
	First, the OT primal value is not less than the COT one. For the inverse direction, \citet[Theorem 1]{blanchet2019quantifying} shows that there is a dual optimizer $(\lambda^*, L)$ with a map $L: \cX \rightarrow \cY$, where the assumption $\int_\cX |f(x)| \hat{\mu} (dx) < \infty$ is used. Note that $L$ may not be unique. But it can be chosen measurable, see \citet[Proposition 7.50(b)]{bertsekas1996stochastic}. In view of the separable structure of $f(y)$ and $c(x,y)$, we further know that $L$ is separable with $y_t = L_t(x_t)$. Therefore, $L_t$ does not violate the non-anticipativeness. Then the COT primal value is not less than the OT one. We prove the result as desired.
\end{proof}

\section{Proofs of Section \ref{sec:complex}}

\subsection{Proofs of Section \ref{sec:rad}}
\begin{proof}[Proof of Lemma \ref{lem:Rad}]
	To ease the notation, let $f^c(x, y, \lambda) := f(y) - \lambda c(x, y)$. With the definition of $D(\theta, \varepsilon; \mu \circ \varphi^{-1}_N)$ and $D(\theta, \varepsilon; \hat{\mu} \circ \varphi^{-1}_N) $, by adding and deducting a common term, an application of the triangle inequality gives
	\begin{align*}
		\big|& D(\theta, \varepsilon; \mu \circ \varphi^{-1}_N) - D (\theta, \varepsilon; \hat{\mu} \circ \varphi^{-1}_N) \big|\\
		\leq & \Big| \int \sup_{y \in \cY} \Big\{  \sum^L_{l=1} \sum^{T-1}_{t=1} h_{l,t}(y_{1:t}) \big[ g_{l,t}(\varphi_N(x_{1:T})) - \int g_{l,t}(\varphi_N(x_{1:T})) \mu (d x_{t+1:T} | \varphi^{-1}_N (x_{1:t})) \big] \\
		& \hspace{1.5cm} + f^c(\varphi_N(x), y, \lambda) \Big\} \mu(dx_{1:T}) \\
		& \hspace{0.5cm} - \int \sup_{y \in \cY} \Big\{ \sum^L_{l=1} \sum^{T-1}_{t=1} h_{l, t}(y_{1:t}) \big[ g_{l, t}(\varphi_N(x_{1:T})) - \int g_{l, t}(\varphi_N(x_{1:T})) \mu (d x_{t+1:T} | \varphi^{-1}_N (x_{1:t})) \big] \\
		& \hspace{2.4cm} + f^c(\varphi_N(x), y, \lambda) \Big\} \hat{\mu}(dx_{1:T}| N, \id) \Big| \\
		& + \Big| \int \sup_{y \in \cY} \Big\{ \sum^L_{l=1} \sum^{T-1}_{t=1} h_{l, t}(y_{1:t}) \big[ g_{l,t}(\varphi_N(x_{1:T})) - \int g_{l,t}( \varphi_N(x_{1:T})) \hat{\mu} (d x_{t+1:T} | \varphi^{-1}_N (x_{1:t}), N, \id) \big] \\
		& \hspace{2.0cm} + f^c(\varphi_N(x), y, \lambda) \Big\} \hat{\mu}(dx_{1:T}| N, \id) \\
		&  \hspace{0.5cm} - \int \sup_{y \in \cY} \Big\{ \sum^L_{l=1} \sum^{T-1}_{t=1} h_{l, t}(y_{1:t}) \big[ g_{l, t}(\varphi_N(x_{1:T})) - \int g_{l, t}(\varphi_N(x_{1:T})) \mu (d x_{t+1:T} |\varphi^{-1}_N (x_{1:t})) \big] \\
		&  \hspace{2.4cm} + f^c(\varphi_N(x), y, \lambda) \Big\} \hat{\mu}(dx_{1:T}| N, \id) \Big| \\
		=:& {\rm I} + {\rm II}.
	\end{align*}
	Recalling the definition of $F$, part ${\rm I}$ is the difference between empirical averages and the expected value of $F$. Since the sample is i.i.d. drawn from $\mu$ and continuous functions on compact subsets are bounded, then \citet[Theorem 3.3]{mohri2018foundations} and \citet[Section 7.1]{reppen2020bias} imply
	\begin{align}
		{\rm I} \leq 2 \cR( F( \Theta_k \circ \varphi_N); N) + 2C_{h,g} \sqrt{\frac{\ln(2/\delta)}{2N}} \label{eq:partI}
	\end{align}
	holds with probability at least $1 - \delta$.
	
	For part {\rm II}, we claim that
	\begin{align}
		{\rm II} \leq& C_h \int \sum^L_{l=1} \sum^{T-1}_{t=1} \Big| \int g_{l, t}( \varphi_N(x_{1:T})) \hat{\mu} (d x_{t+1:T} | \varphi^{-1}_N (x_{1:t}), N, \id) \nonumber \\
		& \hspace{2.5cm} - \int g_{l,t}( \varphi_N(x_{1:T})) \mu (d x_{t+1:T} | \varphi^{-1}_N (x_{1:t})) \Big| \hat{\mu}(d x_{1:T} | N, \id) \label{ineq:Lip}\\
		\leq& C_h L_k \sum^{T-1}_{t=1} \sum_{A_t \in \Phi_{N,t}} \hat{\mu} (A_t | N) \Big\{ 2 \cR( \cG_k \circ \varphi_N; n(A_t)) + 2C_g \sqrt{\frac{\ln(2/\delta)}{2n(A_t)}} \Big\} \label{ineq:sub} \\
		=& C_h L_k \sum^{T-1}_{t=1} \sum_{A_t \in \Phi_{N,t}} \hat{\mu} (A_t | N) r(n(A_t); \delta). \nonumber
	\end{align}
	$C_h$ is a large constant to bound all $h_{l, t}$ in $\Theta_k$. $\hat{\mu} (A_t | N)$ is the empirical probability on the set $A_t$ when there are $N$ sample paths. $n(A_t)$ denotes the number of sample paths that fall into the set $A_t$. Thus, we have $n(A_t) = N \hat{\mu} (A_t | N)$ and $\sum_{A_t \in \Phi_{N,t}} n(A_t) = N$. 
	
	Inequality \eqref{ineq:Lip} uses the property that $F$ is Lipschitz in the integral of $g_{l,t}$ with Lipschitz coefficient $C_h$. For inequality \eqref{ineq:sub}, first note that it is the same to use $\hat{\mu} (d x_{t+1:T} | \varphi^{-1}_N (x_{1:t}), N, \varphi_N)$ in the place of $\hat{\mu} (d x_{t+1:T} | \varphi^{-1}_N (x_{1:t}), N, \id)$ in \eqref{ineq:Lip}. By \citet[Lemma 3.3]{backhoff2020estimating}, the law of $\hat{\mu} (d x_{t+1:T} | \varphi^{-1}_N (x_{1:t}), N, \id)$ is the same as that of the empirical measure of $\mu (d x_{t+1:T} | \varphi^{-1}_N (x_{1:t}))$ with sample size $ N \hat{\mu} ( \varphi^{-1}_N (x_{1:t}) | N) $. Applying \citet[Theorem 3.3]{mohri2018foundations} and \citet[Section 7.1]{reppen2020bias} again, we obtain inequality \eqref{ineq:sub} with sample size $n(A_t)$ for each $A_t$. Note that at each time $t$, there are at most $N$ subsets $A_t \in \Phi_{N,t}$ with non-zero probabilities. Thus, we have applied at most $N L_k (T-1)$ times of the Rademacher complexity estimates in the inequality \eqref{ineq:sub}.
	
	With the definition of $r(\cdot; \delta)$ in \eqref{def:r}, note that $x r(x; \delta)$ is concave on $x \in [0, \infty)$ and $r(x; \delta)$ is decreasing on $x \in [0, \infty)$ under assumptions. Then
	\begin{align*}
		& \sum_{A_t \in \Phi_{N,t}} \hat{\mu} (A_t | N) r(n(A_t) ; \delta) = \frac{|\Phi_{N,t}|}{N} \sum_{A_t \in \Phi_{N,t}} \frac{N \hat{\mu} (A_t | N)}{|\Phi_{N,t}|} r\big( N \hat{\mu} (A_t | N) ; \delta \big) \\
		& \leq \frac{|\Phi_{N,t}|}{N} \Big( \sum_{A_t \in \Phi_{N,t}} \frac{N \hat{\mu} (A_t | N)}{|\Phi_{N,t}|} \Big) r\Big( \sum_{A_t \in \Phi_{N,t}} \frac{N \hat{\mu} (A_t | N)}{|\Phi_{N,t}|} ; \delta \Big) = r\Big( \frac{N}{|\Phi_{N,t}|} ; \delta \Big) \leq r\left(N^{1 - qd(T-1)} ; \delta \right).
	\end{align*}
	
	 In total, we have applied at most $NL_k(T-1)+1$ times of the Rademacher complexity estimates in \eqref{eq:partI} and \eqref{ineq:sub}. By replacing $\delta$ with $\delta/(NL_k(T-1)+1)$, we can apply the union bound and state that, with probability at least $1-\delta$, the following inequalities hold simultaneously:
	\begin{align*}
			{\rm I} & \leq 2 \cR( F( \Theta_k \circ \varphi_N); N) + C \sqrt{\frac{\ln[2(NL_k(T-1)+1)/\delta]}{N}}, \\
			{\rm II} & \leq C_h L_k (T-1) r\left(N^{1 - qd(T-1)} ; \delta/(NL_k(T-1)+1) \right) \\
			& \leq C \cR( \cG_k \circ \varphi_N ; N^{1 - qd(T-1)}) + C \sqrt{\frac{\ln[2(NL_k(T-1)+1)/\delta]}{N^{1 - qd(T-1)}}},
	\end{align*}	
	where $C > 0$ is a constant independent of the sample size $N$. Since $N^{1 - qd(T-1)} \leq N$, we obtain the result as desired.
 
\end{proof}

\begin{proof}[Proof of Lemma \ref{lem:Ninf}]
	Since the domain for $\theta$ is compact and the dual objective $D$ is continuous on $\theta$, there exist minimizers denoted by 
	\begin{align*}
		\theta^*_{k, N} \in \arg\min_{\theta \in \Theta_k} D(\theta, \varepsilon; \hat{\mu} \circ \varphi^{-1}_N), \quad  \theta^*_k \in \arg\min_{\theta \in \Theta_k} D(\theta, \varepsilon; \mu).
	\end{align*}
	Since $\sup_{u \in [a,b]^d} |u - \varphi_N(u)| \leq c/N^q$ for some constant $c$, for a given $\eta > 0$, there exists $N_\eta$ such that when $N \geq N_\eta$,
	\begin{align}
		|D(\theta^*_k, \varepsilon; \mu \circ \varphi^{-1}_N) - D(\theta^*_k, \varepsilon; \mu)| &\leq \eta/2, \label{eq:k-phi}\\ 
		|D(\theta^*_{k, N}, \varepsilon; \mu \circ \varphi^{-1}_N) - D(\theta^*_{k, N}, \varepsilon; \mu)| &\leq \eta/2. \label{eq:kN-phi}
	\end{align}
	The definition of $\theta^*_{k,N}$ implies
	\begin{equation}\label{eq:kN-def}
		D(\theta^*_{k, N}, \varepsilon; \hat{\mu} \circ \varphi^{-1}_N) \leq D(\theta^*_k, \varepsilon; \hat{\mu} \circ \varphi^{-1}_N).
	\end{equation}
	Furthermore, for a given $\delta > 0$, Lemma \ref{lem:Rad} proves that
	\begin{align}
		D(\theta^*_k, \varepsilon; \hat{\mu} \circ \varphi^{-1}_N) 
		& \leq D(\theta^*_k, \varepsilon; \mu \circ \varphi^{-1}_N) + 2 \cR( F( \Theta_k \circ \varphi_N); N)  + C \cR( \cG_k \circ \varphi_N ; N^{1 - qd(T-1)}) \nonumber \\
		& \quad + C \sqrt{\frac{\ln[2(NL_k(T-1)+1)/\delta]}{N^{1 - qd(T-1)}}} \nonumber \\
		& =: D(\theta^*_k, \varepsilon; \mu \circ \varphi^{-1}_N) + \cR_N + C \sqrt{\frac{\ln[2(NL_k(T-1)+1)/\delta]}{N^{1 - qd(T-1)}}} \label{ineq:k-rad}
	\end{align}
	holds with probability at least $1 - \delta$. Constant $C$ is independent of sample size $N$. Combining \eqref{eq:kN-def}, \eqref{ineq:k-rad}, and \eqref{eq:k-phi}, one has
	\begin{align}
		D(\theta^*_{k,N}, \varepsilon; \hat{\mu} \circ \varphi^{-1}_N) \leq D(\theta^*_k, \varepsilon; \mu) + \cR_N + C \sqrt{\frac{\ln[2(NL_k(T-1)+1)/\delta]}{N^{1 - qd(T-1)}}} + \eta/2 \label{ineq1}
	\end{align}
	with probability at least $1 - \delta$.
	
	In a similar manner, by invoking the definition of $\theta^*_k$, \eqref{eq:kN-phi}, and Lemma \ref{lem:Rad}, we get
	\begin{align}
		D(\theta^*_k, \varepsilon; \mu) \leq& D(\theta^*_{k,N}, \varepsilon; \mu) \leq D(\theta^*_{k,N}, \varepsilon; \mu \circ \varphi^{-1}_N) + \eta/2 \nonumber \\
		\leq & D(\theta^*_{k,N}, \varepsilon; \hat{\mu} \circ \varphi^{-1}_N) + \cR_N + C \sqrt{\frac{\ln[2(NL_k(T-1)+1)/\delta]}{N^{1 - qd(T-1)}}} + \eta/2  \label{ineq2}
	\end{align}
	with probability at least $1 - \delta$.
	
	We apply the union bound with $\delta$ replaced by $\delta/2$. In view of \eqref{ineq1} and \eqref{ineq2}, with probability at least $1 - \delta$, we have
	\begin{equation}
		|D(\theta^*_{k,N}, \varepsilon; \hat{\mu} \circ \varphi^{-1}_N) - D(\theta^*_k, \varepsilon; \mu)| \leq \cR_N + C \sqrt{\frac{\ln[4(NL_k(T-1)+1)/\delta]}{N^{1 - qd(T-1)}}}  + \eta/2.
	\end{equation}
	For sufficiently large sample size $N$, we then pick $\delta = \delta_N>0$ such that $	C \sqrt{\frac{\ln[4(NL_k(T-1)+1)/\delta]}{N^{1 - qd(T-1)}}} = \eta/2$ or equivalently
	\begin{equation*}
	  \delta_N = 4(NL_k(T-1)+1) \exp \Big[-\frac{\eta^2 N^{1 - qd(T-1)}}{ 4 C^2} \Big].
	\end{equation*}
	Since $\sum_N \delta_N < \infty$, the Borel-Cantelli lemma implies that 
	\begin{equation*}
		\limsup_{N \rightarrow \infty} | D(\theta^*_{k,N}, \varepsilon; \hat{\mu} \circ \varphi^{-1}_N) - D(\theta^*_k, \varepsilon; \mu) | \leq \lim_{N \rightarrow \infty}  \cR_N + \eta/2 + \eta/2 = \eta,
	\end{equation*}
	with probability one. Since $\eta$ is arbitrary, we prove the claim as desired.
\end{proof}

\begin{proof}[Proof of Theorem \ref{thm:conv}]
	For any $\eta > 0$, we can find $\theta_\eta$ such that $D(\theta_\eta, \varepsilon; \mu) \leq \inf_{\theta \in \Theta} D(\theta, \varepsilon; \mu)  + \eta$. By assumption, there exists a sequence of $\theta_k \in \Theta_k$ approximating $\theta_\eta$ pointwise. Observing that $F(x; \lambda, h, g)$ is continuous in $x$ since it is the supremum of a jointly continuous function (noting that $h_{l, t}, g_{l, t}, f, c(x,y)$ are continuous) on a compact domain $\cX \times \cY$. The dominated convergence theorem shows that
	\begin{equation*}
		\lim_{k \rightarrow \infty} D(\theta_k, \varepsilon; \mu) = D(\theta_\eta, \varepsilon; \mu).
	\end{equation*}
	Then
	\begin{equation}\label{eq:inf}
		\limsup_{k \rightarrow \infty} \inf_{\theta \in \Theta_k} D(\theta, \varepsilon; \mu) \leq \lim_{k \rightarrow \infty} D(\theta_k, \varepsilon; \mu) = D(\theta_\eta, \varepsilon; \mu) \leq \inf_{\theta \in \Theta} D(\theta, \varepsilon; \mu)  + \eta.
	\end{equation}
	The opposite direction $\inf_{\theta \in \Theta} D(\theta, \varepsilon; \mu) \leq \inf_{\theta \in \Theta_k} D(\theta, \varepsilon; \mu)$ always holds. Since $\eta$ is arbitrary in \eqref{eq:inf}, we obtain the claim as desired.
\end{proof}

\subsection{Finite sample guarantee}\label{sec:finite}

For the dual problem of SCOT (or COT), we also have a finite sample guarantee as follows. Denote $J^* := \int_\cX f(x) \mu(dx)$ as the expected value with the true $\mu$. Let $\hat{J}_N(\varepsilon) := \sup_{\nu \in \cB_\varepsilon \cap \cV} \int_\cY f(y) \nu(dy) $ with $\cB_{\varepsilon} = \{ \nu: \cW_c (\hat{\mu}(\cdot| N, \varphi_N), \nu) \leq \varepsilon \}$. With high probability, $\hat{J}_N(\varepsilon)$ is an upper bound for $J^*$ under certain conditions.

As a preparation, define a rate function as in \cite{backhoff2020estimating}:
\begin{equation*}
	\text{rate}(N) = \left\{
	\begin{array}{rll}
		& N^{-1/(T+1)},  & d = 1, \\
		& N^{-1/(2T)} \ln(N+1), & d =2, \\
		& N^{-1/(dT)}, & d \geq 3. 
	\end{array}\right.
\end{equation*}

\begin{corollary}
	Suppose
	\begin{enumerate}
		\item $\cV$ is large enough to include the true $\mu$;
		\item the cost $c(x, y) = |x -  y|$;
		\item Assumption \ref{assum:continuous} holds;
		\item $\mu$ is Lipschitz in the sense of \citet[Assumption 1.4]{backhoff2020estimating}.
	\end{enumerate}
	For a given level $\delta_N \in (0, 1)$, let $\varepsilon_N := C \text{rate}(N) + \left( \frac{1}{c N} \ln \left( \frac{2 T}{\delta_N} \right) \right)^{1/2}$ with constants $c$ and $C$ from \citet[Theorem 1.7]{backhoff2020estimating}. Then 
	\begin{equation}
		\p[ J^* \leq \hat{J}_N(\varepsilon_N) ] \geq 1 - \delta_N,
	\end{equation}
	where $\p := \mu^N $ is the product of $N$ true measures. If $\sum^\infty_{N = 1} \delta_N < \infty$ and $\lim_{N \rightarrow \infty} \varepsilon_N = 0$, then any sequence $\hat{\nu}_N$  satisfying $\cW_c (\hat{\mu}(\cdot| N, \varphi_N), \hat{\nu}_N)$ $ \leq \delta_N$, converges under Wasserstein metric $\cW$ to $\mu$ almost surely. Moreover, $ \hat{J}_N(\varepsilon_N) \downarrow J^*$ almost surely.
\end{corollary}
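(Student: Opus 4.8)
The plan is to combine the high-probability rate estimate for the adapted empirical measure from \citet[Theorem 1.7]{backhoff2020estimating} with the elementary observation that, once $\mu$ itself lies in the ambiguity set $\cB_{\varepsilon_N}\cap\cV$, the true expectation $J^*$ is automatically dominated by the robust value $\hat{J}_N(\varepsilon_N)$. Concretely, I would first record a single-$N$ confidence statement: under Conditions 2--4, \citet[Theorem 1.7]{backhoff2020estimating} — which is stated for the bicausal/adapted Wasserstein distance and therefore also bounds the one-sided quantity $\cW_c(\hat{\mu}(\cdot| N, \varphi_N),\mu)$ — gives, for the prescribed $\varepsilon_N$, that $\p[\cW_c(\hat{\mu}(\cdot| N, \varphi_N),\mu)\le\varepsilon_N]\ge 1-\delta_N$. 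On that event one has $\mu\in\cB_{\varepsilon_N}$, and by Condition 1 also $\mu\in\cV$ (note $\cX=\cY$ under Assumption \ref{assum:continuous}), so $\mu\in\cB_{\varepsilon_N}\cap\cV$ and hence $J^*=\int_\cX f\,d\mu\le\sup_{\nu\in\cB_{\varepsilon_N}\cap\cV}\int_\cY f\,d\nu=\hat{J}_N(\varepsilon_N)$. Taking probabilities yields the first assertion $\p[J^*\le\hat{J}_N(\varepsilon_N)]\ge 1-\delta_N$.

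For the almost-sure statements I would invoke Borel--Cantelli. When $\sum_N\delta_N<\infty$, the bad events $\{\cW_c(\hat{\mu}(\cdot| N, \varphi_N),\mu)>\varepsilon_N\}$ occur only finitely often almost surely, so eventually $\cW_c(\hat{\mu}(\cdot| N, \varphi_N),\mu)\le\varepsilon_N$, and a fortiori $\cW(\hat{\mu}(\cdot| N, \varphi_N),\mu)\le\varepsilon_N$ since causal couplings form a subset of all couplings, whence $\cW\le\cW_c$. For a sequence $\hat{\nu}_N$ with $\cW_c(\hat{\mu}(\cdot| N, \varphi_N),\hat{\nu}_N)\le\delta_N$ (hence $\le\varepsilon_N$), the triangle inequality of the genuine metric $\cW$ gives, for all large $N$ on the a.s. event, $\cW(\mu,\hat{\nu}_N)\le\cW(\mu,\hat{\mu}(\cdot| N, \varphi_N))+\cW(\hat{\mu}(\cdot| N, \varphi_N),\hat{\nu}_N)\le\varepsilon_N+\delta_N\to 0$, which is the claimed $\cW$-convergence to $\mu$.

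For the last assertion, from the first one and Borel--Cantelli, $\hat{J}_N(\varepsilon_N)\ge J^*$ for all large $N$ almost surely. For the reverse inequality I would pick near-optimizers $\hat{\nu}_N^*\in\cB_{\varepsilon_N}\cap\cV$ with $\int_\cY f\,d\hat{\nu}_N^*\ge\hat{J}_N(\varepsilon_N)-1/N$; on the good event this set is nonempty (it contains $\mu$) and $f$ is bounded on the compact $\cX$, so such $\hat{\nu}_N^*$ exist. Since $\cW_c(\hat{\mu}(\cdot| N, \varphi_N),\hat{\nu}_N^*)\le\varepsilon_N$, the same triangle-inequality argument through $\cW$ yields $\cW(\hat{\nu}_N^*,\mu)\to 0$ a.s.; as $f$ is bounded and continuous on the compact $\cX$, $\cW$-convergence is weak convergence, so $\int_\cY f\,d\hat{\nu}_N^*\to\int_\cX f\,d\mu=J^*$, and therefore $\hat{J}_N(\varepsilon_N)\to J^*$. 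Combined with $\hat{J}_N(\varepsilon_N)\ge J^*$ eventually, this gives the stated $\hat{J}_N(\varepsilon_N)\downarrow J^*$ (convergence to $J^*$ from above) almost surely.

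The two places that need care are: (i) $\cW_c$ is not symmetric and has no direct triangle inequality, so every chaining step must be routed through the honest Wasserstein metric $\cW$ using $\cW\le\cW_c$ and the symmetry of $\cW$; and (ii) checking that \citet[Theorem 1.7]{backhoff2020estimating} genuinely applies here — this is precisely the role of Conditions 2--4 (the $\ell_1$ cost, the compact cube domains of Assumption \ref{assum:continuous}, and the Lipschitz assumption on $\mu$), and it explains the exact form of $\varepsilon_N$, with $\text{rate}(N)$ absorbing the intrinsic approximation error of the adapted empirical measure and the $\sqrt{\ln(2T/\delta_N)/(cN)}$ term the stochastic deviation. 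Everything else — Borel--Cantelli, the equivalence of $\cW$-convergence with weak convergence on a compact space, and continuity of $\nu\mapsto\int f\,d\nu$ for bounded continuous $f$ — is routine.
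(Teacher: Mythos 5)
Your proposal is correct and follows essentially the same route as the paper: the high-probability bound from \citet[Theorem 1.7]{backhoff2020estimating} transferred through the ordering $\mathcal{AW} \geq \cW_c \geq \cW$ to place $\mu$ in $\cB_{\varepsilon_N}\cap\cV$, and then Borel--Cantelli plus triangle-inequality/weak-convergence arguments for the asymptotic claims. The only difference is that the paper compresses the second half into a reference to \citet[Theorem 3.6]{mohajerin2018data}, while you spell out those details (routing every chaining step through the symmetric metric $\cW$ and using near-optimizers), which is exactly what that citation contains.
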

\begin{proof}
	Denote $\mathcal{AW}$ as the adapted Wasserstein distance in \cite{backhoff2020estimating}. Since for any two measures, their $\mathcal{AW} \geq \cW_c \geq \cW$. If $\p[\mathcal{AW} (\hat{\mu}(\cdot| N, \varphi_N), \mu) \geq \varepsilon_N] \leq \delta_N$, then one has $\p[\cW_c (\hat{\mu}(\cdot| N, \varphi_N), \mu) \geq \varepsilon_N] \leq \delta_N$. In other words, with probability at least $1 - \delta_N$, we have $\mu \in \cB_{\varepsilon_N}$. Since $\mu \in \cV$ by assumption, we have $J^* \leq \hat{J}_N(\varepsilon_N)$. With constants $c, C, \varepsilon$ in \citet[Theorem 1.7]{backhoff2020estimating}, we can set
	\begin{equation}
		\delta_N = 2T \exp(-cN\varepsilon^2), \quad C \text{rate}(N) + \varepsilon = \varepsilon_N. 
	\end{equation}
	Canceling $\varepsilon$, we obtain the representation of $\varepsilon_N$. The remaining claim follows similarly as in \citet[Theorem 3.6]{mohajerin2018data}.
\end{proof}

\subsection{Proofs of Section \ref{sec:RegNN}}
	\begin{proof}[Proof of Lemma \ref{lem:Rad_Gk}]
	We first calculate the empirical Rademacher complexity $\cR_S (\cG_k \circ \varphi_N; N)$. By definition, given a sample $S = (x^1_{1:T}, ... , x^N_{1:T} )$ and the function class $\cA_{k, T}$ in \eqref{eq:Ak},
	\begin{align*}
		\cR_S (\cG_k \circ \varphi_N; N) & = \frac{1}{N} \E_{\sigma} \left[ \sup_{g \in \cA_{k, T}} \sum^N_{i=1} \sigma_i \left( \sum^{n_k}_{j=1} w_j \phi(u_j \cdot \varphi_N(x^i_{1:T}) + \vartheta_j) \right) \right] \\
		& = \frac{1}{N} \E_{\sigma} \left[ \sup_{\substack{\sum^{n_k}_{j=1} |w_j| \leq C_{w, k}, \, \sum^{n_k}_{j=1} |\vartheta_j| \leq C_{\vartheta, k}, \\ \|u_j\|_2 \leq C_{u, k}}}  \sum^{n_k}_{j=1} w_j  \sum^N_{i=1} \sigma_i \phi(u_j \cdot \varphi_N(x^i_{1:T}) + \vartheta_j) \right] \\
		& = \frac{C_{w, k}}{N}  \E_{\sigma} \left[ \sup_{\substack{ \sum^{n_k}_{j=1} |\vartheta_j| \leq C_{\vartheta, k}, \|u_j\|_2 \leq C_{u, k}}} \max_{1 \leq j \leq n_k} \left| \sum^N_{i=1} \sigma_i \phi(u_j \cdot \varphi_N(x^i_{1:T}) + \vartheta_j) \right|  \right] \\
		& = \frac{C_{w, k}}{N}  \E_{\sigma} \left[ \sup_{\substack{ |\vartheta| \leq C_{\vartheta, k}, \|u\|_2 \leq C_{u, k}}} \left| \sum^N_{i=1} \sigma_i \phi(u \cdot \varphi_N(x^i_{1:T}) + \vartheta) \right|  \right] \\
		& \leq \frac{C_{w, k} L_\phi }{N}  \E_{\sigma} \left[ \sup_{\substack{ |\vartheta| \leq C_{\vartheta, k}, \|u\|_2 \leq C_{u, k}}} \left| \sum^N_{i=1} \sigma_i \Big(u \cdot \varphi_N(x^i_{1:T}) + \vartheta \Big) \right|  \right] \\
		& \leq \frac{C_{w, k} L_\phi }{N}  \E_{\sigma} \left[ \sup_{|\vartheta| \leq C_{\vartheta, k}} \Big| \sum^N_{i=1} \sigma_i \vartheta \Big|  \right] + \frac{C_{w, k} L_\phi }{N}  \E_{\sigma} \left[ \sup_{\|u\|_2 \leq C_{u, k}} \left| \sum^N_{i=1} \sigma_i u \cdot \varphi_N(x^i_{1:T}) \right|  \right].
	\end{align*}
	The third equality holds by assigning the full weight $w_j$ to the term with the largest absolute value and matching its sign. The first inequality follows from Talagrand's lemma \cite[Lemma 5.7 and Exercise 3.11 (b)]{mohri2018foundations}.
	
	For the first part, 
	\begin{align*}
		\E_{\sigma} \left[ \sup_{|\vartheta| \leq C_{\vartheta, k}} \Big| \sum^N_{i=1} \sigma_i \vartheta \Big| \right] & \leq C_{\vartheta, k} \E_{\sigma} \left[ \Big| \sum^N_{i=1} \sigma_i \Big| \right]  \leq C_{\vartheta, k} \E_{\sigma} \left[ \Big( \sum^N_{i=1} \sigma_i \Big)^2 \right]^{1/2} \\
		& = C_{\vartheta, k} \E_{\sigma} \Big[ \sum^N_{i=1} \sigma^2_i \Big]^{1/2} = C_{\vartheta, k} \sqrt{N},
	\end{align*}
	where the independence of $\sigma_i$ is used.
	
	For the second part,
	\begin{align*}
		& \E_{\sigma} \left[ \sup_{\|u\|_2 \leq C_{u, k}} \left| \sum^N_{i=1} \sigma_i u \cdot \varphi_N(x^i_{1:T}) \right|  \right] \\
		& \quad = C_{u, k} \E_{\sigma} \left[ \left\| \sum^N_{i=1} \sigma_i \varphi_N(x^i_{1:T}) \right\|_2  \right] \leq C_{u, k} \E_{\sigma} \left[ \left\| \sum^N_{i=1} \sigma_i \varphi_N(x^i_{1:T}) \right\|^2_2  \right]^{1/2} \\
		& \quad = C_{u, k} \sqrt{\sum^{N}_{i, j = 1} \E[\sigma_i \sigma_j]   \varphi_N(x^i_{1:T}) \cdot \varphi_N(x^j_{1:T}) } \\
		& \quad = C_{u, k} \sqrt{\sum^{N}_{i = 1} \E[\sigma^2_i]   \|\varphi_N(x^i_{1:T})\|^2_2} = C_{u, k} \sqrt{\sum^{N}_{i = 1}   \|\varphi_N(x^i_{1:T})\|^2_2} \leq C_{u, k} C_{b, a, T, d}  \sqrt{N}.
	\end{align*}
	The first equality is from the equality case in the Cauchy-Schwarz inequality. The last inequality is due to the boundedness of $\varphi_N(x^i_{1:T})$.
	
	In summary, 
	\begin{align}
		\cR_S (\cG_k \circ \varphi_N; N) \leq \frac{C_{w, k} L_\phi C_{\vartheta, k} }{\sqrt{N}} + \frac{C_{w, k} L_\phi C_{u, k} C_{b, a, T, d} }{\sqrt{N}}.
	\end{align}
	Since the right-hand side does not depend on the sample $S$, the inequality also holds for the Rademacher complexity $\cR (\cG_k \circ \varphi_N; N)$.
\end{proof}

\begin{proof}[Proof of Lemma \ref{lem:Rad_F}]
	We still compute the empirical Rademacher complexity first. By definition,
	\begin{align*}
		& \cR_S( F( \Theta_k \circ \varphi_N); N) \\
		& = \frac{1}{N} \E_{\sigma} \left[ \sup_{ \substack{g_{l, t} \in \cA_{k, T}, \; h_{l, t} \in \cA_{k, t}, \lambda \in [0, \lambda_k]} } \sum^N_{i=1} \sigma_i F(\varphi_N(x^i) ; \lambda, \gamma) \right] \\
		& = \frac{1}{N} \E_{\sigma} \Big[ \sup_{ \substack{g_{l, t} \in \cA_{k, T}, \; h_{l, t} \in \cA_{k, t}, \lambda \in [0, \lambda_k]} } \sum^N_{i=1} \sigma_i \sup_{y \in \cY} \Big\{ f(y) - \lambda c(\varphi_N(x^i), y)  + \gamma(\varphi_N(x^i), y) \Big\} \Big], 
	\end{align*} 	
	where
	\begin{align*}
		\gamma(\varphi_N(x^i), y) = \sum^{L_k}_{l=1} \sum^{T-1}_{t=1} h_{l,t}(y_{1:t}) \Big[ g_{l,t}(\varphi_N(x^i_{1:T})) - \int g_{l,t}(\varphi_N(x^i_{1:t}), \varphi_N(x_{t+1:T})) \mu (d x_{t+1:T} | \varphi^{-1}_N (x^i_{1:t})) \Big].
	\end{align*}
	
	{\bf Step 1}: We claim that there exists a sample $(y^1_{1:T}, \ldots, y^N_{1:T})$, such that
	\begin{align}
		& \cR_S( F( \Theta_k \circ \varphi_N); N) \nonumber \\
		& \leq \frac{C_c}{N} \E_{\sigma} \Big[ \sup_{\lambda \in [0, \lambda_k]} \sum^N_{i=1} \sigma_i \lambda \Big] \label{ineq:Step1}\\
		& \quad + \sum^{L_k}_{l=1} \sum^{T-1}_{t=1} \frac{1}{N} \E_{\sigma} \Big[ \sup_{g_{l, t} \in \cA_{k, T}, \; h_{l, t} \in \cA_{k, t}} \sum^N_{i=1} \sigma_i h_{l,t}(y^i_{1:t}) g_{l,t}(\varphi_N(x^i_{1:T})) \Big] \nonumber \\
		& \quad + \sum^{L_k}_{l=1} \sum^{T-1}_{t=1} \frac{1}{N} \E_{\sigma} \Big[ \sup_{\substack{g_{l, t} \in \cA_{k, T}, \\ h_{l, t} \in \cA_{k, t}}} \sum^N_{i=1} \sigma_i h_{l,t}(y^i_{1:t}) \int g_{l,t}(\varphi_N(x^i_{1:t}), \varphi_N(x_{t+1:T})) \mu (d x_{t+1:T} | \varphi^{-1}_N (x^i_{1:t})) \Big], \nonumber
	\end{align}
	where $C_c$ is a universal bound for $c(x, y)$, thanks to the continuity and the compactness of the domain.

	By definition,
	\begin{align*}
		& \frac{1}{N} \E_{\sigma} \Big[ \sup_{ \substack{g_{l, t} \in \cA_{k, T}, \; h_{l, t} \in \cA_{k, t}, \lambda \in [0, \lambda_k]} } \sum^N_{i=1} \sigma_i \sup_{y \in \cY} \Big( f(y) - \lambda c(\varphi_N(x^i), y)  + \gamma(\varphi_N(x^i), y) \Big) \Big] \\
		& = \frac{1}{N} \E_{\sigma_{1:N-1}} \Big[ \E_{\sigma_N} \Big[ \sup_{ \substack{g_{l, t} \in \cA_{k, T}, \; h_{l, t} \in \cA_{k, t}, \\ \lambda \in [0, \lambda_k]} } \Big\{ U_{N-1}(g, h, \lambda) \\
		& \hspace{6.5cm} + \sigma_N \sup_{y \in \cY} \Big( f(y) - \lambda c(\varphi_N(x^N), y)  + \gamma(\varphi_N(x^N), y) \Big) \Big\} \Big] \Big],
	\end{align*}
	where 
	\begin{align*}
		U_{N-1}(g, h, \lambda) := \sum^{N-1}_{i=1} \sigma_i \sup_{y \in \cY} \Big( f(y) - \lambda c(\varphi_N(x^i), y)  + \gamma(\varphi_N(x^i), y) \Big). 
	\end{align*}
	
	By definition of the supremum, for any $\varepsilon > 0$, there exist $(g^1, h^1, \lambda^1)$ and $(g^2, h^2, \lambda^2)$ such that
	\begin{align*}
		U_{N-1}(g^1, h^1, \lambda^1) + F(\varphi_N(x^N); \lambda^1, \gamma^1) & \geq \sup_{ \substack{g_{l, t} \in \cA_{k, T}, \\ h_{l, t} \in \cA_{k, t}, \\ \lambda \in [0, \lambda_k]} } \Big\{ U_{N-1}(g, h, \lambda) + F(\varphi_N(x^N); \lambda, \gamma) \Big\} -  \varepsilon, \\
		U_{N-1}(g^2, h^2, \lambda^2) - F(\varphi_N(x^N); \lambda^2, \gamma^2) & \geq \sup_{ \substack{g_{l, t} \in \cA_{k, T}, \\ h_{l, t} \in \cA_{k, t}, \\ \lambda \in [0, \lambda_k]} } \Big\{ U_{N-1}(g, h, \lambda) - F(\varphi_N(x^N); \lambda, \gamma) \Big\} -  \varepsilon,
	\end{align*}
	where $\gamma^1$ and $\gamma^2$ are defined by $(g^1, h^1)$ and $(g^2, h^2)$, respectively.
	
	Hence, for any $\varepsilon > 0$, we have
	\begin{align*}
		& \E_{\sigma_N} \left[ \sup_{ \substack{g_{l, t} \in \cA_{k, T}, h_{l, t} \in \cA_{k, t}, \lambda \in [0, \lambda_k]} } \Big\{ U_{N-1}(g, h, \lambda) + F(\varphi_N(x^N); \lambda, \gamma) \Big\} \right] - \varepsilon \\
		& =   \frac{1}{2}  \sup_{ \substack{g_{l, t} \in \cA_{k, T}, h_{l, t} \in \cA_{k, t}, \lambda \in [0, \lambda_k]} } \Big\{ U_{N-1}(g, h, \lambda) + F(\varphi_N(x^N); \lambda, \gamma) \Big\} - \frac{\varepsilon}{2} \\
		& \quad +  \frac{1}{2} \sup_{ \substack{g_{l, t} \in \cA_{k, T}, h_{l, t} \in \cA_{k, t}, \lambda \in [0, \lambda_k]} } \Big\{ U_{N-1}(g, h, \lambda) - F(\varphi_N(x^N); \lambda, \gamma) \Big\} - \frac{\varepsilon}{2} \\
		&\leq \frac{1}{2} \left( U_{N-1}(g^1, h^1, \lambda^1) + F(\varphi_N(x^N); \lambda^1, \gamma^1) \right) + \frac{1}{2} \left( U_{N-1}(g^2, h^2, \lambda^2) - F(\varphi_N(x^N); \lambda^2, \gamma^2) \right). 
	\end{align*}
	Since the functions in $F$ are continuous in $y$ and the domain $\cY$ is compact, the supremum over $y$ is attained at some $y^\varepsilon$. We emphasize that $y^\varepsilon$ relies on $\varepsilon$, since $(g^1, h^1, \lambda^1)$ is different when $\varepsilon$ changes. Then
	\begin{align*}
		& F(\varphi_N(x^N); \lambda^1, \gamma^1) - F(\varphi_N(x^N); \lambda^2, \gamma^2) \\
		& = \sup_{y \in \cY} \Big( f(y) - \lambda^1 c(\varphi_N(x^N), y)  + \gamma^1(\varphi_N(x^N), y) \Big) - \sup_{y \in \cY} \Big( f(y) - \lambda^2 c(\varphi_N(x^N), y)  + \gamma^2(\varphi_N(x^N), y) \Big) \\
		& \leq \Big( f(y^{\varepsilon}) - \lambda^1 c(\varphi_N(x^N), y^\varepsilon)  + \gamma^1(\varphi_N(x^N), y^\varepsilon) \Big) - \Big( f(y^\varepsilon) - \lambda^2 c(\varphi_N(x^N), y^\varepsilon)  + \gamma^2(\varphi_N(x^N), y^\varepsilon) \Big) \\
		& \leq C_c s_{\lambda, \varepsilon} (\lambda^1 - \lambda^2) + s_{\gamma, \varepsilon} \big[ \gamma^1(\varphi_N(x^N), y^\varepsilon) - \gamma^2(\varphi_N(x^N), y^\varepsilon) \big],
	\end{align*}
	where $|c(\varphi_N(x^N), y^\varepsilon)| \leq C_c$, $s_{\lambda, \varepsilon} = \text{sign} (\lambda^1 - \lambda^2)$, and $s_{\gamma, \varepsilon} = \text{sign}[\gamma^1(\varphi_N(x^N), y^\varepsilon) - \gamma^2(\varphi_N(x^N), y^\varepsilon)]$. The signs $s_{\lambda, \varepsilon}$ and $s_{\gamma, \varepsilon}$ may vary when $\varepsilon$ changes.
	
	The previous two inequalities imply
	\begin{align*}
		& \E_{\sigma_N} \left[ \sup_{ \substack{g_{l, t} \in \cA_{k, T}, h_{l, t} \in \cA_{k, t}, \lambda \in [0, \lambda_k]} } \Big\{ U_{N-1}(g, h, \lambda) + F(\varphi_N(x^N); \lambda, \gamma) \Big\} \right] - \varepsilon \\
		& \leq \frac{1}{2} U_{N-1}(g^1, h^1, \lambda^1) + \frac{1}{2} U_{N-1}(g^2, h^2, \lambda^2) \\
		& \quad + \frac{1}{2} C_c s_{\lambda, \varepsilon} (\lambda^1 - \lambda^2) + \frac{1}{2} s_{\gamma, \varepsilon} \big[ \gamma^1(\varphi_N(x^N), y^\varepsilon) - \gamma^2(\varphi_N(x^N), y^\varepsilon) \big] \\
		& = \frac{1}{2} \left( U_{N-1}(g^1, h^1, \lambda^1) + C_c s_{\lambda, \varepsilon} \lambda^1 +  s_{\gamma, \varepsilon} \gamma^1(\varphi_N(x^N), y^\varepsilon) \right)  \\
		& \quad + \frac{1}{2} \left( U_{N-1}(g^2, h^2, \lambda^2) -  C_c s_{\lambda, \varepsilon}  \lambda^2   -  s_{\gamma, \varepsilon}  \gamma^2(\varphi_N(x^N), y^\varepsilon) \right) \\
		& \leq \frac{1}{2}  \sup_{ \substack{g_{l, t} \in \cA_{k, T}, h_{l, t} \in \cA_{k, t}, \lambda \in [0, \lambda_k]} } \left( U_{N-1}(g, h, \lambda) + C_c s_{\lambda, \varepsilon} \lambda +  s_{\gamma, \varepsilon} \gamma(\varphi_N(x^N), y^\varepsilon) \right)  \\
		& \quad + \frac{1}{2}  \sup_{ \substack{g_{l, t} \in \cA_{k, T}, h_{l, t} \in \cA_{k, t}, \lambda \in [0, \lambda_k]} } \left( U_{N-1}(g, h, \lambda) -  C_c s_{\lambda, \varepsilon}  \lambda   -  s_{\gamma, \varepsilon}  \gamma(\varphi_N(x^N), y^\varepsilon) \right).
	\end{align*}
	We can choose a subsequence, still indexed by $\varepsilon$, such that $s_{\lambda, \varepsilon}$ and $s_{\gamma, \varepsilon}$ do not change when $\varepsilon \rightarrow 0$. Denote them as $s_{\lambda, \varepsilon} = s_\lambda$ and $s_{\gamma, \varepsilon} = s_\gamma$. Dropping to a subsequence of the subsequence, we can assume $y^\varepsilon \rightarrow y^N \in \cY$ when $\varepsilon \rightarrow 0$, thanks to the compactness of $\cY$. Since the right-hand side is a continuous function of $y$, we obtain the following result when $\varepsilon \rightarrow 0$:
	\begin{align*}
		& \E_{\sigma_N} \left[ \sup_{ \substack{g_{l, t} \in \cA_{k, T}, h_{l, t} \in \cA_{k, t}, \lambda \in [0, \lambda_k]} } \Big\{ U_{N-1}(g, h, \lambda) + F(\varphi_N(x^N); \lambda, \gamma) \Big\} \right] \\
		& \leq \frac{1}{2}  \sup_{ \substack{g_{l, t} \in \cA_{k, T}, h_{l, t} \in \cA_{k, t}, \lambda \in [0, \lambda_k]} } \left( U_{N-1}(g, h, \lambda) + C_c s_{\lambda} \lambda +  s_{\gamma} \gamma(\varphi_N(x^N), y^N) \right)  \\
		& \quad + \frac{1}{2}  \sup_{ \substack{g_{l, t} \in \cA_{k, T}, h_{l, t} \in \cA_{k, t}, \lambda \in [0, \lambda_k]} } \left( U_{N-1}(g, h, \lambda) -  C_c s_{\lambda}  \lambda   -  s_{\gamma}  \gamma(\varphi_N(x^N), y^N) \right) \\
		& = \E_{\zeta_N, \xi_N} \left[  \sup_{ \substack{g_{l, t} \in \cA_{k, T}, h_{l, t} \in \cA_{k, t}, \lambda \in [0, \lambda_k]} } \left( U_{N-1}(g, h, \lambda) +  C_c \zeta_N  \lambda  +  \xi_N  \gamma(\varphi_N(x^N), y^N) \right) \right].
	\end{align*}
	In the last equality, $\zeta_N$ and $\xi_N$ are two uniform random variables taking values in $\{ -1, + 1\}$, but not independent of each other.
	
	By repeating the same procedure for $\sigma_{1:N-1}$, we have
	\begin{align*}
		& \cR_S( F( \Theta_k \circ \varphi_N); N) \\
		& \leq \frac{1}{N} \E_{\zeta_{1:N}, \xi_{1:N}} \left[  \sup_{ \substack{g_{l, t} \in \cA_{k, T}, h_{l, t} \in \cA_{k, t}, \\ \lambda \in [0, \lambda_k]} } \left(  C_c \sum^N_{i=1} \zeta_i  \lambda  +  \sum^N_{i=1} \xi_i  \gamma(\varphi_N(x^i), y^i) \right) \right] \\
		& \leq \frac{C_c}{N} \E_{\zeta_{1:N}, \xi_{1:N}} \left[  \sup_{ \substack{\lambda \in [0, \lambda_k]} } \sum^N_{i=1} \zeta_i  \lambda \right] + \frac{1}{N} \E_{\zeta_{1:N}, \xi_{1:N}} \left[  \sup_{ \substack{g_{l, t} \in \cA_{k, T}, \\ h_{l, t} \in \cA_{k, t}} } \left( \sum^N_{i=1} \xi_i  \gamma(\varphi_N(x^i), y^i) \right) \right] \\
		& = \frac{C_c}{N} \E_{\zeta_{1:N}} \left[  \sup_{ \substack{\lambda \in [0, \lambda_k]} } \sum^N_{i=1} \zeta_i  \lambda \right] + \frac{1}{N} \E_{\xi_{1:N}} \left[  \sup_{ \substack{g_{l, t} \in \cA_{k, T}, \\ h_{l, t} \in \cA_{k, t}} } \sum^N_{i=1} \xi_i  \gamma(\varphi_N(x^i), y^i)  \right].
	\end{align*}
	$\zeta_{1:N}$ are independent of each other. Similarly, $\xi_{1:N}$ are also independent of each other. Therefore, the last two terms are empirical Rademacher complexities of the corresponding function class.
	
	By the definition of $\gamma$ and the linear property of empirical Rademacher complexity \cite[Exercise 3.8]{mohri2018foundations}, we obtain
	\begin{align*}
		& \frac{1}{N} \E_{\xi_{1:N}} \left[  \sup_{ \substack{g_{l, t} \in \cA_{k, T}, \\ h_{l, t} \in \cA_{k, t}} } \sum^N_{i=1} \xi_i  \gamma(\varphi_N(x^i), y^i)  \right] \\
		& =  \sum^{L_k}_{l=1} \sum^{T-1}_{t=1} \frac{1}{N} \E_{\xi_{1:N}} \Big[  \sup_{ \substack{g_{l, t} \in \cA_{k, T}, \\ h_{l, t} \in \cA_{k, t}} } \sum^N_{i=1} \xi_i  h_{l,t}(y^i_{1:t}) \Big[ g_{l,t}(\varphi_N(x^i_{1:T})) \\
		& \hspace{7cm} - \int g_{l,t}(\varphi_N(x^i_{1:t}), \varphi_N(x_{t+1:T})) \mu (d x_{t+1:T} | \varphi^{-1}_N (x^i_{1:t})) \Big] \Big] \\
		& \leq \sum^{L_k}_{l=1} \sum^{T-1}_{t=1} \frac{1}{N} \E_{\xi_{1:N}} \Big[  \sup_{ \substack{g_{l, t} \in \cA_{k, T}, h_{l, t} \in \cA_{k, t}} } \sum^N_{i=1} \xi_i  h_{l,t}(y^i_{1:t}) g_{l,t}(\varphi_N(x^i_{1:T})) \Big] \\
		& \quad + \sum^{L_k}_{l=1} \sum^{T-1}_{t=1} \frac{1}{N} \E_{\xi_{1:N}} \Big[  \sup_{ \substack{g_{l, t} \in \cA_{k, T}, h_{l, t} \in \cA_{k, t}} } \sum^N_{i=1} \xi_i h_{l,t}(y^i_{1:t}) \int g_{l,t}(\varphi_N(x^i_{1:t}), \varphi_N(x_{t+1:T})) \mu (d x_{t+1:T} | \varphi^{-1}_N (x^i_{1:t})) \Big]. 
	\end{align*}
	The last inequality used the fact that $- g_{l, t} \in \cA_{k, T}$ whenever $g_{l, t} \in \cA_{k, T}$. We have proved \eqref{ineq:Step1}.

	{\bf Step 2}: Next, we want to show
	\begin{align}
		& \frac{1}{N} \E_{\sigma} \Big[ \sup_{g_{l, t} \in \cA_{k, T}, \; h_{l, t} \in \cA_{k, t}} \sum^N_{i=1} \sigma_i h_{l,t}(y^i_{1:t}) g_{l,t}(\varphi_N(x^i_{1:T})) \Big] \nonumber \\
		& \leq \frac{C_g}{N} \E_{\sigma} \Big[ \sup_{h_{l, t} \in \cA_{k, t}} \sum^N_{i=1} \sigma_i h_{l,t}(y^i_{1:t}) \Big] + \frac{C_h}{N} \E_{\sigma} \Big[ \sup_{g_{l, t} \in \cA_{k, T}} \sum^N_{i=1} \sigma_i g_{l,t}(\varphi_N(x^i_{1:T})) \Big] \label{ineq:h_g}
	\end{align}
	and
	\begin{align}
		& \frac{1}{N} \E_{\sigma} \Big[ \sup_{g_{l, t} \in \cA_{k, T}, \; h_{l, t} \in \cA_{k, t}} \sum^N_{i=1} \sigma_i h_{l,t}(y^i_{1:t}) \int g_{l,t}(\varphi_N(x^i_{1:t}), \varphi_N(x_{t+1:T})) \mu (d x_{t+1:T} | \varphi^{-1}_N (x^i_{1:t})) \Big] \nonumber \\
		& \leq \frac{C_g}{N} \E_{\sigma} \Big[ \sup_{h_{l, t} \in \cA_{k, t}} \sum^N_{i=1} \sigma_i h_{l,t}(y^i_{1:t}) \Big] \label{ineq:h_integ}\\
		& \quad + \frac{C_h}{N} \E_{\sigma} \Big[ \sup_{g_{l, t} \in \cA_{k, T}} \sum^N_{i=1} \sigma_i \int g_{l,t}(\varphi_N(x^i_{1:t}), \varphi_N(x_{t+1:T})) \mu (d x_{t+1:T} | \varphi^{-1}_N (x^i_{1:t})) \Big], \nonumber
	\end{align}
	where $C_g$ and $C_h$ are universal constants to bound $g_{l, t}$ and $h_{l, t}$, respectively.
	
	We only need to prove \eqref{ineq:h_g} since \eqref{ineq:h_integ} follows in the same way. The idea is similar to the Step 1. We give the detail for the completeness of the proof. By definition,
	\begin{align*}
		& \frac{1}{N} \E_{\sigma} \Big[ \sup_{g_{l, t} \in \cA_{k, T}, \; h_{l, t} \in \cA_{k, t}} \sum^N_{i=1} \sigma_i h_{l,t}(y^i_{1:t}) g_{l,t}(\varphi_N(x^i_{1:T})) \Big] \\
		& =  \frac{1}{N} \E_{\sigma_{1:N-1}} \Big[  \E_{\sigma_N} \Big[ \sup_{g_{l, t} \in \cA_{k, T}, \; h_{l, t} \in \cA_{k, t}} \Big( V_{N-1}(g_{l, t}, h_{l, t}) + \sigma_N h_{l,t}(y^N_{1:t}) g_{l,t}(\varphi_N(x^N_{1:T})) \Big) \Big] \Big],
	\end{align*}
	where
	\begin{align*}
		V_{N-1}(g_{l, t}, h_{l, t}) = \sum^{N-1}_{i=1} \sigma_i h_{l,t}(y^i_{1:t}) g_{l,t}(\varphi_N(x^i_{1:T})).
	\end{align*}
	
	For any $\varepsilon > 0$, there exist $(g^1_{l, t}, h^1_{l, t})$ and $(g^2_{l, t}, h^2_{l, t})$ such that
	\begin{align*}
		& V_{N-1}(g^1_{l, t}, h^1_{l, t}) + h^1_{l,t}(y^N_{1:t}) g^1_{l,t}(\varphi_N(x^N_{1:T})) \\
		& \quad \geq \sup_{g_{l, t} \in \cA_{k, T}, \; h_{l, t} \in \cA_{k, t}} \Big( V_{N-1}(g_{l, t}, h_{l, t}) + h_{l,t}(y^N_{1:t}) g_{l,t}(\varphi_N(x^N_{1:T})) \Big) - \varepsilon, \\
		& V_{N-1}(g^2_{l, t}, h^2_{l, t}) -  h^2_{l,t}(y^N_{1:t}) g^2_{l,t}(\varphi_N(x^N_{1:T})) \\
		& \quad \geq \sup_{g_{l, t} \in \cA_{k, T}, \; h_{l, t} \in \cA_{k, t}} \Big( V_{N-1}(g_{l, t}, h_{l, t}) - h_{l,t}(y^N_{1:t}) g_{l,t}(\varphi_N(x^N_{1:T})) \Big) - \varepsilon.
	\end{align*}
	It yields
	\begin{align*}
		& \E_{\sigma_N} \left[ \sup_{g_{l, t} \in \cA_{k, T}, \; h_{l, t} \in \cA_{k, t}} \Big( V_{N-1}(g_{l, t}, h_{l, t}) + \sigma_N h_{l,t}(y^N_{1:t}) g_{l,t}(\varphi_N(x^N_{1:T})) \Big) \right] - \varepsilon \\
		&\leq \frac{1}{2} \left( V_{N-1}(g^1_{l, t}, h^1_{l, t}) + h^1_{l,t}(y^N_{1:t}) g^1_{l,t}(\varphi_N(x^N_{1:T})) \right) + \frac{1}{2} \left( V_{N-1}(g^2_{l, t}, h^2_{l, t}) - h^2_{l,t}(y^N_{1:t}) g^2_{l,t}(\varphi_N(x^N_{1:T})) \right). 
	\end{align*}
	Moreover,
	\begin{align*}
		& h^1_{l,t}(y^N_{1:t}) g^1_{l,t}(\varphi_N(x^N_{1:T})) - h^2_{l,t}(y^N_{1:t}) g^2_{l,t}(\varphi_N(x^N_{1:T})) \\
		& \leq C_h s_{g, \varepsilon} \Big[ g^1_{l,t}(\varphi_N(x^N_{1:T})) - g^2_{l,t}(\varphi_N(x^N_{1:T})) \Big] + C_g s_{h, \varepsilon} \Big[ h^1_{l,t}(y^N_{1:t}) - h^2_{l,t}(y^N_{1:t}) \Big],
	\end{align*}
	where $s_{g, \varepsilon} = \text{sign} [ g^1_{l,t}(\varphi_N(x^N_{1:T})) - g^2_{l,t}(\varphi_N(x^N_{1:T})) ]$ and $s_{h, \varepsilon} = \text{sign}[h^1_{l,t}(y^N_{1:t}) - h^2_{l,t}(y^N_{1:t})]$, respectively.

	The previous two inequalities imply
	\begin{align*}
		& \E_{\sigma_N} \left[ \sup_{g_{l, t} \in \cA_{k, T}, \; h_{l, t} \in \cA_{k, t}} \Big( V_{N-1}(g_{l, t}, h_{l, t}) + \sigma_N h_{l,t}(y^N_{1:t}) g_{l,t}(\varphi_N(x^N_{1:T})) \Big) \right] - \varepsilon \\
		&\leq \frac{1}{2} \left( V_{N-1}(g^1_{l, t}, h^1_{l, t}) + C_h s_{g, \varepsilon} g^1_{l,t}(\varphi_N(x^N_{1:T})) + C_g s_{h, \varepsilon} h^1_{l,t}(y^N_{1:t}) \right) \\
		& \quad + \frac{1}{2} \left( V_{N-1}(g^2_{l, t}, h^2_{l, t}) - C_h s_{g, \varepsilon} g^2_{l,t}(\varphi_N(x^N_{1:T})) - C_g s_{h, \varepsilon} h^2_{l,t}(y^N_{1:t}) \right) \\
		&\leq \frac{1}{2} \sup_{g_{l, t} \in \cA_{k, T}, \; h_{l, t} \in \cA_{k, t}} \left( V_{N-1}(g_{l, t}, h_{l, t}) + C_h s_{g, \varepsilon} g_{l,t}(\varphi_N(x^N_{1:T})) + C_g s_{h, \varepsilon} h_{l,t}(y^N_{1:t}) \right) \\
		& \quad + \frac{1}{2} \sup_{g_{l, t} \in \cA_{k, T}, \; h_{l, t} \in \cA_{k, t}} \left( V_{N-1}(g_{l, t}, h_{l, t}) - C_h s_{g, \varepsilon} g_{l,t}(\varphi_N(x^N_{1:T})) - C_g s_{h, \varepsilon} h_{l,t}(y^N_{1:t}) \right).
	\end{align*}
	Letting $\varepsilon \rightarrow 0$ (along a subsequence) and repeating the same procedure for $\sigma_{1:N-1}$, we obtain  
	\begin{align*}
		& \frac{1}{N} \E_{\sigma} \Big[ \sup_{g_{l, t} \in \cA_{k, T}, \; h_{l, t} \in \cA_{k, t}} \sum^N_{i=1} \sigma_i h_{l,t}(y^i_{1:t}) g_{l,t}(\varphi_N(x^i_{1:T})) \Big] \\
		& \leq \frac{1}{N} \E_{\zeta_{1:N}, \xi_{1:N}} \Big[  \sup_{g_{l, t} \in \cA_{k, T}, \; h_{l, t} \in \cA_{k, t}} \sum^N_{i=1} \left( C_h \zeta_i g_{l,t}(\varphi_N(x^i_{1:T})) + C_g \xi_i h_{l,t}(y^i_{1:t}) \right) \Big] \\
		& \leq \frac{C_h}{N} \E_{\zeta_{1:N}} \Big[ \sup_{g_{l, t} \in \cA_{k, T}} \sum^N_{i=1} \zeta_i g_{l,t}(\varphi_N(x^i_{1:T})) \Big]  + \frac{C_g}{N} \E_{\xi_{1:N}} \Big[ \sup_{h_{l, t} \in \cA_{k, t}} \sum^N_{i=1} \xi_i h_{l,t}(y^i_{1:t}) \Big],
	\end{align*}
	as desired.

	{\bf Step 3}: With the same proof as in Lemma \ref{lem:Rad_Gk}, we can show
	\begin{align}
		\frac{C_c}{N} \E_{\sigma} \Big[ \sup_{\lambda \in [0, \lambda_k]} \sum^N_{i=1} \sigma_i \lambda \Big] & \leq \frac{C_c \lambda_k}{\sqrt{N}}, \label{ineq:Rad1} \\
		\frac{C_h}{N} \E_{\zeta_{1:N}} \Big[ \sup_{g_{l, t} \in \cA_{k, T}} \sum^N_{i=1} \zeta_i g_{l,t}(\varphi_N(x^i_{1:T})) \Big]  & \leq \frac{C_h C_{w, k} L_\phi (C_{\vartheta, k} + C_{u, k} C_{b, a, T, d}) }{\sqrt{N}}, \label{ineq:Rad2} \\
		\frac{C_g}{N} \E_{\xi_{1:N}} \Big[ \sup_{h_{l, t} \in \cA_{k, t}} \sum^N_{i=1} \xi_i h_{l,t}(y^i_{1:t}) \Big] & \leq \frac{C_g C_{w, k} L_\phi (C_{\vartheta, k} + C_{u, k} C_{b, a, T, d}) }{\sqrt{N}}. \label{ineq:Rad3}
	\end{align}
	
	We claim that the last term in \eqref{ineq:h_integ} satisfies 
	\begin{align}
		& \frac{1}{N} \E_{\sigma} \Big[ \sup_{g_{l, t} \in \cA_{k, T}} \sum^N_{i=1} \sigma_i \int g_{l,t}(\varphi_N(x^i_{1:t}), \varphi_N(x_{t+1:T})) \mu (d x_{t+1:T} | \varphi^{-1}_N (x^i_{1:t})) \Big] \label{ineq:diff} \\
		& \leq \frac{C_{b,a,T,d} (L_\cA  L_\mu + L_\cA)}{N^q} + \frac{1}{N} \E_{\sigma} \Big[ \sup_{g_{l, t} \in \cA_{k, T}} \sum^N_{i=1} \sigma_i \int g_{l,t}(\varphi_N(x^i_{1:t}), \varphi_N(x_{t+1:T})) \mu (d x_{t+1:T} | x^i_{1:t}) \Big]. \nonumber
	\end{align}
	Indeed,
	\begin{align}
		& \Big| \E_{\sigma} \Big[ \sup_{g_{l, t} \in \cA_{k, T}} \sum^N_{i=1} \sigma_i \int g_{l,t}(\varphi_N(x^i_{1:t}), \varphi_N(x_{t+1:T})) \mu (d x_{t+1:T} | \varphi^{-1}_N (x^i_{1:t})) \Big] \nonumber \\
		& \quad - \E_{\sigma} \Big[ \sup_{g_{l, t} \in \cA_{k, T}} \sum^N_{i=1} \sigma_i \int g_{l,t}(\varphi_N(x^i_{1:t}), \varphi_N(x_{t+1:T})) \mu (d x_{t+1:T} | x^i_{1:t}) \Big] \Big| \nonumber \\
		& \leq \Big| \E_{\sigma} \Big[ \sup_{g_{l, t} \in \cA_{k, T}} \sum^N_{i=1} \sigma_i \Big( \int g_{l,t}(\varphi_N(x^i_{1:t}), \varphi_N(x_{t+1:T})) \mu (d x_{t+1:T} | \varphi^{-1}_N (x^i_{1:t})) \nonumber \\
		& \hspace{4cm} - \int g_{l,t}(\varphi_N(x^i_{1:t}), \varphi_N(x_{t+1:T})) \mu (d x_{t+1:T} | x^i_{1:t}) \Big) \Big] \Big| \nonumber \\
		& \leq \Big| \E_\sigma \Big[ \sup_{g_{l, t} \in \cA_{k, T}} \Big( \sum^N_{i=1} \sigma^2_i \Big)^{1/2} \Big\{ \sum^N_{i=1} \Big( \int g_{l,t}(\varphi_N(x^i_{1:t}), \varphi_N(x_{t+1:T})) \mu (d x_{t+1:T} | \varphi^{-1}_N (x^i_{1:t})) \nonumber \\
		& \hspace{5.5cm} - \int g_{l,t}(\varphi_N(x^i_{1:t}), \varphi_N(x_{t+1:T})) \mu (d x_{t+1:T} | x^i_{1:t}) \Big)^2 \Big\}^{1/2} \Big] \Big| \nonumber \\
		& \leq \sqrt{N} \times \Big\{ \sum^N_{i=1}  \sup_{g_{l, t} \in \cA_{k, T}}  \Big(\int g_{l,t}(\varphi_N(x^i_{1:t}), \varphi_N(x_{t+1:T})) \mu (d x_{t+1:T} | \varphi^{-1}_N (x^i_{1:t})) \label{ineq:diff_temp} \\
		& \hspace{4cm} - \int g_{l,t}(\varphi_N(x^i_{1:t}), \varphi_N(x_{t+1:T})) \mu (d x_{t+1:T} | x^i_{1:t}) \Big)^2 \Big\}^{1/2}. \nonumber
	\end{align}
	
	By the property of $g_{l, t}$ and $\varphi_N$, we have
	\begin{align*}
		& \Big|  \int g_{l,t}(\varphi_N(x^i_{1:t}), \varphi_N(x_{t+1:T})) \mu (d x_{t+1:T} | \varphi^{-1}_N (x^i_{1:t}))  \\
		& \quad - \int g_{l,t}(\varphi_N(x^i_{1:t}), \varphi_N(x_{t+1:T})) \mu (d x_{t+1:T} | x^i_{1:t})  \\
		& \quad  - \Big(\int g_{l,t}(\varphi_N(x^i_{1:t}), x_{t+1:T}) \mu (d x_{t+1:T} | \varphi^{-1}_N (x^i_{1:t}))  \\
		& \qquad \quad - \int g_{l,t}(\varphi_N(x^i_{1:t}), x_{t+1:T}) \mu (d x_{t+1:T} | x^i_{1:t})  \Big) \Big| \\
		& \leq \int \Big| g_{l,t}(\varphi_N(x^i_{1:t}), \varphi_N(x_{t+1:T}))  -  g_{l,t}(\varphi_N(x^i_{1:t}), x_{t+1:T}) \Big| \mu (d x_{t+1:T} | \varphi^{-1}_N (x^i_{1:t})) \\
		& \quad + \int \Big| g_{l,t}(\varphi_N(x^i_{1:t}), \varphi_N(x_{t+1:T}))  -  g_{l,t}(\varphi_N(x^i_{1:t}), x_{t+1:T}) \Big| \mu (d x_{t+1:T} |x^i_{1:t}) \\
		& \leq \frac{L_\cA C_{b,a,T,d}}{N^q}.                  
	\end{align*}
	It yields
	\begin{align*}
		& \sup_{g_{l, t} \in \cA_{k, T}}  \Big(\int g_{l,t}(\varphi_N(x^i_{1:t}), \varphi_N(x_{t+1:T})) \mu (d x_{t+1:T} | \varphi^{-1}_N (x^i_{1:t})) \\
		& \qquad \qquad - \int g_{l,t}(\varphi_N(x^i_{1:t}), \varphi_N(x_{t+1:T})) \mu (d x_{t+1:T} | x^i_{1:t}) \Big)^2 \\
		& \leq 2 \Big[ \sup_{g_{l, t} \in \cA_{k, T}} \Big( \int g_{l,t}(\varphi_N(x^i_{1:t}), x_{t+1:T}) \mu (d x_{t+1:T} | \varphi^{-1}_N (x^i_{1:t})) \\
		& \hspace{2.5cm} - \int g_{l,t}(\varphi_N(x^i_{1:t}), x_{t+1:T}) \mu (d x_{t+1:T} | x^i_{1:t}) \Big) \Big]^2 + 2 \Big(\frac{L_\cA C_{b,a,T,d}}{N^q} \Big)^2 \\
		& \leq 2 \Big[ L_\cA \cW_1 \Big(\mu (\cdot | \varphi^{-1}_N (x^i_{1:t})), \mu (\cdot | x^i_{1:t}) \Big) \Big]^2 + 2 \Big(\frac{L_\cA C_{b,a,T,d}}{N^q} \Big)^2 \\
		& \leq 2 \Big(\frac{L_\cA C_{b,a,T,d} L_\mu }{N^q} \Big)^2 + 2 \Big(\frac{L_\cA C_{b,a,T,d}}{N^q} \Big)^2.
	\end{align*}
	The second inequality is from the strong duality of the {\it classic} Wasserstein-1 distance and the fact that $g_{l, t}$ is $L_\cA$-Lipschitz in $x_{t+1:T}$. The last inequality is due to Assumption \ref{assum:Lip_kernel} and \citet[Equation (3.5)]{backhoff2020estimating}. 
	
	Therefore, \eqref{ineq:diff_temp} becomes
	\begin{align*}
		& \Big| \E_{\sigma} \Big[ \sup_{g_{l, t} \in \cA_{k, T}} \sum^N_{i=1} \sigma_i \int g_{l,t}(\varphi_N(x^i_{1:t}), \varphi_N(x_{t+1:T})) \mu (d x_{t+1:T} | \varphi^{-1}_N (x^i_{1:t})) \Big] \\
		& \quad - \E_{\sigma} \Big[ \sup_{g_{l, t} \in \cA_{k, T}} \sum^N_{i=1} \sigma_i \int g_{l,t}(\varphi_N(x^i_{1:t}), \varphi_N(x_{t+1:T})) \mu (d x_{t+1:T} | x^i_{1:t}) \Big] \Big| \\
		& \leq N \frac{C_{b,a,T,d} (L_\cA  L_\mu + L_\cA)}{N^q},
	\end{align*} 
	which leads to \eqref{ineq:diff}. Here, $C_{b,a,T,d}$ is enlarged by a generic constant.

	Furthermore, thanks to Assumption \ref{assum:Lip_kernel}, we have
	\begin{align}
		& \frac{1}{N} \E_{\sigma} \Big[ \sup_{g_{l, t} \in \cA_{k, T}} \sum^N_{i=1} \sigma_i \int g_{l,t}(\varphi_N(x^i_{1:t}), \varphi_N(x_{t+1:T})) \mu (d x_{t+1:T} | x^i_{1:t}) \Big] \nonumber \\
		& = \frac{1}{N} \E_{\sigma} \Big[ \sup_{g_{l, t} \in \cA_{k, T}} \sum^N_{i=1} \sigma_i \int g_{l,t}(\varphi_N(x^i_{1:t}), \varphi_N(F_{t+1} (x^i_{1:t}, \varepsilon) )) \psi(d\varepsilon) \Big] \nonumber \\
		& \leq \frac{1}{N} \int \E_{\sigma} \Big[ \sup_{g_{l, t} \in \cA_{k, T}} \sum^N_{i=1} \sigma_i  g_{l,t}(\varphi_N(x^i_{1:t}), \varphi_N(F_{t+1} (x^i_{1:t}, \varepsilon) ))  \Big] \psi(d\varepsilon) \nonumber \\
		& \leq \int \frac{C_{w, k} L_\phi (C_{\vartheta, k} + C_{u, k} C_{b, a, T, d}) }{\sqrt{N}} \psi(d\varepsilon) \nonumber \\
		& \leq \frac{C_{w, k} L_\phi (C_{\vartheta, k} + C_{u, k} C_{b, a, T, d}) }{\sqrt{N}}. \label{ineq:Rad_intg}
	\end{align} 
	Here, the second inequality follows similarly as in Lemma \ref{lem:Rad_Gk}.
	
	{\bf Step 4}: Putting \eqref{ineq:Step1}, \eqref{ineq:h_g}, \eqref{ineq:h_integ}, \eqref{ineq:Rad1},  \eqref{ineq:Rad2}, \eqref{ineq:Rad3}, \eqref{ineq:diff}, and \eqref{ineq:Rad_intg} together, we obtain
	\begin{align*}
		& \cR_S( F( \Theta_k \circ \varphi_N); N) \\
		& \leq \frac{C_c \lambda_k}{\sqrt{N}} +  \frac{L_k (T-1)(2C_g + C_h) C_{w, k} L_\phi (C_{\vartheta, k} + C_{u, k} C_{b, a, T, d}) }{\sqrt{N}} \\
		& \quad + L_k(T-1)C_h \Big( \frac{(L_\cA L_\mu + L_\cA) C_{b, a, T, d}}{N^q} + \frac{C_{w, k} L_\phi (C_{\vartheta, k} + C_{u, k} C_{b, a, T, d})}{\sqrt{N}} \Big).
	\end{align*}
	Since the right-hand side does not depend on a specific sample $S$, this inequality also holds for $\cR( F( \Theta_k \circ \varphi_N); N)$.
\end{proof}

\begin{proof}[Proof of Theorem \ref{thm:Reg}]
	By Lemma \ref{lem:Rad_Gk}, $r(x; \delta)$ equals to $1/\sqrt{x}$ multiplied by a positive constant. Hence, Condition 2 in Lemma \ref{lem:Rad} holds and Lemma \ref{lem:Rad} follows. 
	
	By Lemma \ref{lem:Rad_Gk} and Lemma \ref{lem:Rad_F}, $\cR( F( \Theta_k \circ \varphi_N); N)$ and $  \cR( \cG_k \circ \varphi_N; N^{1 - qd(T-1)})$ in Lemma \ref{lem:Rad} converge to zero. Then Lemma \ref{lem:Ninf} holds.
	
	Thanks to the universal approximation theorem \cite[Theorem 1]{cybenko1989}, there exists a sequence of functions $g_k \in \Theta_k$ to approximate any continuous and bounded function $g$ on a given compact domain. It leads to the claim in Theorem \ref{thm:conv}.
\end{proof}

\end{document}